\documentclass[11pt,letterpaper]{article}

\usepackage[margin=1in]{geometry}

\usepackage[T1]{fontenc}
\usepackage[utf8]{inputenc}
\usepackage{lmodern}

\usepackage{amsmath,amssymb,amsfonts,amsthm,mathtools}
\usepackage{bm}
\usepackage{centernot}

\usepackage{algorithm}
\usepackage{algpseudocode}

\usepackage{graphicx}
\usepackage{subcaption}
\usepackage{booktabs}
\usepackage{tikz}

\usepackage{xspace}
\usepackage[super]{nth}
\usepackage{xcolor}
\usepackage{microtype}
\usepackage{url}
\usepackage{array}
\usepackage{cite}
\usepackage{amssymb}
\usepackage{authblk}

\usepackage[hidelinks]{hyperref}
\usepackage[nameinlink,noabbrev]{cleveref}

\usepackage[disable]{todonotes}
\theoremstyle{plain}
\newtheorem{theorem}{Theorem}[section]
\newtheorem{lemma}[theorem]{Lemma}

\newtheorem{corollary}[theorem]{Corollary}

\theoremstyle{definition}
\newtheorem{definition}[theorem]{Definition}

\newtheorem{remark}[theorem]{Remark}

\let\emptyset\varnothing

\newcommand{\gcc}[2]{\Pi_{\text{GC}}}

\newcommand{\safeMEB}{\ensuremath{\mathrm{SafeMEB}}\xspace}
\newcommand{\MEB}{\ensuremath{\mathrm{MEB}}\xspace}
\newcommand{\MEBS}{\ensuremath{\mathrm{MEBs}}\xspace}

\newcommand{\gatherr}{\ensuremath{\mathrm{GTHR}}\xspace}

\title{Fast Multidimensional Approximate Agreement with \\ Optimal Resilience Using Ball Validity}
\author[1]{Tijana Milentijević}
\author[1,2]{Stefan Schmid}

\affil[1]{TU Berlin}
\affil[2]{Weizenbaum Institute}

\date{}
\begin{document}

% Anonymous title page.
% Do not include author names, affiliations, emails, acknowledgements, or funding.

\maketitle

\begin{abstract}
Multidimensional approximate agreement is a fundamental task in distributed computing. It requires $n$ processes with inputs in $\mathbb{R}^d$ to output vectors that are close to each other despite up to $t$ Byzantine faults being present. For the widespread convex validity the output must lie in the convex hull of the correct vectors. Although this guarantee is strong in theory, it becomes prohibitive in high-dimensional scenarios, because the required resilience threshold grows with the dimension.
We therefore study the approximate agreement problem under Minimum Enclosing Ball (\MEB) validity, which has so far been considered for vector consensus in the literature. 
It requires every output to lie within the radius of the minimum enclosing ball of the correct inputs. We also present the $c$-\MEB relaxed validity, where the agreement is within the \MEB with its radius scaled by $c$.

Our first contribution is the Adaptive $\MEB$ Contraction Algorithm, which is coordinate-free and fast: it contracts the correct \MEB radius by $1/\sqrt2\approx 0.707$ per round in the synchronous model with $n>(d+1)t$, while satisfying $\sqrt{2}$-\MEB validity. We also provide an example which shows that this contraction factor cannot be improved. 
%For comparison, the previous fastest algorithm contracts the diameter of the correct values by $\sqrt{7/8}\approx0.935$.

Our main technical contribution is a new dimension-free ball inflation theorem. We prove that if every $\beta$ balls of a finite family of Euclidean balls have a common point, then inflating every radius by $\sqrt{\beta/(\beta-1)}$ makes the whole family intersect. Applying the theorem to the candidate balls defining the local \MEB-safe areas results in a synchronous algorithm with resilience $n>3t$, contraction factor $\sqrt3/2$ and $\sqrt6$-$\MEB$ validity. To the best of our knowledge, this is the first multidimensional approximate agreement algorithm with optimal resilience $n>3t$ that satisfies constant $c$-\MEB validity and contracts by a factor independent of the dimension.

We further extend the approach to the asynchronous setting, using the Gather protocol. We obtain an algorithm that achieves resilience $n>(d+2)t$, has contraction factor $\sqrt{2/3}$ and $\sqrt6$-$\MEB$
validity, and its inflated variant with resilience $n>4t$, contraction factor $\sqrt{15}/4$ and $2\sqrt{10}$-$\MEB$ validity. Finally, we compare our guarantees with existing approximate agreement and contraction algorithms, including Minimum-Diameter Averaging (MDA), for which we derive MEB-validity guarantees. Our algorithms achieve strictly better resilience while providing substantially stronger MEB-validity guarantees than MDA.
\end{abstract}

\section{Introduction}
Agreement is an important task in distributed systems because different components often need to make consistent decisions. In many applications, the processes agree on numerical or geometric values. Examples include sensor measurements~\cite{bandarupalli2024sensorbft}, clock corrections~\cite{lenzen2022optimal}, gradients in distributed learning~\cite{10.5555/3540261.3542179, spaa-geom-median, asia-ccs-coord-median,su2016fault}. In such settings, insisting on exact agreement is, however, often unnecessary, since it is enough that the outputs of the correct processes are sufficiently close to each other. This relaxation is referred to as \emph{approximate agreement}, first introduced in~\cite{10.1145/5925.5931}.

In approximate agreement, each of the $n$ processes starts with an input value, and up to $t$ processes may be Byzantine, meaning that they can behave arbitrarily and omit messages. The goal of approximate agreement is that all correct processes eventually output values that are within distance $\varepsilon$ from each other. At the same time, the output must satisfy a validity condition, i.e. a geometric guarantee which prevents Byzantine processes from pulling the output arbitrarily far away from the values proposed by correct processes.

In this work, we study the multidimensional setting, where the inputs are vectors in $\mathbb{R}^d$. The standard validity condition for multidimensional approximate agreement is \emph{convex validity}: every correct output must lie in the convex hull of the correct input vectors. Convex validity provides a strong theoretical guarantee, but it has a fundamental drawback in higher dimension. Its resilience $n>(d+1)t$ in the synchronous and $n>(d+2)t$ in the asynchronous setting depends on the dimension $d$. Thus, when the dimension is large compared to the number of processes $n$, convex validity severely limits the number of Byzantine faults that can be tolerated.

This motivates the following question:
\begin{center}
\textit{Can we obtain multidimensional approximate agreement with dimension-free resilience while preserving a constant geometric validity guarantee?} 
\end{center}
We answer this question affirmatively using minimum enclosing balls. Instead of requiring outputs to remain in the convex hull of the correct inputs, we require them to remain close to the minimum enclosing ball of the correct inputs. This gives a weaker validity condition, however it allows us to achieve resilience independent of dimension $d$. 

Our goal is not only to achieve a reasonable validity notion, but also to obtain fast convergence. Since approximate agreement is iterative, in every round, each correct process computes a new value, and these new values become the inputs to the next round. Hence, agreement requires a contraction argument showing that the region containing all correct values shrinks over time. We measure this shrinkage by the radius of the minimum enclosing ball of the correct values and design algorithms that shrink this radius directly rather than shrinking one coordinate at a time. This makes the contraction factor independent of the dimension.

\subsection{Our Contributions}
We study multidimensional approximate agreement under Byzantine faults through the lens of minimum enclosing balls. Instead of requiring all correct outputs to remain in the convex hull of the correct inputs, we use the weaker but still geometric notion of $c$-\MEB validity, i.e. all correct outputs must remain within a factor $c$ of the minimum enclosing ball of the initial correct values. This relaxation allows us to go beyond the $n>(d+1)t$ resilience barrier of convex validity while preserving coordinate-free convergence guarantees.

Our first contribution is the Adaptive \MEB Contraction Algorithm, a coordinate-free update rule that contracts the radius of the minimum enclosing ball of the correct values. In the synchronous setting with $n>(d+1)t$, the algorithm achieves contraction factor $1/\sqrt{2}$ for $\alpha=1$ and satisfies $\sqrt{2}$-\MEB validity.
More generally, the parameter $\alpha$ provides a tradeoff between local computation and contraction quality. The case $\alpha=1$ gives the strongest contraction guarantee in this paper. For comparison, MidExtremes, contracts the diameter of correct values by $\sqrt{7/8}\approx 0.935$, whereas our algorithm contracts the radius of the minimum enclosing ball around correct values by $1/\sqrt{2}\approx 0.707$. We also give an example showing that the $1/\sqrt{2}$ factor cannot be improved within our contraction analysis.

Our main contribution is a dimension-free ball inflation theorem. We prove that if every subfamily of at most $\beta$ Euclidean balls has a common point, then inflating every radius by a factor of $\sqrt{\beta/(\beta-1)}$ guarantees that the entire family intersects. This result lets us apply the same contraction idea even when the candidate $\MEB$ balls do not intersect necessarily, that is the case with $n\leq (d+1)t$. In the synchronous setting, this gives an inflated algorithm with optimal resilience $n>3t$, contraction factor $\sqrt{3}/2$ for $\alpha=1$, and $\sqrt{6}$-\MEB validity. To the best of our knowledge this is the first multidimensional approximate agreement algorithm achieving the optimal resilience $n>3t$ with coordinate-free contraction, while still satisfying a meaningful geometric validity condition. The threshold $n>3t$ is optimal for Byzantine agreement.

%and matches  resilience matches matches the resilience achieved in the one-dimensional case by Abraham et al.~\cite{optimal-resilience-asynchronous}.

Our next contribution extends the approach to the asynchronous setting using the Gather protocol. Without inflation, we obtain an asynchronous algorithm for $n>(d+2)t$ with contraction factor $\sqrt{2/3}$ and $\sqrt{6}$-\MEB validity for $\alpha=1$. With inflation, we obtain a dimension-free asynchronous algorithm for $n>4t$ with contraction factor $\sqrt{15}/4$ and $2\sqrt{10}$-\MEB validity.

Finally, we compare our guarantees with prior multidimensional approximate agreement algorithms. The algorithms contract different properties: Mendes--Herlihy algorithm contracts coordinate-wise ranges, while MidExtremes and MDA contract the diameter of correct values. Our algorithms contract the radius of the minimum enclosing ball of correct values. To compare MDA under the same validity notion, we derive the \MEB-validity guarantees. 

Tables~\ref{tab:sync-comparison} and~\ref{tab:async-comparison} summarize the resulting guarantees in the synchronous and asynchronous settings. Here, two resilience bounds emerge. At the convex validity thresholds $n>(d+1)t$ and $n>(d+2)t$ in the synchronous and asynchronous model, respectively, achieve smaller dimension-independent contraction factors, though for the radius contraction, at the price of satisfying $\sqrt{2}$- and $\sqrt{6}$-\MEB validity. Below those thresholds, where convex validity cannot be achieved, ball inflation extends the same algorithms to resilience $n>3t$ and $n>4t$, with weaker but constant contraction and with better validity constants than MDA achieves at worse resilience.
We next give the main technical ideas behind these results.

\begin{table}[t]
\centering
\footnotesize
\renewcommand{\arraystretch}{1.25}
\setlength{\tabcolsep}{4pt}
\begin{tabular}{|c|c|c|c|c|}
\hline
Algorithm & Resilience & Validity & Contraction & Coord. \\
\hline 
MidExtremes~\cite{fugger_et_al:LIPIcs.DISC.2018.27}
&
\shortstack{non-split\\[-1pt]{\scriptsize round model}}
&
\shortstack{convex\\[-1pt]{\scriptsize($\Rightarrow$ 1-\MEB)}}
&
\shortstack{$\displaystyle D_{r+1}\le \sqrt{7/8}D_r$\\[-1pt]{\scriptsize \cite{fugger_et_al:LIPIcs.DISC.2018.27}}}
&
\checkmark
\\
\hline
\shortstack{Adaptive \MEB Contr.\\[-1pt]{\scriptsize Alg.~\ref{alg:adaptive-contraction}, $\alpha=1$}}
&
$n>(d+1)t$
&
\shortstack{$\sqrt{2}$-\MEB\\[-1pt]{\scriptsize Thm.~\ref{thm:synch-no-inflation}, Cor.~\ref{cor:synch-alpha-one}}}
&
\shortstack{$\displaystyle R_{r+1}\le 1/\sqrt{2}R_r$\\[-1pt]{\scriptsize Lem.~\ref{lem:new-contraction}}}
&
\checkmark
\\
\hline
MDA~\cite{10.5555/3540261.3542179}
&
$n>4t$
&
\shortstack{$7$-\MEB\\[-1pt]{\scriptsize Lem.~\ref{lem:mda-validity}}}
&
\shortstack{$\displaystyle D_{r+1}\le 2/3D_r$\\[-1pt]{\scriptsize \cite{10.1007/978-3-032-11127-2_10}}}
&
\checkmark
\\
\hline
\shortstack{Inflated Adaptive \MEB Contr.\\[-1pt]{\scriptsize Inflated Alg.~\ref{alg:adaptive-contraction}, $\alpha=1$}}
&
$n>3t$
&
\shortstack{$\sqrt{6}$-\MEB\\[-1pt]{\scriptsize Thm.~\ref{thm:synch-inflation}, Cor.~\ref{cor:validity-inflated-synch}}}
&
\shortstack{$\displaystyle R_{r+1}\le \sqrt{3}/2R_r$\\[-1pt]{\scriptsize Lem.~\ref{lem:contraction-inflated}}}
&
\checkmark
\\
\hline
\end{tabular}
\caption{Comparison of synchronous and round-based multidimensional approximate-agreement algorithms. Here $R_r$ denotes the radius of the minimum enclosing ball of the correct values in round $r$, while $D_r$ denotes their diameter. For our algorithms, $\alpha=1$. The last column indicates whether the algorithm is coordinate-free. The MidExtremes is stated for the non-split network model, in which any two processes have a common incoming neighbor, rather than for the Byzantine consistent broadcast model we use. Note that convex validity implies 1-\MEB validity.}
\label{tab:sync-comparison}
\end{table}

\begin{table}[t]
\centering
\footnotesize
\renewcommand{\arraystretch}{1.25}
\setlength{\tabcolsep}{3.5pt}
\begin{tabular}{|c|c|c|c|c|}
\hline
Algorithm & Resilience & Validity & Contraction & Coord. \\
\hline
\shortstack{Mendes--Herlihy\\[-1pt]{\scriptsize \cite{mendes2015multidimensional}}}
&
$n>(d+2)t$
&
\shortstack{convex\\[-1pt]{\scriptsize($\Rightarrow$ 1-\MEB)}}
&
\shortstack{$\displaystyle \Delta^{(m)}_{r+1}\le 2^{-1/d}\Delta^{(m)}_r$\\[-1pt]{\scriptsize \cite{mendes2015multidimensional}}}
&
$\times$
\\
\hline
MidExtremes~\cite{fugger_et_al:LIPIcs.DISC.2018.27}
&
$n>(d+2)t$
&
\shortstack{convex\\[-1pt]{\scriptsize($\Rightarrow$ 1-\MEB)}}
&
\shortstack{$\displaystyle D_{r+1}\le \sqrt{7/8}D_r$\\[-1pt]{\scriptsize \cite{fugger_et_al:LIPIcs.DISC.2018.27}}}
&
\checkmark
\\
\hline
\shortstack{Adaptive \MEB Contr.\\[-1pt]{\scriptsize Alg.~\ref{alg:adaptive-contraction-asynch}, $\alpha=1$}}
&
$n>(d+2)t$
&
\shortstack{$\sqrt{6}$-\MEB\\[-1pt]{\scriptsize Thm.~\ref{thm:async-d2t}, Cor.~\ref{cor:asynch}}}
&
\shortstack{$\displaystyle R_{r+1}\le \sqrt{2/3}R_r$\\[-1pt]{\scriptsize Lem.~\ref{lem:asynch-contraction-d+2t}}}
&
\checkmark
\\
\hline
MDA~\cite{10.5555/3540261.3542179}
&
$n>7t$
&
\shortstack{$11$-\MEB\\[-1pt]{\scriptsize Lem.~\ref{lem:mda-validity}}}
&
\shortstack{$\displaystyle D_{r+1}\le 4/5D_r$\\[-1pt]{\scriptsize \cite{10.5555/3540261.3542179}}}
&
\checkmark
\\
\hline
\shortstack{Inflated Adaptive \MEB Contr.\\[-1pt]{\scriptsize Inflated Alg.~\ref{alg:adaptive-contraction-asynch}, $\alpha=1$}}
&
$n>4t$
&
\shortstack{$2\sqrt{10}$-\MEB\\[-1pt]{\scriptsize Thm.~\ref{thm:asynch-inflation}, Cor.~\ref{cor:async-4t-alpha-one}}}
&
\shortstack{$\displaystyle R_{r+1}\le \sqrt{15}/4R_r$\\[-1pt]{\scriptsize Lem.~\ref{lem:asynch-contraction-4t}}}
&
\checkmark
\\
\hline
\end{tabular}
\caption{Comparison of asynchronous multidimensional approximate-agreement algorithms. For our algorithms, $\alpha=1$. Here $R_r$ denotes the radius of the minimum enclosing ball of the correct values, $D_r$ denotes their Euclidean diameter, and $\Delta^{(m)}_r$ denotes the coordinate-wise working range in coordinate $m$. The contraction entries use different progress measures: Mendes--Herlihy contracts one coordinate range at a time, MidExtremes and MDA contract diameter, and our algorithms contract the \MEB radius. The last column indicates whether the algorithm is coordinate-free. The inflated variant applies beyond the dimension-dependent resilience limitation of convex validity, for $n>4t$.}
\label{tab:async-comparison}
\end{table}

\subsection{Technical Overview}

We consider a fully connected network of $n$ processes, where at most $t$ are Byzantine, in two communication models. In the synchronous model, computation is carried out in rounds and every process consistently broadcasts its current value, so all correct processes receive all correct values. In the asynchronous model, messages can be delayed arbitrarily and a process cannot wait for all correct values to arrive. Hence, a process obtains a local view through the Gather protocol~\cite{abraham2021reaching, canetti1993fast}, which guarantees a common-core of at least $n-t$ values shared by all correct processes. However, the Gather protocol does not reveal which of the received values form the common-core.

\paragraph{\MEB validity for approximate agreement.}
The $\MEB$-validity was introduced for vector consensus
in~\cite{cambus2026practicalvalidityconditionsbyzantinetolerant}. Extending the definition 
into the approximate agreement poses new challenges. For the vector consensus, a process decides once and the validity condition concerns that single decision. In the approximate agreement the correct values are updated in every round, however the validity condition is stated with respect to the minimum enclosing ball of the initial correct values. 
It is therefore not enough to show that each value is close to the correct \MEB of the current round. This bound must hold with respect to the initial correct \MEB, and it must hold in every round. 
We consider two cases, depending on whether the candidate balls, that is, the minimum enclosing balls of all subsets of $n-t$ received values whose intersection forms the safe area, have a common point. For $n>(d+1)t$ they do, by Helly's theorem, which allows us to apply the adaptive contraction directly. At optimal resilience $n>3t$ the balls do not necessarily intersect, and we restore the intersection by multiplicatively increasing their radii. The remainder of this overview explains how these two ingredients, adaptive contraction and ball inflation, fit together.

\paragraph{An Adaptive \MEB Contraction Algorithm.}
Agreement is reached by shrinking the region that contains all correct values. Since each process computes its next value locally from its own received values, different correct processes may output different points. The contraction argument must therefore show that these outputs are geometrically closer to each other at the end of the round than they were at the beginning. We measure this progress using the radius $R_r$ of the minimum enclosing ball of the correct values in round $r$.
The key reason contraction is possible is that the local $\MEB$-safe areas share a common point: every correct output is forced to stay close to this point, and this gives a common reference around which the new correct values can be enclosed.

Adaptive \MEB Contraction Algorithm applies the core-set idea of B\u{a}doiu and
Clarkson~\cite{badoiu2003smaller,badoiu2008optimal} to
the $\MEB$-safe area: a process repeatedly adds the point of $\safeMEB_i^{(r)}$
farthest from the current minimax center, until every point of the safe area
lies within $\alpha r_i$ of that center. As shown in Lemma~\ref{lem:new-contraction}, the parameter $\alpha$ trades
computational effort against accuracy and provides a contraction factor of
$\alpha/\sqrt{1+\alpha^{2}}$, which is $1/\sqrt2\approx0.707$ at $\alpha=1$ and improves on the $\sqrt{7/8}\approx0.935$ of MidExtremes~\cite{fugger_et_al:LIPIcs.DISC.2018.27}.
Note that the contraction bound does not depend on the dimension.

The contraction proof is geometric. The selected points lie inside the correct $\MEB$, which restricts the possible position of their minimax center $c_i$. At the same time, the stopping condition ensures that the common point $p_r$ contained in all local \MEB-safe areas is also close to $c_i$. Combining these constraints gives a quadratic inequality involving the output, the current correct center $C_r$ and the common point $p_r$. Rewriting this inequality shows that every correct output lies in a ball centered at a weighted midpoint of $C_r$ and $p_r$, denoted by $a_r$. Since this midpoint is common to all correct processes, the local inequalities imply a global bound on the next correct radius $R_{r+1}$.

\paragraph{Validity is not implied by contraction.}
Note that the radius contraction alone does not imply validity. That is because the center of the correct $\MEB$ may drift from round to round, so the contraction cannot keep the correct values close to the initial ball $B(C_0, R_0)$, centered at $C_0$ with radius $R_0$. A naive proof for bounding the drift would sum these center movements over all rounds, but this would give weaker validity constant. We instead, in Theorem~\ref{thm:synch-no-inflation}, establish the inequality for any two consecutive rounds:
$\|C_{r+1}-C_r\|+\gamma R_{r+1}\le\gamma R_r$, which states that the
balls with enlarged radius by $\gamma>1$ are nested, $B(C_{r+1},\gamma R_{r+1})\subseteq B(C_r,\gamma R_r)$. Radius
contraction then compensates for center drift within a single inequality, and induction gives $H^{(r)}\subseteq B(C_0,\gamma R_0)$ for all rounds, where $H^{(r)}$ denotes the set of correct values in round $r$. This is exactly $\gamma$-\MEB validity with respect to the initial correct values. A similar argument applies to all four settings.

\paragraph{Ball inflation.}
The previous contraction argument requires the candidate $\MEB$ balls defining the safe areas to have a common intersection. Above the Helly threshold, i.e. $n>(d+1)t$, the intersection follows from Helly's theorem. 
At optimal resilience $n>3t$, however the balls do not necessarily intersect. We therefore in
Theorem~\ref{thm:beta-wise-ball-inflation} prove a multiplicative inflation theorem for Euclidean balls. The theorem states: if every $\beta$ balls of a finite family have a common point, then inflating every radius by
$\sqrt{\beta/(\beta-1)}$ makes the entire family intersect. The factor is
dimension-free. A counting argument gives $\beta\ge 3$ for $n>3t$ in the synchronous setting, so the required inflation is at most $\sqrt{3/2}$. The inflated algorithm then uses the same contraction, but on the inflated safe area. This weakens the contraction factor from $\alpha/\sqrt{1+\alpha^2}$ to $\lambda \alpha/\sqrt{1+ \alpha^2}$. For $\alpha=1$ it gives contraction factor at most $\sqrt{3}/2$ and $\sqrt{6}$-\MEB validity. Thus, inflation gives dimension-free and optimal resilience $n>3t$, at the price of slower contraction and a slightly larger validity constant.

The proof of the inflation theorem is a dimension-free Helly-type argument in the spirit of Adiprasito, B\'ar\'any, Mustafa and Terpai~\cite{adiprasito2020theorems} but the guarantee we need is different. Their theorem is additive, whereas $c$-\MEB validity is multiplicative and requires bounds of the form $\|y-C^{*}\|\leq c R^{*}$.
We therefore need a guarantee relative to each ball's own radius. 
Let $\lambda^*$ be the smallest inflation factor for which all balls intersect, and consider the balls that are tight at a witness point. The directions from this witness point to the centers of the tight balls must balance; otherwise the witness point could be moved slightly to decrease $\lambda^{*}$. This balance may involve many balls, while the assumption only gives intersections for $\beta$ balls at a time. The key step is to average over all $\beta$-tuples of tight directions and show that there are $\beta$ of them that are already approximately balanced, with average direction of squared norm at most $1/\beta$. Solving the inequalities we obtain the bound $\lambda^{*}\le\sqrt{\beta/(\beta-1)}$.

\paragraph{Relating the safe area to the correct \MEB.}
Our contraction analysis requires the local \MEB-safe area to be contained in the correct \MEB. When $n>(d+1)t$, this holds trivially. The correct \MEB is defined by at most $d+1$ points, so there exists a locally computed candidate subset of size $n-t$ containing the same $d+1$ points. However, this argument does not hold for $n>3t$. We repair it by intersecting over all subsets of size at least $n-t$, rather than exactly size $n-t$. Since the set of correct values $H^{(r)}$ is then itself a candidate subset, the inflated correct ball $B(C_r,\lambda R_r)$ appears in the intersection, and the local safe area is contained in it without any dependence on $d$.

\paragraph{The asynchronous setting.}
In the asynchronous setting, using the Gather protocol a correct process may proceed without having received all correct
values. Then, the \MEB safe area cannot be compared to $\MEB(H^{(r)})$ directly. Instead, we first compare it to the $\MEB$ of the correct values the process received and relate that local ball to the correct one in the second step. This additional comparison weakens the quadratic inequality used in the synchronous proof. Consequently, the non-inflated asynchronous algorithm contracts by $\sqrt{2/3}$ for $\alpha=1$ at resilience $n>(d+2)t$, whereas the inflated asynchronous algorithm works at dimension-free resilience $n>4t$ with contraction factor $\sqrt{15}/4$ and $2\sqrt{10}$-\MEB validity.

\paragraph{Comparison with MDA.}
Minimum-Diameter Averaging is analyzed in terms of diameter contraction and is
known to satisfy neither box nor convex
validity~\cite{10.1007/978-3-032-11127-2_10}. Thus, the literature does not provide a directly comparable $\MEB$-validity guarantee. In Lemma~\ref{lem:mda-validity}, we derive the $\MEB$-validity bound implied by its diameter contraction. This gives $7$-\MEB validity in the synchronous model and $11$-\MEB validity in the asynchronous model. By comparison, our inflated algorithms achieve $\sqrt{6}\approx2.45$-\MEB validity synchronously and $2\sqrt{10}\approx6.32$-\MEB validity asynchronously, both under strictly better resilience thresholds.

\subsection{Roadmap}
We discuss related work in Section~\ref{sec:related-work}. In Section~\ref{sec:model}, we define the communication model, approximate agreement and $\MEB$ validity. Section~\ref{sec:synch-alg} introduces Adaptive $\MEB$ Contraction and proves its contraction guarantee. In Section~\ref{sec:synchronousMAA}, we apply this algorithm in the synchronous setting, first above the Helly's threshold and then at optimal resilience using ball inflation. Section~\ref{sec:asynchMAA} extends the approach to the asynchronous setting using the Gather protocol. Finally, Section~\ref{sec:discussion} compares our guarantees with prior algorithms and discusses limitations and future work.

\section{Related Work}\label{sec:related-work}

\paragraph{Approximate agreement.}
Approximate agreement was introduced by Dolev et al.~\cite{10.1145/5925.5931} as a relaxation of exact agreement in which correct processes do not agree on identical values, but on values that are sufficiently close to each other. 
% In one dimension, the natural validity condition requires outputs to remain in the interval spanned by the correct inputs. 
Multidimensional approximate agreement generalizes this problem to inputs in $\mathbb{R}^d$ and was studied by Mendes, Herlihy, Vaidya and Garg~\cite{mendes2015multidimensional, VectorConsensusAsynch,VectorConsensus}. The standard validity notion in this setting is convex validity, which requires every correct output to lie in the convex hull of the correct input vectors. However, convex validity leads to resilience thresholds dependent on dimension $d$. In particular, the established optimal resilience thresholds are $n>(d+1)t$ in the synchronous and $n>(d+2)t$ in the asynchronous setting~\cite{VectorConsensusAsynch,VectorConsensus, mendes2015multidimensional}. An overview of applications of convex validity in other network models can be found in~\cite{ghinea2025convex, ghinea2023multidimensional, ghinea2026network}.

\paragraph{Validity conditions and relaxations.}
 Validity conditions for Byzantine agreement and approximate agreement have been studied extensively. For binary and multi-valued exact agreement, common notions include strong validity, weak validity and correct-proposal validity~\cite{Bracha87,10.1145/800221.806707,10.1145/322186.322188,10.1145/301308.301368,10.1145/872035.872066,SIU1998157}. For approximate agreement, convex validity is the standard geometric condition, however alternative one-dimensional validity notions, such as median validity and interval validity, have also been considered~\cite{MedianValidity,intervalValidity, 10.1145/3583668.3594567}.
 In the multidimensional setting, several works consider relaxed convex validity. Xiang et al.~\cite{xiang_et_al:LIPIcs.OPODIS.2016.26} study relaxed vector consensus, including lower-dimensional projections and $(\delta,p)$-relaxed validity. Coordinate-wise relaxations lead to box validity notions, which can avoid the full convex validity barrier but are inherently tied to the choice of coordinates~\cite{intervalValidity,10.1007/978-3-032-11127-2_10}. In contrast, our work uses minimum enclosing balls, which ensure that the distance from the center of the correct minimum enclosing ball and the outputs is bounded by a constant factor of the radius. 

\paragraph{Minimum enclosing balls and \MEB validity.}
The $\MEB$-validity and the corresponding $\MEB$-safe area were introduced for vector consensus by Cambus et al.~\cite{cambus2026practicalvalidityconditionsbyzantinetolerant}. Their work proposes \MEB validity as a validity notion for vector consensus. Our work adapts \MEB validity to approximate agreement. This poses a new challenge, as the agreement is iterative, the validity must be preserved over all rounds. 
%We therefore combine $\MEB$-safe areas with a contraction analysis that controls both the radius of the current correct $\MEB$ and the drift of its center.
Concurrent work by Melnyk~\cite{melnyk2026fasterconvergencemultidimensionalapproximate} also uses minimum enclosing balls in its contraction algorithm, but applies them to the convex safe area under convex validity.
Ball validity is also motivated by practical approximate agreement, such as Minimum-Diameter Averaging (MDA) that selects a subset of small diameter and averages it~\cite{10.5555/3540261.3542179}.

\paragraph{Core-sets and algorithms for minimum enclosing balls.}
Our Adaptive \MEB Contraction algorithm is inspired by the core-set construction of B\u{a}doiu and Clarkson~\cite{badoiu2003smaller,badoiu2008optimal}. Their algorithms approximate the minimum enclosing ball of a finite point set by repeatedly adding farthest points to a small core-set. We use the same idea, however the farthest point is chosen from a continuous $\MEB$-safe area rather than from a finite input set. Additional related work on approximate minimum enclosing encompasses similar core-set constructions ~\cite{10.1145/996546.996548} and gradient-type methods~\cite{kim20201+}.

\paragraph{Communication primitives.}
Byzantine agreement protocols rely on broadcast routines which ensure the delivery of the messages. In the synchronous setting, we use consistent broadcast~\cite{cachin2001secure, srikanth1987simulating, lynch1996distributed}, closely related to Crusader Agreement~\cite{srikanth1987simulating}, which guarantees consistency among values delivered by correct processes. However, unlike reliable broadcast~\cite{Bracha87}, it does not require all correct processes to deliver a value from a faulty sender. The term consistent broadcast was later used by Cachin et al.~\cite{cachin2001secure}; see also~\cite{aguilera2021frugal, attiya2025brief} for a discussion of the broadcast routines. In the asynchronous setting, we use the Gather protocol originating with Canetti--Rabin~\cite{canetti1993fast} and later extended by Abraham et al.~\cite{optimal-resilience-asynchronous,abraham2021reaching}.

\section{Model}\label{sec:model}

\noindent\textbf{Correct and faulty processes.} We consider a distributed system consisting of $n$ processes ${1, 2, \dots n}$, of which at most $t$ are faulty. Correct processes follow the protocol, while faulty processes are corrupted throughout the entire execution. Faulty processes, also referred to as \emph{Byzantine}, know the input vectors of all other processes, as well as the agreement algorithm, and they are allowed to collaborate. Byzantine processes may send arbitrary values or omit messages, however their behavior is imposed by the communication model described below. Note that the correct processes cannot identify which processes are faulty or how many faulty processes there are; they only know the upper bound $t$ on the number of Byzantine processes.

Throughout the paper, we consider the multidimensional setting $d\geq 2$. Each process $i$ starts with an initial input value $m_i^{(0)} \in \mathbb{R}^d$. In each round $r$, process $i$ maintains a single current value $m_i^{(r)} \in \mathbb{R}^d$, which depends on the initial input value and the information received and processed during the execution of the algorithm. 
In particular, each process combines the received information into a single value, which becomes its current value for the next round.
If the current round is insignificant, we will omit $r$ from the variables and simply write $m_i$. We use $H$ to denote the set of correct, non-faulty processes and $H^{(r)}=\{m_i^{(r)} \mid i \in H\}$ to denote the current values of correct processes in round $r$. Note that $|H| \ge n-t$.

\noindent\textbf{Communication.} 
In this work, we consider both synchronous and asynchronous settings with non-authenticated channels, as it is commonly studied in the literature~\cite{10.1145/322186.322188,dolev1982efficient, 10.1145/5925.5931}.
In the synchronous setting, computation proceeds in rounds. In each round, every correct process consistently broadcasts its current value. Consistent broadcast~\cite{cachin2001secure, srikanth1987simulating, lynch1996distributed} guarantees that every correct process receives the value broadcast by each correct process. Moreover, if two processes receive a value from the same sender, then it must be the same value. Note that a Byzantine process may send its value to some correct processes and not to others. We refer to the set of values received by a process in a round as its \emph{local view}. We denote the local view of correct process $i$ in round $r$ by $P_i^{(r)}$. Thus, $H^{(r)}\subseteq P_i^{(r)}$.

In the asynchronous setting, we use a primitive~\cite{abraham2021reaching, canetti1993fast} to obtain local views. When a correct process $i$ invokes \emph{Gather} in round $r$, it obtains a local view $P_i^{(r)}$ containing values associated with their respective senders. The Gather protocol satisfies the following properties:
\begin{itemize}
    \item \textit{Common-core:} There exists a set denoted by $\gatherr^{(r)}$ of values from at least $n-t$ distinct senders, such that $\gatherr^{(r)} \subseteq P_i^{(r)}$ for every correct process $i$.
    \item \textit{Validity:} If a value associated with a correct sender $j$ is contained in $P_i^{(r)}$, then this value is $j$'s current value $m_j^{(r)}$ in round $r$.
    \item \textit{Agreement:} If two correct parties include values associated with the same sender $j$ in their local views, then these values are identical.
\end{itemize}

Note that, the set $\gatherr^{(r)}$ is common to all correct processes, although the processes do not necessarily know which values belong to $\gatherr^{(r)}$.

\noindent\textbf{Multidimensional approximate agreement.} 
In this work, we consider multidimensional approximate agreement algorithms, where the approximation is defined with respect to the Euclidean distances:
\begin{definition}[Euclidean Distance]
For any $v, v' \in \mathbb{R}^d$, the Euclidean distance between $v$ and $v'$ is
    $\|v-v'\| = \sqrt{ \sum_{i = 1}^d (v_i - v'_i)^2}$, where  $v_i$ is the projection of $v$ on coordinate $i\in [d]$. We write $\langle v, v' \rangle= \sum_{i=1}^d v_iv_i'$ for the Euclidean inner product, so that $\|v\|^2=\langle v, v \rangle.$
\end{definition}
Approximate agreement algorithms allow processes to agree on a vector, even in presence of Byzantine processes.
An algorithm that solves multidimensional approximate agreement must satisfy the following properties:
\begin{itemize}
    \item \textit{Agreement:} The output vectors of all correct processes must be within a distance $\varepsilon>0$ from each other, i.e. $\|m_i^{(r)}-m_j^{(r)}\| \leq \varepsilon$ for $i, j \in H$.
    \item \textit{Validity:} The outputs of all correct processes satisfy the specific validity condition defined with respect to the initial inputs of the correct processes.

    \item \textit{Termination:} Each correct process must terminate in finite time, i.e. decide on a final output value and stop participating in the protocol.
\end{itemize}

The standard validity notion used in multidimensional approximate agreement is convex validity. 

\begin{definition}[Convex Validity]\label{def:convexhull}
    An algorithm satisfying convex validity must output a vector inside the convex hull of correct processes.
\end{definition}

Mendes et al.~\cite{mendes2015multidimensional} provide a multidimensional approximate agreement algorithm based on the \emph{Safe Area} computation. In the algorithm, each correct process $i$ computes the intersection of all convex hulls on subsets of size $n-t$. This intersection is referred to as the Safe Area.
% Formally, the Safe Area is defined as follows.
% \begin{definition}[Safe area~\cite{mendes2015multidimensional}]
%     Consider $n$ vectors $\{m_1, \dots ,m_n\}\eqqcolon V$, where $t<n/(\max\{3,d+1\})$ of which can be Byzantine. Let $\convexhull_1,\ldots, \convexhull_{\binom{n}{n-t}}$ be the convex hulls of every subset of $V$ of size $n-t$. The \textit{safe area} is the intersection of these convex hulls: $\bigcap_{i\in \left[\binom{n}{n-t}\right]} \convexhull_i.$
% \end{definition}
Moreover, in~\cite{mendes2015multidimensional}, the authors show that satisfying convex validity requires agreeing inside the Safe Area. In order for the Safe Area to exist, the resilience must be $t<n/(\max\{3,d+1\})$. However, this requirement implies that the algorithm cannot be used in the case when $n\leq d$. Therefore, we use a relaxation of the convex validity, which allows dimension-free resilience.

\noindent\textbf{MEB validity.}
We propose minimum enclosing ball \MEB validity and its multiplicative relaxation, $c$-\MEB validity, where $c$ is a constant, first introduced for the vector consensus setting in~\cite{cambus2026practicalvalidityconditionsbyzantinetolerant}. \MEB validity relies on allowing processes to agree inside the minimum enclosing ball of correct inputs.
In the following, we formally define the \MEB and $c$-relaxed \MEB validity conditions. 

\begin{definition}[\MEB validity]
    \MEB validity condition requires that the output vector of each non-faulty process must lie inside the minimum enclosing ball of the input vectors of all non-faulty processes denoted by \emph{correct} \MEB.
\end{definition}

As authors in~\cite{cambus2026practicalvalidityconditionsbyzantinetolerant} state, the \MEB is unique, so \MEB validity is well defined. Note that the diameter between two correct vectors is not necessarily unique. Moreover, the minimum enclosing ball of correct inputs is convex and contains all of them, hence it contains their convex hull, so convex validity implies $1$-\MEB validity.
However, Cambus et al.~\cite{cambus2026practicalvalidityconditionsbyzantinetolerant} show that exact \MEB validity suffers from resilience limitations, similar to convex validity. Hence, we relax the \MEB validity by increasing the radius of the \MEB of correct processes by a factor and improve the resilience. 

\begin{definition}[$c$-\MEB validity]
    Let $H$ be the set of input vectors of correct processes, and let \MEB denote their minimum enclosing ball with center in $C^*$ and radius $R^*$. An algorithm satisfies $c$-\MEB validity if every non-faulty process outputs a vector $y$ such that $\|y - C^*\| \le c\cdot R^*$.
\end{definition}
Note that, if $c=1$ this condition is the exact \MEB validity. Throughout the work, we will name this the $c$-relaxed \MEB validity condition.
In order to satisfy \MEB validity, it is necessary to agree inside the intersection of \MEB of all possible $n-t$ subsets of vectors, as shown in~\cite{cambus2026practicalvalidityconditionsbyzantinetolerant}. This area is defined analogously to the safe area for convex validity~\cite{VectorConsensus}:

\begin{definition}[\MEB-safe area \cite{cambus2026practicalvalidityconditionsbyzantinetolerant}]
    Let $S$ be a set of vectors in $\mathbb{R}^d$, with $|S|\ge n-t$. Then, the safe area for \MEB validity, denoted $\safeMEB$, is defined as:$$\safeMEB = \bigcap\limits_{T\subseteq S, |T|=n-t} \MEB(T).$$
    In round $r$, the local $\MEB$-safe area computed by process $i$ is $\safeMEB_i^{(r)} 
=\bigcap_{\substack{T\subseteq P_i^{(r)}\\ |T|=n-t}} \MEB(T)$.
\end{definition}
For $c$-\MEB validity, we define the $c$-\safeMEB area analogously, by replacing each
minimum enclosing ball $\MEB(T) = B(C_T, R_T)$ in the intersection with the inflated ball
$B(C_T, c \cdot R_T)$.

\section{Adaptive \MEB Contraction}\label{sec:synch-alg}

In this section, we introduce the Adaptive \MEB Contraction for solving multidimensional approximate agreement with $\MEB$-validity. First, each process locally computes the \MEB-safe area $\safeMEB_i^{(r)}$ as an intersection of smallest enclosing balls on all subsets of size $n-t$.
For now, we assume that $n>(d+1)t$, so that such intersection of the balls always exists. Later we will generalize this algorithm to work on $n>3t$ and in the asynchronous setting.

The Adaptive \MEB Contraction is inspired by the core-set construction of Bădoiu and Clarkson~\cite{badoiu2003smaller, badoiu2008optimal}, which iteratively adds a point which is farthest from the center of its current minimum enclosing ball. In contrast, we apply a similar principle to the $\MEB$-safe area $\safeMEB_i^{(r)}$ and terminate once all points of $\safeMEB_i^{(r)}$ are within a predefined distance from the center.

Initially, each process $i$ computes the \MEB-safe area $\safeMEB_i^{(r)}$ and adds the two diameter points of $\safeMEB_i^{(r)}$ to the selected set $S_i$. 
Then the algorithm computes the center $c_i$ that minimizes the maximum distance to the selected points from set $S_i$.
Next, the algorithm chooses the point $q_i$, which is the farthest away from the current minimax center $c_i$. 
Then, we check how well the current center $c_i$ represents the \MEB-safe area $\safeMEB_i^{(r)}$: if the point $q_i$ is within distance $\alpha \cdot r_i$, where $r_i$ is the maximum distance from $c_i$ to the points in set $S_i$, then all points of $\safeMEB_i^{(r)}$ are within distance $\alpha \cdot r_i$. This way, the stopping criteria is satisfied, so the algorithm outputs $c_i$. Otherwise, the point $q_i$ is added to the selected set $S_i$ and the process repeats.

The selected set $S_i$ collects points of the \MEB-safe area that are most relevant for determining a center that represents the entire area. The algorithm is adaptive because it keeps adding such points only until the current center approximates all of $\safeMEB_i^{(r)}$ within the factor $\alpha$. 
The pseudocode of the Adaptive \MEB Contraction is presented in Algorithm~\ref{alg:adaptive-contraction}. Figure~\ref{fig:adaptive-meb-example} illustrates one iteration of the Adaptive \MEB Contraction algorithm.

\begin{algorithm}[H]
\caption{Adaptive \MEB Contraction (for process $i$)}
\label{alg:adaptive-contraction}
\begin{algorithmic}[1]
\Require Input value $m_i^{(r)}$, threshold $1\leq \alpha \leq \sqrt{3}$ (see Remark \ref{remark:upperbound-alpha})
\Ensure New input value $m_i^{(r+1)}$

\State consistently broadcast $m_i^{(r)}$ and receive set $P_i^{(r)}$
\State Compute the \MEB-safe area
$\safeMEB_i^{(r)}
=\bigcap_{\substack{T\subseteq P_i^{(r)}\\ |T|=n-t}}
\MEB(T)$

\State Choose two diameter points:
$a_i,b_i\in
\arg\max_{x,y\in \safeMEB_i^{(r)}}\|x-y\|$

\State Initialize the selected set
$S_i\gets{a_i,b_i}$

\Repeat
\State Compute the minimax center of the selected set:
$
c_i\in
\arg\min_{c\in\mathbb{R}^d}
\max_{s\in S_i}\|s-c\|$
\State Define the corresponding radius: 
$r_i=\max_{s\in S_i}\|s-c_i\|$
\State Find a farthest point from the current center:
$q_i\in
\arg\max_{x\in \safeMEB_i^{(r)}}\|x-c_i\|$
\If{$\|q_i-c_i\|\le \alpha r_i$}
    \State \Return $m_i^{(r+1)}\gets c_i$
\Else
    \State Add this point: $S_i\gets S_i\cup\{q_i\}$
\EndIf
\Until{all points of $\safeMEB_i^{(r)}$ within radius $\alpha r_i$}
\end{algorithmic}
\end{algorithm}

\begin{figure}[t]
\centering
\begin{tikzpicture}[scale=1.25,>=latex]
\tikzset{
    safe/.style={fill=blue!15,draw=blue!60,thick},
    cand/.style={draw=gray!70,dashed},
    selected/.style={circle,fill=black,inner sep=1.6pt},
    farthest/.style={circle,draw=blue!70,fill=blue!20,inner sep=1.8pt},
    output/.style={circle,fill=blue!75,inner sep=1.8pt},
    lab/.style={font=\scriptsize}
}
\newcommand{\safearea}{%
  (-1,0) arc[start angle=240,end angle=300,radius=2]
         arc[start angle=0,  end angle=60, radius=2]
         arc[start angle=120,end angle=180,radius=2] -- cycle}

% ---------- Panel 1: stopping test fails ----------
\begin{scope}[xshift=-4.4cm]
  \node at (0,2.35) {\textbf{Initialization}};
  \filldraw[safe] \safearea;
  \draw[cand] (0,0) circle[radius=1];
  \node[selected,label={[lab]below left:{$a$}}]  at (-1,0) {};
  \node[selected,label={[lab]below right:{$b$}}] at (1,0) {};
  \node[farthest,label={[lab]above:{$q$}}] at (0,1.7321) {};
  \node[output,label={[lab] right:{$c_0$}}] at (0,0) {};
  \draw[->,blue!70,thick] (0,0.06) -- (0,1.65);
  \node[lab,right,xshift=7pt,yshift=25pt] at (0.06,0.95) {$\|q-c_0\|=\sqrt3>r_0$};
  \node[lab,align=center] at (0,-0.95) {$S=\{a,b\}$\\ $c_0=(0,0)$, $r_0=1$};
\end{scope}

% ---------- Panel 2: add the farthest point ----------
\begin{scope}[xshift=0cm]
  \node at (0,2.35) {\textbf{Add farthest point}};
  \filldraw[safe] \safearea;
  \draw[cand] (0,0) circle[radius=1];
  \node[selected,label={[lab]below left:{$a$}}]  at (-1,0) {};
  \node[selected,label={[lab]below right:{$b$}}] at (1,0) {};
  \node[selected,label={[lab]above:{$q$}}] at (0,1.7321) {};
  \node[output,label={[lab] right:{$c_0$}}] at (0,0) {};
  \draw[->,blue!70,thick] (0.55,1.15) to[bend left=15] (0.06,1.66);
  \node[lab,right] at (0.5,1.05) {add $q$};
  \node[lab,align=center] at (0,-0.95) {$S\gets S\cup\{q\}$};
\end{scope}

% ---------- Panel 3: updated centre, test passes ----------
\begin{scope}[xshift=4.4cm]
  \node at (0,2.35) {\textbf{Updated center}};
  \filldraw[safe] \safearea;
  \draw[cand] (0,0.5774) circle[radius=1.1547];
  \node[selected,label={[lab]below left:{$a$}}]  at (-1,0) {};
  \node[selected,label={[lab]below right:{$b$}}] at (1,0) {};
  \node[selected,label={[lab]above:{$q$}}] at (0,1.7321) {};
  \node[output,label={[lab]right:{$c_1$}}] at (0,0.5774) {};
  \draw[blue!70,thick] (0,0.5774) -- (0,1.7321);
  \node[lab,right] at (0.05,1.16) {$r_1$};
  \node[lab,align=center] at (0,-0.95)
    {$c_1=(0,\tfrac{\sqrt3}{3})$, $r_1=\tfrac{2\sqrt3}{3}$\\ stopping criterion satisfied};
\end{scope}
\end{tikzpicture}
\caption{One iteration of the Adaptive $\MEB$ Contraction Algorithm with $\alpha=1$. The
blue region is the local $\MEB$-safe area computed in Line 2, here the intersection of three balls of radius $2$ centered at $(-1,0)$, $(1,0)$ and $(0,\sqrt3)$. The two diameter points $a=(-1,0)$ and $b=(1,0)$ give $c_0=(0,0)$ and $r_0=1$, and the dashed
circle is ball centered at $c_0$ with radius $r_0$. The farthest point
$q=(0,\sqrt3)$ of the safe area lies at distance $\sqrt3>r_0$ from $c_0$, so the
stopping criterion fails and $q$ is added to the selected set. Recomputing the minimax center gives $c_1=(0,\tfrac{\sqrt3}{3})$ and $r_1=\tfrac{2\sqrt3}{3}$;
the whole safe area is now contained in $B(c_1,r_1)$, so the algorithm outputs
$c_1$. }
\label{fig:adaptive-meb-example}
\end{figure}

The parameter $\alpha$ controls how far away the points are allowed to lie from $c_i$. For a larger $\alpha>1$, the algorithm may stop after only a few selected points in set $S_i$. For $\alpha=1$, the algorithm stops only if the entire \MEB-safe area $\safeMEB_i^{(r)}$ is at distance $r_i$ from $c_i$, i.e. $\safeMEB_i^{(r)} \subseteq B(c_i,r_i) =\MEB(\safeMEB_i^{(r)})$. Thus, for $\alpha=1$, the output $c_i$ is exactly the center of the minimum enclosing ball of the entire \MEB-safe area. Equivalently, the $\alpha=1$ update may be implemented by computing the center of $\MEB(\safeMEB_i^{(r)})$. This case is also closely related to the approach in~\cite{melnyk2026fasterconvergencemultidimensionalapproximate}, which computes the minimum enclosing ball of the convex safe area instead.
Intermediate values of $\alpha$ interpolate between the two cases. 

The Adaptive \MEB Contraction can thus also be seen as an extension of the MidExtremes algorithm~\cite{fugger_et_al:LIPIcs.DISC.2018.27} used on the \MEB-safe area. 
MidExtremes considers only the two initial diameter points in $S_i$ and directly outputs their midpoint. In contrast, our algorithm adds further points until the current midpoint satisfies the predefined $\alpha \cdot r_i$ threshold.

Next, we analyze the contraction factor of Adaptive \MEB Contraction, which determines how quickly the correct values converge. We show that the contraction factor is $R_{r+1}\leq \frac{\alpha}{\sqrt{1+\alpha^2}}R_r$. Note that the contraction factor of our algorithm does not depend on the dimension $d$.

\begin{lemma}\label{lem:new-contraction}
    In the synchronous setting with $n>(d+1)t$, the contraction rate of Adaptive \MEB Contraction Algorithm is $R_{r+1}\leq \frac{\alpha}{\sqrt{1+\alpha^2}}R_r$, for $1 \leq \alpha \leq \sqrt{3}$. 
\end{lemma}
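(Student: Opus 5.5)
The plan is to find a single point $a_r$, common to all correct processes, such that every correct output $m_i^{(r+1)}=c_i$ lies in $B\bigl(a_r,\tfrac{\alpha}{\sqrt{1+\alpha^2}}R_r\bigr)$. Since the minimum enclosing ball of $H^{(r+1)}$ has radius at most the radius of any ball containing $H^{(r+1)}$, this gives the lemma. Throughout, $B(C_r,R_r)=\MEB(H^{(r)})$. The argument combines three geometric facts: a containment, a common point, and a minimax inequality.

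\emph{Step 1 (containment).} First I show $\safeMEB_i^{(r)}\subseteq B(C_r,R_r)$ for every correct $i$. The ball $\MEB(H^{(r)})$ is determined by a support set of at most $d+1$ points of $H^{(r)}$. Since $|H^{(r)}|\ge n-t\ge d+1$, I pick $T\subseteq H^{(r)}\subseteq P_i^{(r)}$ with $|T|=n-t$ containing that support set. Then $\MEB(T)=\MEB(H^{(r)})$, and this ball is one of the sets in the intersection defining $\safeMEB_i^{(r)}$. Consequently every selected point in $S_i$ lies in $B(C_r,R_r)$.

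\emph{Step 2 (common point).} Next I produce $p_r\in\bigcap_{i\in H}\safeMEB_i^{(r)}$. Let $V$ be the union of all values delivered to correct processes. By consistent broadcast there is at most one value per sender, so $|V|\le n$. Consider the family of all balls $\MEB(T)$ with $T\subseteq V$ and $|T|=n-t$; it contains every candidate ball of every correct process. Any $d+1$ of these sets $T$ together omit at most $(d+1)t<n$ senders, so some value lies in all of them, and hence in the corresponding $d+1$ balls. Helly's theorem then yields a point $p_r$ in all balls of the family, and so $p_r$ lies in every local safe area.

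\emph{Step 3 (quadratic inequality).} For the output $c_i$ with radius $r_i$, I use the standard property of the minimax center of a finite set: $c_i\in\operatorname{conv}(S_i)$, and for every $x$,
\[
\max_{s\in S_i}\|s-x\|^2\ \ge\ r_i^2+\|x-c_i\|^2 .
\]
This follows because the farthest points from $c_i$ have $c_i$ in their convex hull. Taking $x=C_r$ and using Step 1 gives $\|c_i-C_r\|^2\le R_r^2-r_i^2$. At termination the stopping rule holds, and $p_r\in\safeMEB_i^{(r)}$ by Step 2, so $\|p_r-c_i\|\le\|q_i-c_i\|\le\alpha r_i$, that is, $r_i^2\ge\|c_i-p_r\|^2/\alpha^2$. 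Combining the two bounds,
\[
\|c_i-C_r\|^2+\tfrac{1}{\alpha^2}\|c_i-p_r\|^2\le R_r^2 .
\]
With $a_r=\frac{\alpha^2C_r+p_r}{1+\alpha^2}$, the parallelogram-type identity rewrites the left-hand side as
\[
\tfrac{1+\alpha^2}{\alpha^2}\|c_i-a_r\|^2+\tfrac{1}{1+\alpha^2}\|C_r-p_r\|^2 .
\]
Dropping the nonnegative second term gives $\|c_i-a_r\|\le\frac{\alpha}{\sqrt{1+\alpha^2}}R_r$. Because $a_r$ depends only on $C_r$ and $p_r$, it is the same for all correct processes, which completes the bound.

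I expect the main obstacle to be making Steps 2 and 3 rigorous. In Step 2, $p_r$ must be common across processes whose local views differ in their Byzantine values. In Step 3, one must state the minimax-center inequality precisely and make sure it is applied at the iteration where the loop actually stops. The range restriction $1\le\alpha\le\sqrt3$ (Remark~\ref{remark:upperbound-alpha}) should only be needed to ensure that the loop is well defined and terminates. For termination I would first try arguing that $r_i$ strictly increases while $S_i$ remains inside the compact safe area.
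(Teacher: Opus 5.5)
Your proposal is correct and follows essentially the same route as the paper. Both arguments use the support-set argument for $\safeMEB_i^{(r)}\subseteq B(C_r,R_r)$, Helly's theorem with the $(d+1)t<n$ counting to get a common point $p_r$, the \MEB containment inequality combined with the stopping rule, and completion of the square around the same weighted point $a_r=\frac{\alpha^2C_r+p_r}{1+\alpha^2}$.

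Your derivation of $\|c_i-C_r\|^2+r_i^2\le R_r^2$ from $c_i\in\operatorname{conv}$ of the farthest points of $S_i$ justifies a step the paper only asserts. Termination of the inner loop is not part of the paper's argument, so your proof of the lemma does not need it either.
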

\begin{proof}
   Let $H^{(r)}$ denote the value of correct processes in round $r$, and $\MEB(H^{(r)})=B(C_r, R_r)$ be the smallest enclosing ball around $H^{(r)}$ with center $C_r$ and radius $R_r$. Moreover, let $R_{r+1}$ denote the radius of the new correct values after all correct processes output their $c_i$. In Adaptive \MEB Contraction Algorithm every process $i$ computes the $\MEB$-safe area $\safeMEB_i^{(r)}$. Per definition, every locally computed safe area is inside the correct $\MEB(H^{(r)})$, with center $C_r$ and radius $R_r$. Next, since $n>(d+1)t$, all locally computed $\MEB$-safe areas intersect. Indeed, consider any $d+1$ candidate balls used in the construction of the \MEB-safe areas, defined by sets $T_1,\ldots,T_{d+1}$ of size $n-t$. Each set $T_j$ omits at most $t$ processes, and therefore the $d+1$ sets omit at most $(d+1)t<n$ processes in total. Hence, there exists a process contained in all sets $T_j$. By consistent broadcast, this process contributes the same value to all corresponding sets, and this value lies in all $d+1$ candidate balls. Thus, every $d+1$ candidate balls intersect, and by Helly's theorem~\cite{danzer1963helly} the whole family of candidate balls has a nonempty intersection. Consequently, all locally computed $\MEB$-safe areas intersect, so there exists a point $p_r$ such that $p_r\in \safeMEB_i^{(r)}$ for every correct process $i$.

Each process $i$ maintains a selected set of points $S_i$ in $\safeMEB_i^{(r)}$ and computes the point that minimizes the largest distance to the selected points denoted by $c_i$. Since all selected points are within distance $r_i$ from $c_i$, then the set $S_i$ is inside the ball centered at $c_i$ with radius $r_i$, i.e. $S_i\subseteq B(c_i, r_i)$. Because $S_i \subseteq \safeMEB_i^{(r)}$, set $S_i$ is also inside the correct $\MEB(H^{(r)})$, i.e. $S_i\subseteq B(C_r, R_r)$. Since $B(c_i,r_i)$ is the smallest ball which contains $S_i$, and since $S_i$ is also contained in $B(C_r,R_r)$, we get:
\begin{align}\label{eq:contraction1}
    \|c_i-C_r\|^2 + r_i^2 \le R_r^2.
\end{align}
We refer to this property as the \MEB containment inequality.
This implies that if the radius $r_i$ is large, then the center $c_i$ must be close to the center $C_r$. If additionally $c_i$ was also far from $C_r$, then the selected set $S_i$ could not be inside $B(C_r,R_r)$.

Next, Adaptive \MEB Contraction Algorithm finds a point $q_i \in \safeMEB_i^{(r)}$ which is farthest away from the center $c_i$ and stops if $\|q_i-c_i\|\le \alpha r_i$. Then, every other point from $\safeMEB_i^{(r)}$ has a smaller distance to $c_i$ than $q_i$. This also holds for the common point $p_r$, so $\|p_r-c_i\|\le \alpha r_i$. Rewriting this gives:
\begin{align}\label{eq:contraction2}
    r_i^2\geq \frac{1}{\alpha^2}\|p_r-c_i\|^2.
\end{align}

We plug Inequality~\ref{eq:contraction2} into Inequality~\ref{eq:contraction1} and get:
\begin{align}\label{eq:synch-combined}
    \|c_i-C_r\|^2 +\frac{1}{\alpha^2}\|c_i-p_r\|^2 \leq R_r^2.
\end{align}
This inequality implies that the output $c_i$ of  Adaptive \MEB Contraction Algorithm cannot be too far from both center of the \MEB $C_r$ and common point $p_r$. 

Next, we rewrite this as squared distance from $c_i$ to the point $a_r$, which is the weighted midpoint between $C_r$ and $p_r$, i.e. $a_r=\frac{C_r+\frac{1}{\alpha^2}p_r}{1+\frac{1}{\alpha^2}}$. Thus, 
\begin{align}\label{eq:contraction-ar}
    \|c_i-C_r\|^2 +\frac{1}{\alpha^2}\|c_i-p_r\|^2 = \frac{1}{\alpha^2+1}\|C_r-p_r\|^2 + (1+\frac{1}{\alpha^2})\|c_i-a_r\|^2 \leq R_r^2.
\end{align}
From this inequality, we can conclude that $(1+\frac{1}{\alpha^2})\|c_i-a_r\|^2 \leq R_r^2$.
Hence, 
\begin{align}
    \|c_i-a_r\| \leq \frac{\alpha}{\sqrt{\alpha^2+1
    }}R_r.
\end{align}
So, every correct output $c_i$ of Adaptive \MEB Contraction Algorithm lies inside the ball $B(a_r, \frac{\alpha}{\sqrt{1+\alpha^2}}R_r)$, centered in point $a_r$ with radius $\frac{\alpha}{\sqrt{1+\alpha^2}}R_r$.
Therefore, the minimum enclosing ball of the new correct values cannot have a larger radius than this ball, i.e. $R_{r+1}\leq \frac{\alpha}{\sqrt{1+\alpha^2}}R_r$. This concludes the proof.

\end{proof}

We proved that the contraction rate of Adaptive \MEB Contraction is $\frac{\alpha}{\sqrt{1+\alpha^2}}$, where $1\leq \alpha\leq \sqrt{3}$. The parameter $\alpha$ controls the tradeoff between computational effort and contraction. Smaller values of $\alpha$ require the selected support set $S_i$ to represent the \MEB-safe area more accurately and therefore provide a stronger contraction. 
%In the extreme case $\alpha=1$, all points of $\MEB$-safe area must lie within distance $r_i$ from center $c_i$ and the contraction rate is $\frac{1}{\sqrt{2}} \approx 0.707$. 
\begin{corollary}
    For $\alpha=1$ the Adaptive \MEB Contraction Algorithm has contraction rate $\frac{1}{\sqrt{2}}\approx 0.707$.
\end{corollary}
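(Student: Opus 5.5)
This follows directly from Lemma~\ref{lem:new-contraction}. We only need to instantiate the parameter and check that the instantiation is admissible. The lemma gives, for every $\alpha$ with $1\le\alpha\le\sqrt3$ and under the same hypotheses (synchronous model, $n>(d+1)t$), the bound $R_{r+1}\le \frac{\alpha}{\sqrt{1+\alpha^2}}R_r$. The value $\alpha=1$ lies in this range, so the lemma applies verbatim and yields
\[
R_{r+1}\le \frac{1}{\sqrt{1+1}}R_r=\frac{1}{\sqrt2}R_r\approx 0.707\,R_r .
\]

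For the reader, it is worth recording what the chain of inequalities in the proof of Lemma~\ref{lem:new-contraction} looks like at $\alpha=1$. The stopping rule becomes $\safeMEB_i^{(r)}\subseteq B(c_i,r_i)$, so $\|p_r-c_i\|\le r_i$. Combining this with the \MEB containment inequality~\eqref{eq:contraction1} gives $\|c_i-C_r\|^2+\|c_i-p_r\|^2\le R_r^2$. The weighted midpoint $a_r$ is then the ordinary midpoint $(C_r+p_r)/2$. The parallelogram identity gives
\[
\|c_i-C_r\|^2+\|c_i-p_r\|^2=\tfrac12\|C_r-p_r\|^2+2\|c_i-a_r\|^2 ,
\]
and dropping the first term yields $\|c_i-a_r\|\le R_r/\sqrt2$. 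Since $a_r$ is common to all correct processes, every new correct value lies in $B(a_r,R_r/\sqrt2)$, and the minimum enclosing ball of $H^{(r+1)}$ has radius at most this.

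There is no real obstacle. The only points to verify are that $\alpha=1$ is an admissible threshold (it is the lower endpoint of the allowed interval) and that the common point $p_r$ exists, which the lemma already establishes via Helly's theorem for $n>(d+1)t$. Optionally, one can note that the value $1/\sqrt2$ is the minimum of $\alpha/\sqrt{1+\alpha^2}$ over $[1,\sqrt3]$, since this function is increasing in $\alpha$. This justifies calling $\alpha=1$ the strongest contraction obtainable from this analysis.
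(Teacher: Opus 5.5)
Your proposal is correct and matches the paper, which obtains this corollary simply by substituting $\alpha=1$ into the bound $R_{r+1}\le\frac{\alpha}{\sqrt{1+\alpha^2}}R_r$ of Lemma~\ref{lem:new-contraction}. Your unpacking of the $\alpha=1$ case, with $a_r=(C_r+p_r)/2$ and the parallelogram identity, agrees with the paper's identity~\eqref{eq:contraction-ar} at $\alpha=1$.
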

This bound is dimension-independent and is substantially smaller than the  $\sqrt{7/8}\approx 0.935$ contraction bound shown for MidExtremes~\cite{fugger_et_al:LIPIcs.DISC.2018.27}. 
In Section~\ref{sec:discussion}, we give an example for $\alpha=1$ in which the contraction factor is exactly $\frac{1}{\sqrt{2}}$, showing that our analysis is tight.

\begin{remark}\label{remark:upperbound-alpha}
    Note that $\alpha$ does not need to be upper bounded, however we show that by taking only the initial two diameter points of the \MEB-safe area into the selected set $S_i$, Adaptive \MEB Contraction Algorithm already satisfies the stopping criterion for $\alpha=\sqrt{3}$.
\end{remark}

\begin{lemma}\label{lem:midextremes}
Let $S_i$ be initialized with the two diameter points of the $\MEB$-safe region $\safeMEB_i^{(r)}$. Then Adaptive \MEB Contraction Algorithm already satisfies the stopping criteria for $\alpha=\sqrt{3}$ after this initialization.
\end{lemma}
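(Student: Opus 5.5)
Let $a=a_i$, $b=b_i$ be the two diameter points of $\safeMEB_i^{(r)}$ and $D=\|a-b\|$ the diameter. With $S_i=\{a,b\}$, the minimax center is the midpoint $c_i=(a+b)/2$ and the radius is $r_i=D/2$. The stopping criterion for $\alpha=\sqrt3$ then reads $\|x-c_i\|\le \frac{\sqrt3}{2}D$ for every $x\in\safeMEB_i^{(r)}$. This is a statement about an arbitrary bounded set of diameter $D$ and the midpoint of one of its diametral pairs. It uses only the diameter property, not the fact that $\safeMEB_i^{(r)}$ is an intersection of balls.

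Fix any $x\in\safeMEB_i^{(r)}$. Because $a,b$ realize the diameter, we have $\|x-a\|\le D$ and $\|x-b\|\le D$. I will apply the parallelogram law (Apollonius' median identity) to the triangle $x,a,b$:
\begin{align*}
\|x-c_i\|^2=\tfrac12\|x-a\|^2+\tfrac12\|x-b\|^2-\tfrac14\|a-b\|^2 .
\end{align*}
Substituting the bounds gives $\|x-c_i\|^2\le \tfrac12D^2+\tfrac12D^2-\tfrac14D^2=\tfrac34D^2$. Hence $\|x-c_i\|\le\frac{\sqrt3}{2}D=\sqrt3\,r_i$.

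The farthest point $q_i$ is one such $x$, so $\|q_i-c_i\|\le\sqrt3\,r_i$. Thus the test in Line 9 succeeds in the first iteration and the algorithm returns the midpoint, exactly as MidExtremes does. This also justifies the upper bound $\alpha\le\sqrt3$ in Remark~\ref{remark:upperbound-alpha}: any larger $\alpha$ behaves identically.

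Two technical points remain, and I expect them to be the only real care needed. First, the diameter and the farthest point must exist. This holds because $\safeMEB_i^{(r)}$ is a nonempty compact set: it is a finite intersection of closed balls, and it is nonempty when $n>(d+1)t$ by the Helly argument in the proof of Lemma~\ref{lem:new-contraction}. Second, if $D=0$ the safe area is a single point, $r_i=0$, and the criterion $0\le 0$ holds trivially. The figure example, three unit-separated balls with $q$ at distance $\sqrt3$ from the midpoint, shows that the constant $\sqrt3$ cannot be improved.
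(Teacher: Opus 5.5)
Your proof is correct and matches the paper's argument. Both apply the parallelogram (Apollonius) identity to $x,a_i,b_i$ with $\|x-a_i\|,\|x-b_i\|\le D$, which gives $\|x-c_i\|^2\le \tfrac34 D^2$, i.e.\ $\|x-c_i\|\le\sqrt3\,r_i$. Your extra remarks on compactness (so the diameter pair and farthest point exist) and on the degenerate case $D=0$ are details the paper leaves implicit.
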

\begin{proof}
Let $a_i, b_i$ be the two diameter points of $\safeMEB_i^{(r)}$. Then, $c_i=\frac{a_i+b_i}{2}$ and $r_i = \frac{Diam_i}{2}$, where $Diam_i$ is the distance between $a_i$ and $b_i$. This is equivalent to the MidExtremes Algorithm~\cite{fugger_et_al:LIPIcs.DISC.2018.27} computed on the \MEB-safe area $\safeMEB_i^{(r)}$. Then, every point $x$ of $\safeMEB_i^{(r)}$ is at distance smaller than $\frac{\alpha Diam_i}{2}$ from $c_i$. Using the parallelogram law, we obtain $\|x-a_i\|^2+\|x-b_i\|^2 =2\|x-c_i\|^2+\frac{Diam_i^2}{2}$. Since $\|x-a_i\|\le Diam_i$ and $\|x-b_i\|\le Diam_i$, we get $2\|x-c_i\|^2+ \frac{Diam_i^2}{2} \leq 2Diam_i^2$. Rearranging the terms gives $\|x-c_i\| \leq \sqrt{3} \cdot r_i$, implying that all points of $\safeMEB_i^{(r)}$ are at distance $\sqrt{3} \cdot r_i$ from $c_i$. Hence, the stopping criteria in Adaptive \MEB Contraction Algorithm is directly satisfied for $\alpha=\sqrt{3}$. 
\end{proof}
This implies that taking the MidExtremes point of the \MEB-safe area provides contraction rate of $\frac{\sqrt{3}}{2} \approx 0.866$. This improves over the contraction factor $\sqrt{7/8}\approx 0.94$ proved for the classical MidExtremes algorithm in~\cite{fugger_et_al:LIPIcs.DISC.2018.27}, although the two bounds are obtained in different models.

\section{Synchronous Multidimensional Approximate Agreement}\label{sec:synchronousMAA}
In this section, we apply Adaptive \MEB Contraction in the synchronous setting. First, we consider the setting with $n>(d+1)t$ in Section~\ref{sec:resilience-n>d+1t}, in which all candidate balls for the \MEB-safe area intersect. Then, we focus on the optimal resilience case $n>3t$, where the intersection of the balls is not guaranteed. Hence, we show in Section~\ref{sec:ball-inflation} that we can inflate the candidate balls by a factor and obtain a common intersection. Then, in Section~\ref{sec:resilience3t}, we show that using the Adaptive \MEB Contraction Algorithm on inflated candidate balls solves multidimensional approximate agreement with optimal resilience $n>3t$.  

\subsection{Resilience $n>(d+1)t$}\label{sec:resilience-n>d+1t}
In the following, we show Adaptive \MEB Contraction Algorithm solves multidimensional approximate agreement when $n>(d+1)t$. In this scenario, all locally computed \MEB-safe areas intersect, as shown in Lemma~\ref{lem:new-contraction}.

\begin{theorem}
\label{thm:synch-no-inflation}
Assume the synchronous setting with $n>(d+1)t$. Adaptive \MEB Contraction Algorithm solves multidimensional approximate agreement after $O\left(
\frac{\log(R_{\max}/\varepsilon)} {\log(\sqrt{1+\alpha^2}/\alpha)} \right)$ rounds and satisfies $\sqrt{1+\alpha^2}$-\MEB validity.
\end{theorem}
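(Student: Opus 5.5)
The proof has three parts: agreement and termination from Lemma~\ref{lem:new-contraction}, and validity from a per-round nesting inequality on inflated balls. We use the quantities from the proof of Lemma~\ref{lem:new-contraction}: $C_r,R_r$, the common point $p_r$, the weighted midpoint $a_r$, and the combined inequality~\eqref{eq:contraction-ar}.

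\textbf{Agreement and termination.} Lemma~\ref{lem:new-contraction} gives $R_{r+1}\le \frac{\alpha}{\sqrt{1+\alpha^2}}R_r$ in every round. Hence $R_r\le (\alpha/\sqrt{1+\alpha^2})^r R_0$. Any two correct values are at distance at most $2R_r$, so agreement holds once $2R_r\le\varepsilon$. Taking $R_{\max}$ as an upper bound on $R_0$, which every process can use to fix the number of rounds in advance, this happens after $O\bigl(\log(R_{\max}/\varepsilon)/\log(\sqrt{1+\alpha^2}/\alpha)\bigr)$ rounds. Termination follows from this fixed round count, together with the fact that each inner loop of Algorithm~\ref{alg:adaptive-contraction} halts (by Lemma~\ref{lem:midextremes} it halts immediately for $\alpha=\sqrt3$, and otherwise by a B\u{a}doiu--Clarkson-type argument).

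\textbf{Validity via nesting.} With $\gamma=\sqrt{1+\alpha^2}$, we will show
\[
\|C_{r+1}-C_r\|+\gamma R_{r+1}\le \gamma R_r
\]
for every round, which gives $B(C_{r+1},\gamma R_{r+1})\subseteq B(C_r,\gamma R_r)$. By induction, $H^{(r)}\subseteq B(C_r,R_r)\subseteq B(C_0,\gamma R_0)$, which is $\gamma$-\MEB validity. To prove the inequality, set $D=\|C_r-p_r\|\le R_r$ and $\kappa=1/\alpha^2$. Inequality~\eqref{eq:contraction-ar} says that every output $c_i$ lies in $B(a_r,\rho)$ with
\[
\rho^2=\frac{R_r^2-\frac{D^2}{1+\alpha^2}}{1+\kappa}.
\]
Hence $R_{r+1}\le\rho$. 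Moreover, $C_{r+1}$ lies in the convex hull of the new correct values, which is contained in $B(a_r,\rho)$, so $\|C_{r+1}-a_r\|\le\rho$. For the same reason $\|C_{r+1}-a_r\|\le \sqrt{\rho^2-R_{r+1}^2}$ (the MEB containment inequality again, since $\MEB(H^{(r+1)})$ sits inside $B(a_r,\rho)$). Also $\|a_r-C_r\|=\frac{\kappa}{1+\kappa}D=\frac{D}{1+\alpha^2}$. Therefore
\[
\|C_{r+1}-C_r\|+\gamma R_{r+1}\le \frac{D}{1+\alpha^2}+\sqrt{\rho^2-R_{r+1}^2}+\gamma R_{r+1}.
\]

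\textbf{Main obstacle: the final maximization.} It remains to maximize the right-hand side over $R_{r+1}\in[0,\rho]$ and $D\in[0,R_r]$ and show the maximum is at most $\gamma R_r$. Over $R_{r+1}$, Cauchy--Schwarz gives $\sqrt{\rho^2-x^2}+\gamma x\le\sqrt{1+\gamma^2}\,\rho$, and $\sqrt{1+\gamma^2}\rho$ then has to be combined with $\frac{D}{1+\alpha^2}$ and optimized over $D$. It is not clear that this naive bound is tight enough. If it fails, I would drop the $\sqrt{\rho^2-R_{r+1}^2}$ term and use only $\|C_{r+1}-a_r\|\le\rho-R_{r+1}$ (possible when the MEB lies inside a ball of radius $\rho$), giving the bound $\frac{D}{1+\alpha^2}+\rho+(\gamma-1)R_{r+1}\le \frac{D}{1+\alpha^2}+\gamma\rho$. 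Note that $\gamma\rho\le\gamma\alpha R_r/\gamma=\alpha R_r$, which leaves slack of $(\gamma-\alpha)R_r\ge R_r/(2\gamma)$. The remaining task is to check that this slack dominates $\frac{D}{1+\alpha^2}$ once the $-\frac{D^2}{1+\alpha^2}$ correction inside $\rho$ is used. This is a one-variable calculus check in $D$, and I expect it to go through for $1\le\alpha\le\sqrt3$ after a careful choice of which containment bound to use.
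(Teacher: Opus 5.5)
Your setup matches the paper's: the nesting inequality $\|C_{r+1}-C_r\|+\gamma R_{r+1}\le\gamma R_r$, the ball $B(a_r,\rho)$ containing all outputs, the \MEB containment inequality $\|C_{r+1}-a_r\|^2+R_{r+1}^2\le\rho^2$, the equality $\|a_r-C_r\|=D/(1+\alpha^2)$, and the final induction. The gap is that you never prove the one inequality that fixes $\gamma$. You leave it as an expectation, and you doubt the route that actually works.

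The Cauchy--Schwarz bound you call naive is exactly tight enough. Put $u=D/\sqrt{1+\alpha^2}$. Then $\rho=\frac{\alpha}{\sqrt{1+\alpha^2}}\sqrt{R_r^2-u^2}$ and $\frac{D}{1+\alpha^2}=\frac{u}{\sqrt{1+\alpha^2}}$. A second Cauchy--Schwarz on the pair $\bigl(u,\sqrt{R_r^2-u^2}\bigr)$ gives
\[
\frac{D}{1+\alpha^2}+\sqrt{1+\gamma^2}\,\rho\le\sqrt{\frac{1+\alpha^2(1+\gamma^2)}{1+\alpha^2}}\,R_r .
\]
The right-hand side is at most $\gamma R_r$ if and only if $\gamma^2\ge 1+\alpha^2$. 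With $\gamma=\sqrt{1+\alpha^2}$ it holds with equality, which is precisely how the paper obtains the validity constant. You need no case analysis in $D$ and no use of $D\le R_r$.

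Discard your fallback, because it rests on a false inequality. The bound $\|C_{r+1}-a_r\|\le\rho-R_{r+1}$ would require $\MEB(H^{(r+1)})\subseteq B(a_r,\rho)$. A point set inside a ball need not have its minimum enclosing ball inside that ball. For example, the points $(\cos\theta,\pm\sin\theta)$ lie in the unit ball around the origin, but their \MEB has center $(\cos\theta,0)$ and radius $\sin\theta$. So $\|C\|+R=\cos\theta+\sin\theta>1$ for $0<\theta<\pi/2$.

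The same confusion appears in your parenthetical justification of $\|C_{r+1}-a_r\|\le\sqrt{\rho^2-R_{r+1}^2}$. That inequality holds because the \emph{points} $H^{(r+1)}$ lie in $B(a_r,\rho)$, not because their \MEB does. From this squared relation, the best bound you can get on $\|C_{r+1}-a_r\|+\gamma R_{r+1}$ is $\sqrt{1+\gamma^2}\,\rho$. That is strictly larger than the $\gamma\rho$ your fallback uses, so its apparent slack comes only from the false step.

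A minor point on termination: for $\alpha=1$, a B\u{a}doiu--Clarkson argument gives convergence, not finite termination, of the inner loop on a continuous safe area. In that case take $c_i$ to be the center of $\MEB(\safeMEB_i^{(r)})$ directly, as the paper notes.
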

\begin{proof}
We first show the convergence. As shown in Lemma~\ref{lem:new-contraction}, the contraction rate of Adaptive \MEB Contraction Algorithm is $R_{r+1} \leq \frac{\alpha}{\sqrt{1+\alpha^2}}R_r$, where $R_r$ denotes the radius of the minimum enclosing ball around correct processes in round $r$. After $r$ rounds, $R_r \leq \left( \frac{\alpha}{\sqrt{1+\alpha^2}}\right)^r R_0 \leq \left( \frac{\alpha}{\sqrt{1+\alpha^2}}\right)^r R_{\max} $, where $R_{\max}$ is the known upper bound on the initial correct \MEB radius. 
Since every correct value lies in $\MEB(H^{(r)})=B(C_r, R_r)$, the distance between any two correct values is at most $2R_r \le \varepsilon$. Hence, $2\left( \frac{\alpha}{\sqrt{1+\alpha^2}}\right)^r R_{\max} \leq \varepsilon$ and Adaptive \MEB Contraction Algorithm converges after  $\left\lceil \frac{\log (2R_{\max}/\varepsilon)}{\log (\frac{\sqrt{1+\alpha^2}}{\alpha})} \right\rceil$ rounds. Since $R_{\max}$ is known to all correct processes, they can terminate after this predetermined number of rounds. Thus, Adaptive \MEB Contraction Algorithm satisfies $\varepsilon$-agreement and termination.

Next, we show that Adaptive \MEB Contraction Algorithm satisfies $\sqrt{1+\alpha^2}$-\MEB validity. The main difficulty is that the contraction only controls the radius $R_r$. It does not prevent the center of \MEB to drift away over multiple rounds. So, for validity, we want to show that all correct values in all rounds stay close to the initial correct \MEB, i.e. $H^{(r)}\subseteq B(C_0, \gamma R_0)$ for each round $r$, where $\gamma$ is the validity factor. 

In order to bound the drift, we show that 
\begin{align}\label{eq:validity-center-drift}
\|C_{r+1}-C_r\|+\gamma R_{r+1}\le \gamma R_r. 
\end{align}
Intuitively, in every round the radius of the \MEB decreases, while the center of the \MEB of the next round might move. However, the radius contraction is strong enough to compensate for the movement.

From the proof of Lemma~\ref{lem:new-contraction}, for every correct output $c_i$, we have $\|c_i-C_r\|^2+\frac{1}{\alpha^2}\|c_i-p_r\|^2\le R_r^2$, where $p_r$ is a common point contained in every local \MEB-safe area. Additionally, consider the point $a_r=\frac{C_r+\frac{1}{\alpha^2}p_r}{1+\frac{1}{\alpha^2}}$ defined in Lemma~\ref{lem:new-contraction} and Inequality~\ref{eq:contraction-ar}. Then, 
\begin{align}
\|c_i-a_r\|^2 \leq \frac{R_r^2- \frac{1}{\alpha^2+1}\|C_r-p_r\|^2}{1+\frac{1}{\alpha^2}} =\sigma_r^2.
\end{align}
This implies that every correct output $c_i$ lies in the ball $B(a_r,\sigma_r)$, centered at $a_r$ with radius $\sigma_r$. 

Since all new correct values lie in $B(a_r,\sigma_r)$ and their minimum enclosing ball is $\MEB(H^{(r+1)})=B(C_{r+1},R_{r+1})$, the $\MEB$ containment inequality gives $\|C_{r+1}-a_r\|^2 + R_{r+1}^2 \leq \sigma_r^2$.

We now bound the validity factor $\gamma$ from Inequality~\ref{eq:validity-center-drift}. 

By triangle inequality, we get $\|C_{r+1}-C_r\|+\gamma R_{r+1}\le \|a_r-C_r\| + \|C_{r+1}-a_r\| + \gamma R_{r+1}$. Plugging in the formula for $a_r$ into the first term and using Cauchy-Schwarz and substituting the definition of $\sigma_r$ for the second and third term on the right side gives:
\begin{align}
    \|C_{r+1}-C_r\|+\gamma R_{r+1}&\le \frac{1}{\alpha^2+1}\|C_r-p_r\| + \sqrt{(1+\gamma^2)\frac{R_r^2-\frac{1}{\alpha^2+1}\|C_r-p_r\|^2}{1+\frac{1}{\alpha^2}}}\\
    &\leq \sqrt{\frac{1+\gamma^2 +\frac{1}{\alpha^2}}{1+\frac{1}{\alpha^2}}}R_r.
\end{align}
Hence, it is enough to choose a $\gamma$ such that $\sqrt{\frac{1+\gamma^2 +\frac{1}{\alpha^2}}{1+\frac{1}{\alpha^2}}} \leq \gamma$. After solving this inequality, the smallest possible choice of $\gamma$ is $\gamma=\sqrt{1+\alpha^2}$, so we obtain $\|C_{r+1}-C_r\|+\sqrt{1+\alpha^2} R_{r+1}\le \sqrt{1+\alpha^2} R_r$. This holds for every two consecutive rounds $r$ and $r+1$.

In order to show validity, we must bound the drift of the center of the \MEB from $C_0$. We now prove by induction that $\|C_{r}-C_0\|+\sqrt{1+\alpha^2} R_{r}\le \sqrt{1+\alpha^2} R_0$ for every round $r$. For $r=0$, the inequality holds. Now assume that it holds for some round $r$. Then, by the triangle inequality we get:
\begin{align}
    \|C_{r+1}-C_0\|+\sqrt{1+\alpha^2} R_{r+1}&\le \|C_r-C_0\|+\|C_{r+1}-C_r\| + \sqrt{1+\alpha^2} R_{r+1} \\
    &\leq \|C_r-C_0\| + \sqrt{1+\alpha^2} R_r \\
    &\leq \sqrt{1+\alpha^2}R_0
\end{align}
This completes the induction. Hence, for every round $r$ it holds that $$\|C_r-C_0\|+\sqrt{1+\alpha^2}R_r\le \sqrt{1+\alpha^2}R_0.$$ Since $\sqrt{1+\alpha^2}\ge 1$, this also implies $\|C_r-C_0\|+R_r \leq \sqrt{1+\alpha^2}R_0$. Now, we can use this bound to show that the correct processes also stay close to the center of the initial $\MEB(H^{(0)})$. 

Let $x\in H^{(r)}$ be any correct input in round $r$. Since $\MEB(H^{(r)})=B(C_r,R_r)$, we get $\|x-C_r\| \leq R_r$. Then by triangle inequality 
\begin{align}
    \|x-C_0\| &\leq \|x-C_r\| + \|C_r-C_0\| \\
    & \leq R_r + \|C_r-C_0\| \\
    & \leq \sqrt{1+\alpha^2}R_0.
\end{align} 
This inequality holds for every correct input and every round $r$. This implies that every correct input stays inside the ball centered at $C_0$ with radius $\sqrt{1+\alpha^2}R_0$, i.e. $H^{(r)} \subseteq B(C_0,\sqrt{1+\alpha^2}R_0)$. Thus, Adaptive \MEB Contraction Algorithm satisfies $\sqrt{1+\alpha^2}$-$\MEB$ validity.
\end{proof}

We showed that Adaptive \MEB Contraction Algorithm solves multidimensional approximate agreement in the synchronous setting with $n>(d+1)t$ while satisfying $\sqrt{1+\alpha^2}$-$\MEB$ validity. The following corollary highlights the resulting validity guarantees for particular choices of $\alpha$.
\begin{corollary}\label{cor:synch-alpha-one}
    For $\alpha = \sqrt{3}$, Adaptive \MEB Contraction Algorithm satisfies $2$-\MEB validity, whereas for $\alpha = 1$, it satisfies $\sqrt{2}\approx 1.41$-\MEB validity. In particular, lower values of $\alpha$ obtain better \MEB-validity guarantees. 
\end{corollary}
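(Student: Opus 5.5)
This is a direct specialization of Theorem~\ref{thm:synch-no-inflation}, which gives $\sqrt{1+\alpha^2}$-\MEB validity for every admissible $1\le\alpha\le\sqrt3$ when $n>(d+1)t$ in the synchronous model. The plan is simply to substitute the two endpoint values of $\alpha$ and then record that the guarantee is monotone in $\alpha$.

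For $\alpha=\sqrt3$ we get $\sqrt{1+3}=2$, so the algorithm satisfies $2$-\MEB validity. This case is particularly cheap: by Lemma~\ref{lem:midextremes}, the stopping test already passes after the initialization with the two diameter points. The output is therefore just the MidExtremes point of $\safeMEB_i^{(r)}$, and the theorem applies to it unchanged. For $\alpha=1$ we get $\sqrt{2}\approx 1.41$. Here the output is the center of $\MEB(\safeMEB_i^{(r)})$, and the theorem again applies verbatim.

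For the final claim, that smaller $\alpha$ gives a better validity constant, I would note that $\alpha\mapsto\sqrt{1+\alpha^2}$ is strictly increasing on $[1,\sqrt3]$. The validity constant therefore decreases as $\alpha$ decreases and is smallest at $\alpha=1$.

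There is no real obstacle. The one point to check is that both endpoint values lie in the admissible range $1\le\alpha\le\sqrt3$ required by Lemma~\ref{lem:new-contraction} and Theorem~\ref{thm:synch-no-inflation}. They do, being its two endpoints.
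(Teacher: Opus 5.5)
Your proposal is correct and takes the same approach as the paper. The paper states this corollary as an immediate specialization of Theorem~\ref{thm:synch-no-inflation}: substituting $\alpha=\sqrt{3}$ and $\alpha=1$ into $\sqrt{1+\alpha^2}$ gives $2$ and $\sqrt{2}$, and the final claim follows because $\alpha\mapsto\sqrt{1+\alpha^2}$ is increasing.
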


Note that in the proof of Lemma~\ref{lem:new-contraction} we can bound validity by summing over all possible center movements over all rounds. Since the radius shrinks, this bounds the total drift based on geometric series, however it provides a weaker validity constant. Hence, we proved a stronger bound by considering the drift of centers and radius contraction simultaneously.

We next focus on the optimal resilience synchronous setting $n>3t$. 
%In this regime, the local $\MEB$-safe areas no longer necessarily intersect. Hence, in order to get the intersection of the \MEB-safe areas, we inflate the candidate balls by a factor and show that the same Adaptive \MEB Contraction can be used to solve multidimensional approximate agreement. 

\subsection{Ball Inflation Analysis}\label{sec:ball-inflation}

In this section, we apply the Adaptive \MEB Contraction to solve the multidimensional approximate agreement algorithm satisfying \MEB-validity with optimal resilience $n>3t$. Note, when $n>3t$, there is no guarantee that all candidate \MEBS intersect and that the \MEB-safe area is non-empty. Hence, in order to ensure the intersection, we increase each candidate \MEB's radius by a factor of $\lambda \geq 1$ and show that all \MEBS after inflation have a common intersection point. First, we upper bound the inflation factor $\lambda \le \sqrt{\frac{\beta}{\beta -1}}$. 
Then, we show that using Adaptive \MEB Contraction Algorithm, we solve multidimensional approximate agreement with contraction factor $\frac{\lambda \alpha}{\sqrt{1+\alpha^2}}$ and satisfy  $\lambda\sqrt{\frac{1+\alpha^{2}}{1+\alpha^{2}-\lambda^{2}\alpha^{2}}}$-\MEB validity.

%In~\cite{cambus2026practicalvalidityconditionsbyzantinetolerant}, the authors showed a bound of $1+\frac{d-1}{d+1+\sqrt{2(d+1)d}} = \sqrt{\frac{2d}{d+1}} < \sqrt{2}$ on relaxation factor $c$ of the $c$-\MEB validity. Here we show that the relaxation factor in the centralized case in fact depends on $\lfloor \frac{n-1}{t}\rfloor$. In particular, for $n>3t$, the relaxation factor can be upper bounded by $\sqrt{\frac{3}{2}}$. This tighter bound helps us develop approximation algorithms that satisfy relaxed \MEB-validity. \todo{check how to mention this?}

In the following, we show that if any subset of $\beta$ balls have a non-empty intersection, then increasing each ball's radius by a factor $1 \leq \lambda \leq \sqrt{\frac{\beta}{\beta-1}}$ ensures that all balls intersect. A similar result is known for general convex sets~\cite{adiprasito2020theorems}, but only as an additive bound, whereas we need the multiplicative inflation factor relative to each candidate \MEB's own radius.

\begin{theorem} \label{thm:beta-wise-ball-inflation} Let $\mathcal{B}=\{B(C_T,R_T):T\in\mathcal{T}\}$ be a finite family of Euclidean balls in $\mathbb{R}^d$. Suppose that every subfamily of at most $\beta\ge2$ balls has a nonempty intersection. Then, after increasing the radius of every ball by a factor of $\sqrt{\frac{\beta}{\beta-1}}$ all balls have a common intersection point, i.e.
$$
\bigcap_{T\in\mathcal{T}} B\left(C_T,\sqrt{\frac{\beta}{\beta-1}}R_T\right)
\neq\emptyset.
$$
\end{theorem}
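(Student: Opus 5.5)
We argue by an extremal/minimax argument, following the outline in the technical overview. Define $f(x)=\max_{T\in\mathcal T}\|x-C_T\|/R_T$; balls with $R_T=0$ can be handled by a limiting argument, or by noting that every $\beta$-subfamily containing a point ball forces that point into the others. The function $f$ is convex, continuous and coercive, so it attains its minimum $\lambda^*$ at some point $x^*$. Moreover, $\bigcap_T B(C_T,\lambda R_T)\neq\emptyset$ iff $\lambda\ge\lambda^*$. Hence it suffices to show $\lambda^*\le\sqrt{\beta/(\beta-1)}$. If $\lambda^*\le 1$ there is nothing to prove, so assume $\lambda^*>1$.

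\textbf{Step 1 (balance at the witness).} Let $\mathcal A=\{T:\|x^*-C_T\|=\lambda^*R_T\}$ be the tight balls, and set $u_T=(C_T-x^*)/\|C_T-x^*\|$. These are unit vectors, since $\lambda^*>1>0$ means no center sits at $x^*$. Optimality of $x^*$ for the max of the convex functions $\|x-C_T\|/R_T$ gives $0\in\operatorname{conv}\{u_T:T\in\mathcal A\}$. Otherwise a direction $v$ with $\langle v,u_T\rangle>0$ for all tight $T$ would let us move $x^*$ slightly and decrease $f$. So there are weights $w_T\ge0$, $\sum w_T=1$, with $\sum_T w_Tu_T=0$. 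The weights are rational in the limit, or we can use a random-sampling version directly.

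\textbf{Step 2 (choose $\beta$ approximately balanced tight balls).} Sample $T_1,\dots,T_\beta$ i.i.d.\ from the distribution $w$. Then
\[
\mathbb E\Bigl\|\tfrac1\beta\textstyle\sum_j u_{T_j}\Bigr\|^2=\tfrac1{\beta^2}\Bigl(\beta+\beta(\beta-1)\bigl\|\textstyle\sum_T w_Tu_T\bigr\|^2\Bigr)=\tfrac1\beta,
\]
so some realization has $\|\bar u\|^2\le1/\beta$ with $\bar u=\frac1\beta\sum_j u_{T_j}$. Repetitions among the $T_j$ are harmless, since the hypothesis is for subfamilies of at most $\beta$ balls. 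I expect this averaging step to be the main obstacle conceptually. Once we average over $\beta$-tuples (with replacement) rather than look for an exactly balanced subfamily, it becomes clean.

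\textbf{Step 3 (use the $\beta$-wise intersection).} By hypothesis there is a point $y$ with $\|y-C_{T_j}\|\le R_{T_j}$ for all $j$. Write $y=x^*+z$ and $D_j=\|C_{T_j}-x^*\|=\lambda^*R_{T_j}$. Then
\[
R_{T_j}^2\ge\|z-D_ju_{T_j}\|^2=\|z\|^2-2D_j\langle z,u_{T_j}\rangle+D_j^2 ,
\]
i.e.\ $2\langle z,u_{T_j}\rangle\ge \|z\|^2/D_j+D_j(1-1/\lambda^{*2})$.

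To combine these inequalities we normalize by dividing by $D_j$ first, which gives
\[
2\langle z,u_{T_j}\rangle/D_j\ge\|z\|^2/D_j^2+(1-\lambda^{*-2}).
\]
With $s_j=1/D_j$ this reads $(1-\lambda^{*-2})\le 2s_j\langle z,u_j\rangle-s_j^2\|z\|^2\le\langle z,u_j\rangle^2/\|z\|^2$. That pointwise bound alone is not enough, so instead we use the sum of the un-normalized inequalities with the uniform weights from Step 2 and rescale.

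A cleaner route is to apply Step 2 with weights adapted to the radii, choosing the sampling distribution proportional to $w_T/D_T$. The cancellation $\sum w_Tu_T=0$ makes the weighted mean of the vectors $D_ju_{T_j}$ match, so averaging gives $2\langle z,\bar u\rangle\ge \|z\|^2\overline{(1/D)}+\overline{D}(1-\lambda^{*-2})$. Using $2\langle z,\bar u\rangle\le 2\|z\|\|\bar u\|$ and AM-GM, $\|z\|^2a+b\ge2\|z\|\sqrt{ab}$, we get
\[
\|\bar u\|^2\ge \overline{(1/D)}\,\overline{D}\,(1-\lambda^{*-2})\ge 1-\lambda^{*-2},
\]
where the last step is Cauchy--Schwarz/Jensen. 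Combined with $\|\bar u\|^2\le1/\beta$ this gives $1-\lambda^{*-2}\le1/\beta$, i.e.\ $\lambda^*\le\sqrt{\beta/(\beta-1)}$.

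The delicate bookkeeping is making the weighting in Step 2 compatible with the one used in Step 3, so that the expectation identity still yields $\|\bar u\|^2\le 1/\beta$. I would first try the equal-radius case to fix the weights, and then handle general radii by choosing the sampling law so that both averages coincide.
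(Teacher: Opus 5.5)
Your argument is the paper's, and it closes. It does not close through the reweighting you propose at the end, though.

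\textbf{Matching the paper.} Your Steps 1--2 are the paper's Steps 2--3: balance of the tight unit directions, then averaging over i.i.d.\ $\beta$-tuples drawn from the balancing weights. The chain you display in Step 3 is the paper's Step 4 in other notation:
\begin{itemize}
\item dividing the $j$-th constraint by $D_j$ and averaging uniformly over positions is the paper's weighting $\mu_T\propto\theta_T/R_T$;
\item your Cauchy--Schwarz plus AM--GM step is the paper's discarding of $\|y-\sum_T\mu_TC_T\|^2\ge0$;
\item $\overline{(1/D)}\,\overline{D}=PQ\ge1$ is its final Cauchy--Schwarz step.
\end{itemize}
So $\|\bar u\|^2\ge PQ(1-\lambda^{*-2})$ is a rearrangement of the paper's $(\lambda^*)^2(1-\|z\|^2/(PQ))\le1$. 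If you keep $w$ as the sampling law, this already finishes the proof. Note that $z\neq0$, since $y$ lies in the uninflated tight balls and $x^*$ does not.

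\textbf{What is wrong.} The problem is the weighting remark around that display, which is also the source of the ``delicate bookkeeping'' you leave open. The displayed chain holds for \emph{every} $\beta$-tuple of tight balls, because it only averages the $\beta$ individual constraints uniformly. Unequal radii are absorbed entirely by $\overline{D}\cdot\overline{(1/D)}\ge1$. So nothing needs to be made compatible, and no equal-radius detour is needed.

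Switching to $\tilde w_T\propto w_T/D_T$, as you suggest, would actually break Step 2. The expectation becomes $\frac1\beta+\frac{\beta-1}{\beta}\|\sum_T\tilde w_Tu_T\|^2$, and the cross term does not vanish in general. For example, three tight directions at $120^\circ$ force $w$ to be uniform, and then $\sum_T\tilde w_Tu_T\neq0$ as soon as the $D_T$ differ. So the averaging no longer certifies $\|\bar u\|^2\le1/\beta$.

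\textbf{Where $\tilde w$ does belong.} It fits the other pairing. Here $\sum_T\tilde w_T(C_T-x^*)=0$, i.e.\ $x^*$ is the $\tilde w$-centroid of the tight centers; this is the relation $\sum_Ta_TC_T/R_T^2=0$ that the paper derives before passing to unit directions. Under i.i.d.\ sampling from $\tilde w$, the cross terms vanish, so both sides of $\|\frac1\beta\sum_jD_ju_{T_j}\|^2\le\frac1{\beta^2}\sum_jD_j^2$ have the same expectation, and some tuple satisfies it. Averaging your \emph{un-normalized} inequalities $2D_j\langle z,u_{T_j}\rangle\ge\|z\|^2+D_j^2(1-\lambda^{*-2})$ over that tuple then gives $1-\lambda^{*-2}\le1/\beta$ directly, with no AM--HM step.

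Conversely, the un-normalized sum you first mention is not controlled under $w$. Either consistent pairing proves the theorem; the mixture of $\tilde w$ with the normalized $\bar u$ that you wrote does not.
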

   
\begin{proof}
The proof consists of four main steps. 

\emph{(i)} Let $\lambda^*$ be the smallest inflation factor for which the inflated balls have a common point $x^*$. It suffices to show $\lambda^*\le\sqrt{\beta/(\beta-1)}$, since any inflation factor which is at least $\lambda^*$ provides a common intersection of the balls.

\emph{(ii)} We call a ball \emph{active} if $x^*$ lies on the boundary of the inflated ball and we call the unit vector pointing from $x^*$ towards the center of an inflated active ball its \emph{active direction}. We show that the active directions are balanced and cancel out. Otherwise we could move $x^*$ in one direction and reduce $\lambda^*$. 

\emph{(iii)} The cancellation in $(ii)$ may involve all active balls. However, the assumption only provides a nonempty intersection for every $\beta$ balls at a time.
By averaging over all $\beta$-tuples of active directions, we show that there are at most $\beta$ such balls that are \emph{approximately} balanced, i.e. their average direction, instead of being zero, has norm at most $\sqrt{1/\beta}$.

\emph{(iv)} Since every subset of $\beta$ balls has a common intersection point before inflation, the balls which are approximately balanced also have a common intersection point $y$. Then, $y$ lies within $R_j$ from each of their centers, while $x^*$ lies at distance $\lambda^*R_j$ from the centers. Combining the two gives the bound $\lambda^*\le\sqrt{\beta/(\beta-1)}$.

\paragraph{Step 1: The minimal inflation factor.}
Note that, if some ball has radius zero, then pairwise intersection implies that its center belongs to every ball. Hence, assume that $R_T>0$ for every $T\in\mathcal{T}$. 

Let $ \lambda^* = \min_{x\in\mathbb{R}^d} \max_{T\in\mathcal{T}} \frac{\|x-C_j\|}{R_j}$, and let $x^*$ be a minimizer. 
   Thus, $\lambda^* \geq 1$ is the smallest factor by which all balls must be enlarged to obtain a common intersection and $x^*$ is the point contained in all balls inflated by $\lambda^*$. It
suffices to prove $\lambda^*\le\sqrt{\beta/(\beta-1)}$. If $\lambda^*\le1$ the claim is immediate, so assume $\lambda^*>1$.
   Without the loss of generality, translate the coordinate system such that $x^*=0$. 

\paragraph{Step 2: The active balls are balanced.}
   
   Let $ A = \{T\in\mathcal{T}:\|C_T\|=\lambda^* R_T\}$ be the set of active balls, for which the first intersection point lies on the boundary of the inflated ball. We refer to a ball as inactive, if $\|C_T\|<\lambda^* R_T$.
   We claim that $ 0\in \operatorname{conv} \left\{ \frac{C_T}{R_T^2}:T\in A \right\}$. Suppose not. Then, by the separating hyperplane theorem~\cite{vandenberghe2004convex}, there exists a vector $h$ such that $ \left\langle h,\frac{C_T}{R_T^2} \right\rangle >0$ for every $T\in A$. Let
$f_T(x)=\|x-C_T\|^{2}/R_T^{2}$ and consider moving from $x^*=0$ to $\varepsilon h$ for $\varepsilon>0$. Expanding this term gives
$$f_T(\varepsilon h) =\frac{\|\varepsilon h-C_T\|^{2}}{R_T^{2}}=\frac{\|C_T\|^{2}}{R_T^{2}} -2\varepsilon\left\langle h,\frac{C_T}{R_T^{2}}\right\rangle+\varepsilon^{2}\frac{\|h\|^{2}}{R_T^{2}}.$$
For $T\in A$ the first term is equal to $(\lambda^*)^2$ and the second term is strictly negative, which implies that $f_T(\varepsilon h) < (\lambda^*)^2$ for all sufficiently small $\varepsilon >0$. Additionally, every inactive ball satisfies $f_T(0)<(\lambda^*)^2$.
Since $f_T$ is continuous and the inequality at $x^*$ is strict for inactive balls, $f_T(\varepsilon h)$ also stays below $(\lambda^*)^2$ for all sufficiently small $\varepsilon$.
For each ball individually, the above holds once $\varepsilon$ is small enough, but the threshold depends on the ball. Since $\mathcal{T}$ is finite, we can take the smallest of these thresholds, which is still positive. For such an $\varepsilon$, every ball satisfies $\|\varepsilon h -C_T\|<\lambda^* R_T$, so the point $\varepsilon h$ lies in all balls inflated by a factor strictly below $\lambda^*$, contradicting the minimality of $\lambda^*$.
 This shows that $x^*$ could be moved slightly in that direction and decrease the distances of all active balls.
 Therefore, there exist coefficients $a_T\ge0$, $T\in A$, such that
$\sum_{T\in A}a_T=1$ and $\sum_{T\in A}a_T\frac{C_T}{R_T^2}=0$, meaning that the directions of all active balls cancel out. 

 For $T\in A$, define the unit direction  $u_T = \frac{C_T}{\lambda^* R_T}$. Since $T$ is an active ball, $\|u_T\|=1$. Substituting $C_T=\lambda^* R_T u_T$ into the balance condition $\sum_{T\in A}a_T\frac{C_T}{R_T^2}=0$ gives 
 $$\sum_{T\in A}a_T\frac{\lambda^* u_T}{R_T}= \lambda^* \sum_{T\in A}a_T\frac{u_T}{R_T}=0.$$ Since $\lambda^*>1$, we obtain $\sum_{T\in A}\frac{a_T}{R_T}u_T=0$. After normalizing the coefficients $a_T/R_T$, we obtain weights $p_T\ge0$ satisfying $ \sum_{T\in A}p_T=1$ and $\sum_{T\in A}p_Tu_T=0$.

\paragraph{Step 3: Choosing at most $\beta$ balanced directions}
In Step 2 we showed that the directions of active balls cancel out exactly, but the cancellation involves the entire set $A$. Our assumption is however that every $\beta$ balls have a common intersection point, so we can only exploit the option of $\beta$ balls canceling out. We therefore trade exact cancellation over many directions for approximate cancellation over few, and show that some $\beta$ of the active directions already approximately cancel.

Consider all ordered $\beta$-tuples consisting of elements of the set of active balls $\boldsymbol{T}=(T_1,\ldots,T_\beta)\in A^\beta$. 
Assign to each tuple the weight $w_{\boldsymbol{T}}= \prod_{j=1}^{\beta}p_{T_j}$.
These weights are nonnegative and summing the product over all tuples gives the product of $\beta$ sums of $p_T$ over A, each of which is 1 by Step 2. Hence,
$\sum_{\boldsymbol{T}\in A^\beta}w_{\boldsymbol{T}} =\left(\sum_{T\in A}p_T\right)^\beta=1$.

For each tuple, consider the squared norm of its average direction.
We compute the weighted average of these squared norms:
$\mathcal{D}
= \sum_{\boldsymbol{T}\in A^\beta} w_{\boldsymbol{T}}
\left\| \frac{1}{\beta} \sum_{j=1}^{\beta}u_{T_j}
\right\|^2$.

Expanding the squared norm into inner products, $\mathcal{D}$ splits into $\beta$ diagonal terms, one for each position $j$, and $\beta(\beta-1)$ cross terms, one for each ordered pair of distinct positions $j$ and $k$, all divided by $\beta^2$:
$$
\mathcal{D}
= \frac{1}{\beta^2}
\left( \sum_{j=1}^{\beta}
\sum_{\boldsymbol{T}\in A^\beta}
w_{\boldsymbol{T}}\|u_{T_j}\|^2+ \sum_{\substack{j,k=1\\j\neq k}}^{\beta}
\sum_{\boldsymbol{T}\in A^\beta} w_{\boldsymbol{T}}
\langle u_{T_j},u_{T_k}\rangle \right).
$$
For each position $j$, the corresponding diagonal term satisfies
$\sum_{\boldsymbol{T}\in A^\beta}
w_{\boldsymbol{T}}\|u_{T_j}\|^2
=\sum_{T\in A}p_T\|u_T\|^2 = 1$, 
because $u_T$ is a unit vector. There are $\beta$ such terms, so together they contribute to $\frac{\beta}{\beta^2}=\frac{1}{\beta}$.
Moreover, the cross terms all vanish. Fix $j\neq k$. Since $u_{T_j}$ and $u_{T_k}$ do not involve the remaining $\beta -2$ positions, summing those out gives $ \sum_{T\in A}p_T=1$. This leaves the double sum over $j$ and $k$, which equals 0 by Step 2: $\sum_{\boldsymbol{T}\in A^\beta}
w_{\boldsymbol{T}}
\langle u_{T_j},u_{T_k}\rangle = \left\langle
\sum_{T\in A}p_T u_T, \sum_{T'\in A}p_T' u_T' \right\rangle  = 0$.

Consequently, $\mathcal{D}= \frac{\beta}{\beta^2} = \frac{1}{\beta}$. Since $\mathcal{D}$ is a weighted average with nonnegative weights summing to one, at least one tuple exists with a value not larger than the average. Fix such a tuple $(T_1,\ldots,T_\beta)\in A^\beta$, so that the squared norm of its average direction is at most $1/\beta$. Note that its indices are not necessarily distinct.

So, it remains to remove the repetitions from this tuple. Let $J\subseteq A$ be the set of distinct indices. Since the tuple has $\beta$ entries, then $|J|\le\beta$.
For each $T\in J$, let $\theta_T$ be the fraction of $\beta$ positions in which $T$ occurs. Then $\theta_T\ge0, \sum_{T\in J}\theta_T=1$, and defining $z$ as the sum of $\theta_T u_T$ over $T\in J$, i.e. $z=\sum_{T\in J}\theta_T u_T$ gives exactly the average direction of the tuple: $\|z\|^2\le\frac1\beta$. We have thus found at most $\beta$ active balls whose directions are balanced up to $1/\beta$.

\paragraph{Step 4: Bounding the inflation factor via common point}
Now we have two pieces of information about the balls in $J$. On the one hand, they are at most $\beta$ many, so by assumption they have a common point $y$ before inflation: their centers all lie within their own radii of a single location. On the other hand, they are active, so their centers lie at distance exactly $\lambda^* R_T$ from $x^*$ and by Step 3 they are spread around $x^*$ in nearly balanced directions. To compare the two, we estimate the weighted average squared distance from $y$ to the selected centers, i.e. 
$\sum_{T\in J}\mu_T\|y-C_T\|^2$.

Fix $y\in\bigcap_{T\in J}B(C_T,R_T)$ as a common point of balls from set $J$ before inflation. 
Thus, $\|y-C_T\|\le R_T$ for every $T\in J$.

Note that the balls in $J$ can have different radii. In order to normalize this, we set $P=\sum_{T\in J}\frac{\theta_T}{R_T}$ and $Q= \sum_{T\in J}\theta_T R_T$, 
and define the weights $\mu_T = \frac{\theta_T/R_T}{P}$.
The coefficients $\mu_T$ are nonnegative and sum to one. This way, reweighting by $1/R_T$ gives exactly $\sum_{T\in J}\mu_T C_T = \sum_{T\in J}\frac{\theta_T /R_T}{P}\lambda^* R_T u_T = \frac{\lambda^* z}{P}$, where $z$ is the average direction computed in Step 3.

Next we upper bound the term $\sum_{T\in J}\mu_T\|y-C_T\|^2$. Since $y\in B(C_T,R_T)$ for every $T\in J$,
\begin{align}
\sum_{T\in J}\mu_T\|y-C_T\|^2
&\le \sum_{T\in J}\mu_T R_T^2 = \frac{Q}{P}.
\label{eq:intersection-upper-bound}
\end{align}

Next, we focus on the lower bound on $\sum_{T\in J}\mu_T\|y-C_T\|^2$. Expanding the square gives:
\begin{align*}
\sum_{T\in J}\mu_T \|y-C_T\|^2
&=\left\| y-\sum_{T\in J}\mu_T C_T
\right\|^2 + \sum_{T\in J}\mu_T\|C_T\|^2 -\left\|
\sum_{T\in J}\mu_T C_T \right\|^2\\
&\ge
\sum_{T\in J}\mu_T\|C_T\|^2 -\left\|
\sum_{T\in J}\mu_T C_T \right\|^2.
\end{align*}

Since $C_T=\lambda^* R_T u_T$ for every $T\in J$, then we obtain
$\sum_{T\in J}\mu_T\|C_T\|^2
= (\lambda^*)^2\frac{Q}{P}$ for the first term and 
$\sum_{T\in J}\mu_T C_T=
\frac{\lambda^*}{P}
\sum_{T\in J}\theta_T u_T =\frac{\lambda^*}{P}z$ for the second.

Combining this with Inequality~\eqref{eq:intersection-upper-bound} gives
$$(\lambda^*)^2
\left(
\frac{Q}{P}
- \frac{\|z\|^2}{P^2}
\right) \leq \sum_{T\in J}\mu_T\|y-C_T\|^2 
\le
\frac{Q}{P}.$$
Since $Q/P >0$, this is equivalent to:
\begin{align} \label{eq:lambdacauchy}
(\lambda^*)^2
\left(
1-\frac{\|z\|^2}{PQ}
\right)
\leq 1.
\end{align}

Using Cauchy--Schwarz on $PQ$, we obtain
$$
PQ= \left( \sum_{T\in J}\frac{\theta_T}{R_T}
\right) \left( \sum_{T\in J}\theta_T R_T
\right) \geq \left( \sum_{T\in J} \theta_T \right)^2 =1.
$$
Therefore, $\frac{\|z\|^2}{PQ}
\leq \|z\|^2
\leq \frac1\beta$, and we can use this to simplify Inequality~(\ref{eq:lambdacauchy}).
Since $\beta \geq 2$, it follows that
$$
(\lambda^*)^2
\left(1-\frac1\beta\right)
\le1,
$$
and hence
$$
\lambda^*
\le
\sqrt{\frac{\beta}{\beta-1}}.
$$
Thus, the balls enlarged by the factor
$\sqrt{\frac{\beta}{\beta-1}}$ have a common intersection.

\end{proof}

\subsection{Multidimensional Approximate Agreement with Optimal Resilience $n>3t$} \label{sec:resilience3t}
%In Section~\ref{sec:resilience-n>d+1t}, we assumed that $n>(d+1)t$, which guarantees by Helly's theorem that the candidate \MEBS intersect and that the \MEB-safe area is non-empty. 
%We now improve resilience and remove the dependence on the dimension $d$, and consider the optimal resilience case $n>3t$, where the candidate \MEBS do not necessarily intersect.
We now combine the ball-inflation theorem from Section~\ref{sec:ball-inflation} with Adaptive \MEB Contraction. This gives a synchronous algorithm with dimension-free resilience $n>3t$.

Our approach is to run the Adaptive \MEB Contraction Algorithm on inflated candidate balls. In particular, each correct process $i$ in round $r$ replaces $\MEB(T)=B(C_T, R_T)$ by $B(C_T, \lambda R_T)$ and computes the inflated \MEB-safe area $\safeMEB_{i,\lambda}^{(r)}
=\bigcap_{\substack{T\subseteq P_i^{(r)}\\ |T|\geq n-t}}
B(C_T, \lambda R_T)$. In the following, we establish two things: first, the candidate \MEBS inflated by factor $\lambda$ have a non-empty intersection, so that the \MEB-safe area $\safeMEB_{i,\lambda}^{(r)}$ is non-empty as well. This follows from Theorem~\ref{thm:beta-wise-ball-inflation}, since $n>3t$ implies that every three candidate \MEBS intersect, so $\beta=3$ and inflation factor $\lambda$ is then at most $\lambda \leq \sqrt{3/2}$. Second, we need to ensure that the contraction analysis of the Adaptive \MEB Contraction Algorithm proved in Lemma~\ref{lem:new-contraction} still holds for inflated balls. In fact, this holds, however increasing the balls weakens one inequality in Lemma~\ref{lem:new-contraction}, where the \MEB-safe area is now contained in the smallest enclosing ball around correct processes inflated by factor $\lambda$. In this section, we also show that Adaptive \MEB Contraction Algorithm on inflated candidate balls satisfies  $\lambda\sqrt{\frac{1+\alpha^{2}}{1+\alpha^{2}-\lambda^{2}\alpha^{2}}}$-\MEB-validity.

\paragraph{Adaptive \MEB Contraction Modification} 
We use the same Adaptive \MEB Contraction as shown in Algorithm~\ref{alg:adaptive-contraction}, except that we replace Line 2 by the inflated \MEB-safe area:
$$ \safeMEB_{i,\lambda}^{(r)} = \bigcap_{\substack{T\subseteq P_i^{(r)}\\ |T|\geq n-t}} B(C_T,\lambda R_T)
$$
where $B(C_T,\lambda R_T)$ denotes the smallest enclosing ball around subset $T$, with its radius increased by factor $\lambda$. Two changes are made at once. First the candidate \MEBS are inflated, which provides a common intersection point if $n>3t$. Second, the candidate subsets are of size at least $n-t$, instead of exactly $n-t$. This allows us to relate to the correct \MEB inflated by $\lambda$, otherwise the containment of \MEB-safe area $\safeMEB_{i,\lambda}^{(r)}$ in the correct \MEB inflated by $\lambda$ is not guaranteed. Since every correct process receives all correct values and $|H^{(r)}|\geq n-t$, then the set $H^{(r)}$ is also a candidate subset for every correct process, so its inflated ball $ B(C_r, \lambda R_r)$ appears in the intersection. Hence, locally computed \MEB-safe areas are inside the correct \MEB, i.e. $\safeMEB_{i,\lambda}^{(r)} \subseteq B(C_r, \lambda R_r)$. 

All subsequent steps remain the same: each process initializes the diameter pair of $\safeMEB_{i,\lambda}^{(r)}$ and computes $c_i$ and $r_i$. Then, it finds the farthest point of $\safeMEB_{i,\lambda}^{(r)}$ from $c_i$ and if necessary, adds it to the selected set and repeats the process. The Adaptive \MEB Contraction Algorithm outputs $c_i$ and terminates, once the stopping criterion is satisfied. 

Note that, the inflation also restricts the admissible range of parameter $\alpha$. Namely, as we show in Lemma~\ref{lem:contraction-inflated}, the contraction requires $\alpha < \sqrt{\beta -1}$, which is $\alpha<\sqrt{2}$ for $n>3t$. 

We now apply Theorem~\ref{thm:beta-wise-ball-inflation} to the candidate \MEBS to ensure the local \MEB-safe areas have a non-empty intersection.

\begin{lemma}\label{lem:inflation-gives-intersection}
Assume the synchronous setting with $t\ge1$ and $n>3t$, and let $\beta=\left\lfloor\frac{n-1}{t}\right\rfloor$ and
$\lambda=\sqrt{\frac{\beta}{\beta-1}}$.
For every subset $T\subseteq[n]$ of size at least $n-t$ denote the smallest enclosing ball by $\MEB(T)=B(C_T,R_T)$, and let each correct process $i$ compute the inflated local safe area $\safeMEB_{i,\lambda}^{(r)}
  =\bigcap_{\substack{T\subseteq P_i^{(r)}\\ |T|\geq n-t}} B(C_T,\lambda R_T)$. Then in every round $r$ there exists a common point $p_r$ with $p_r\in \safeMEB_{i,\lambda}^{(r)}$ for every correct process $i$. Moreover, $n>3t$ implies $\beta\ge3$, and hence $\lambda\leq \sqrt{3/2} <\sqrt2$.
\end{lemma}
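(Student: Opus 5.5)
The plan is to apply Theorem~\ref{thm:beta-wise-ball-inflation} to one global family of balls, namely all candidate balls used by any correct process in round $r$. Let $\mathcal{T}$ be the set of all subsets $T\subseteq[n]$ with $|T|\ge n-t$ such that $T\subseteq P_i^{(r)}$ for some correct $i$. By the consistency property of consistent broadcast, each sender $j\in T$ is associated with a single value. So each $T$ determines a well-defined finite point set with ball $B(C_T,R_T)$, independent of which correct process considers it. The family $\mathcal{B}=\{B(C_T,R_T):T\in\mathcal{T}\}$ is finite, and every local safe area $\safeMEB_{i,\lambda}^{(r)}$ is an intersection of a subfamily of the inflated family. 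Hence any point $p_r$ in the intersection of the whole inflated family lies in every $\safeMEB_{i,\lambda}^{(r)}$.

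\textbf{Step 1: every $\beta$ balls of $\mathcal{B}$ intersect.} I would use the counting argument from the proof of Lemma~\ref{lem:new-contraction}. Take $T_1,\dots,T_\beta\in\mathcal{T}$. Each $T_k$ has at least $n-t$ senders, so it omits at most $t$ of the $n$ processes. Together the $\beta$ sets omit at most $\beta t\le \frac{n-1}{t}\,t=n-1<n$ processes. Hence some sender $j$ belongs to all $T_k$. Its value, which is the same in every set by consistency, lies in every $\MEB(T_k)$. Subfamilies of fewer than $\beta$ balls are covered by the same argument, since they omit even fewer processes. To apply Theorem~\ref{thm:beta-wise-ball-inflation} we also need $\beta\ge 2$, which follows from the last step below.

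\textbf{Step 2: conclude.} Theorem~\ref{thm:beta-wise-ball-inflation} gives
\[
\bigcap_{T\in\mathcal{T}}B\Bigl(C_T,\sqrt{\tfrac{\beta}{\beta-1}}\,R_T\Bigr)\neq\emptyset .
\]
Choose $p_r$ in this intersection. By the observation above, $p_r\in\safeMEB_{i,\lambda}^{(r)}$ for every correct process $i$.

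\textbf{Step 3: bound $\beta$ and $\lambda$.} From $n>3t$ and integrality we get $n\ge 3t+1$, so $(n-1)/t\ge 3$ and $\beta=\lfloor (n-1)/t\rfloor\ge 3$. The map $\beta\mapsto \beta/(\beta-1)$ is decreasing, so $\lambda\le\sqrt{3/2}<\sqrt2$.

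The main subtle point is Step 1. One must treat the candidate sets as sets of senders rather than sets of values, so that the counting argument applies. One must also use consistency so that balls computed by different correct processes from the same sender set coincide. Sets containing Byzantine senders then cause no trouble, because a Byzantine sender's value is identical in every set that includes it.
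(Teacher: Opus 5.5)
Your proposal is correct and follows essentially the same route as the paper. It takes the global family of all candidate balls used by correct processes in round $r$, shows that any at most $\beta$ of them share a common sender's value via the omission-counting argument plus consistent broadcast, applies Theorem~\ref{thm:beta-wise-ball-inflation}, and then bounds $\beta\ge 3$ from $n\ge 3t+1$. Your explicit points, namely treating candidate sets as sender sets, the finiteness of the family, and checking the hypothesis $\beta\ge 2$, are small clarifications that the paper leaves implicit.
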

\begin{proof}
    Fix a round $r$ and let $\mathcal{B}_r=\bigcup_{i\in H}\bigl\{\MEB(T): T\subseteq P_i^{(r)},\ |T|\geq n-t \bigr\}$ denote the family of all candidate \MEBS computed by any correct process in round $r$.
    We first show that every subfamily of at most $\beta$ balls from $\mathcal{B}_r$ has a non-empty intersection. Let $\MEB(T_1),\dots,\MEB(T_\ell)\in\mathcal{B}_r$ with $\ell\le\beta$. Each $T_j$ has size at least $n-t$ and therefore excludes at most $t$ processes, so the $\ell$ sets together exclude at most $\ell t$ processes. Since $\ell \leq \beta = \lfloor\frac{n-1}{t}\rfloor$, we have $\ell t \leq n-1$ and hence 
    $$\Bigl|\bigcap_{j=1}^{\ell}T_j\Bigr| = n-\Bigl|\bigcup_{j=1}^{\ell}\bigl([n]\setminus T_j\bigr)\Bigr| \geq n-\ell  t \geq 1.$$
    Thus, the sets $T_1,\dots,T_\ell$ have a common process. By consistent broadcast, this process contributes the same value to every set $T_j$, and hence this value lies in $\MEB(T_j)$ for every $j$. This implies that every $\ell$ balls have a common intersection point. 

    By Theorem~\ref{thm:beta-wise-ball-inflation}, inflating the balls by factor $\lambda = \sqrt{\frac{\beta}{\beta-1}}$ gives a common point $p_r$ contained in $B(C_T,\lambda R_T)$, for every $\MEB(T)\in \mathcal{B}_r$. Since each $\safeMEB_{i,\lambda}^{(r)}$ is an intersection of a subfamily of these inflated balls, we obtain $p_r\in \safeMEB_{i,\lambda}^{(r)}$ for every correct process $i$.

    Finally, $n>3t$ gives $\beta\ge3$, so the factor $\lambda$ is bounded by $\lambda\leq \sqrt{3/2}<\sqrt{2}$.
\end{proof}

We proved that \MEBS inflated by $\lambda \leq \sqrt{\frac{\beta}{\beta -1}}$ have a common intersection point. Next, we show that Adaptive \MEB Contraction Algorithm modified for inflated candidate balls has a contraction rate $\frac{\lambda \alpha}{\sqrt{1+\alpha^2}}$ and solves multidimensional approximate agreement.

\begin{lemma}\label{lem:contraction-inflated}
In the synchronous setting with $n>3t$, the contraction rate of the Inflated Adaptive \MEB Contraction Algorithm is $R_{r+1} \leq \frac{\lambda \alpha}{\sqrt{1+\alpha^2}} R_r$ for $1\leq \alpha < \sqrt{\beta -1}$. In particular, $\alpha<\sqrt{2}$ for $n>3t$.
\end{lemma}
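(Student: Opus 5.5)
I will follow the proof of Lemma~\ref{lem:new-contraction} essentially verbatim. The two inputs it needs are a common point of all local safe areas and containment of each safe area in a ball around $C_r$. Both now come in inflated form.

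First, Lemma~\ref{lem:inflation-gives-intersection} supplies a common point $p_r\in\safeMEB_{i,\lambda}^{(r)}$ for every correct $i$. Second, $H^{(r)}$ is itself a candidate subset of every local view (consistent broadcast, $|H^{(r)}|\ge n-t$). Hence $\safeMEB_{i,\lambda}^{(r)}\subseteq B(C_r,\lambda R_r)$, and the selected set $S_i$ also lies in $B(C_r,\lambda R_r)$.

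Since $B(c_i,r_i)$ is the minimum enclosing ball of $S_i$, the \MEB containment inequality now reads
\begin{align*}
\|c_i-C_r\|^2+r_i^2\le \lambda^2R_r^2 .
\end{align*}
The justification is the standard one. The minimax center $c_i$ lies in the convex hull of its farthest points. Hence some point $s\in S_i$ with $\|s-c_i\|=r_i$ satisfies $\langle s-c_i,c_i-C_r\rangle\ge 0$, which gives $\|s-C_r\|^2\ge r_i^2+\|c_i-C_r\|^2$.

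The stopping rule gives $\|p_r-c_i\|\le\alpha r_i$, because $p_r$ lies in the safe area. Combining the two yields
\begin{align*}
\|c_i-C_r\|^2+\tfrac{1}{\alpha^2}\|c_i-p_r\|^2\le\lambda^2R_r^2 .
\end{align*}
I then complete the square around the same weighted point $a_r=\frac{C_r+\alpha^{-2}p_r}{1+\alpha^{-2}}$ and drop the nonnegative term $\frac{1}{1+\alpha^2}\|C_r-p_r\|^2$. This gives $\|c_i-a_r\|\le\frac{\lambda\alpha}{\sqrt{1+\alpha^2}}R_r$ for every correct $i$. Since $a_r$ does not depend on $i$, all new correct values lie in one ball of that radius, so $R_{r+1}\le\frac{\lambda\alpha}{\sqrt{1+\alpha^2}}R_r$.

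The remaining point, and the only place needing care, is the admissible range of $\alpha$: the bound is a contraction only if $\lambda^2\alpha^2<1+\alpha^2$. With $\lambda^2=\beta/(\beta-1)$, this is equivalent to $\alpha^2\bigl(\beta/(\beta-1)-1\bigr)<1$, that is, $\alpha^2<\beta-1$. Hence the hypothesis $1\le\alpha<\sqrt{\beta-1}$ is exactly what makes the factor strictly below $1$. For $n>3t$ we have $\beta\ge3$ by Lemma~\ref{lem:inflation-gives-intersection}, so $\alpha<\sqrt2$ always suffices.

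I also need the algorithm to terminate and produce the center $c_i$ with these properties. The stopping test is reachable with $\alpha\ge1$, because the loop converges to the minimum enclosing ball of the compact safe area, exactly as in the non-inflated case. I expect this to be routine rather than an obstacle.
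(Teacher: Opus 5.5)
Your proposal is correct and follows the paper's proof essentially step for step. It uses the common point $p_r$ from Lemma~\ref{lem:inflation-gives-intersection}, the containment $\safeMEB_{i,\lambda}^{(r)}\subseteq B(C_r,\lambda R_r)$ because $H^{(r)}$ is itself a candidate subset, the \MEB containment and stopping inequalities with $\lambda R_r$ in place of $R_r$, completing the square around $a_r$, and the range $\alpha^2<1/(\lambda^2-1)=\beta-1$; your convex-hull justification of $\|c_i-C_r\|^2+r_i^2\le\lambda^2R_r^2$ supplies a detail the paper only asserts. Your closing termination aside is not needed, since the paper only argues about the output once the loop has stopped, and for $\alpha=1$ convergence of the loop does not by itself guarantee that the test $\|q_i-c_i\|\le r_i$ is met after finitely many iterations, so there you should instead use the paper's observation that the output is exactly the center of $\MEB(\safeMEB_{i,\lambda}^{(r)})$.
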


\begin{proof}
    Consider round $r$ and let $\MEB(H^{(r)})= B(C_r,R_r)$. By Lemma~\ref{lem:inflation-gives-intersection}, there exists a point $p_r$, such that $p_r\in \safeMEB_{i,\lambda}^{(r)}$ for every correct process $i$.

We first show $\safeMEB_{i,\lambda}^{(r)} \subseteq B(C_r,\lambda R_r)$. Due to consistent broadcast, every correct process receives all correct values, hence $H^{(r)} \subseteq P_i^{(r)}$ and $|H^{(r)}| \geq n-t$. The Inflated Adaptive \MEB Contraction considers all subsets of size at least $n-t$, then the set of correct values $H^{(r)}$ is a candidate of every correct process. Its inflated ball $B(C_r,\lambda R_r)$ appears in the intersection defining $\safeMEB_{i,\lambda}^{(r)}$, hence the \MEB-safe area is contained in $B(C_r,\lambda R_r)$. 

Now, fix a correct process $i$, and let $S_i$ be the selected set, $c_i$ the center of $S_i$ and $r_i$ the radius computed by the Inflated Adaptive \MEB Contraction Algorithm. Since $S_i\subseteq \safeMEB_{i,\lambda}^{(r)}\subseteq B(C_r,\lambda R_r)$ and $B(c_i,r_i)$ is the minimum enclosing ball around $S_i$, we get:
$$ \|c_i-C_r\|^{2}+r_i^{2} \leq \lambda^{2}R_r^{2}.$$

When the Adaptive \MEB Contraction Algorithm terminates, every point of $\safeMEB_{i,\lambda}^{(r)}$ lies within distance $\alpha r_i$ from $c_i$. This also holds for the common point $p_r$ and we get $\|p_r-c_i\|\leq \alpha r_i$, hence $r_i^{2}\geq \frac{1}{\alpha^{2}} \|c_i-p_r\|^{2}$. These are Inequalities~\eqref{eq:contraction1} and~\eqref{eq:contraction2} from Lemma~\ref{lem:new-contraction}, where $R_r$ is replaced by $\lambda R_r$. We apply the remaining steps of Lemma~\ref{lem:new-contraction} verbatim with $\lambda R_r$ instead of $R_r$ and obtain 
$$\|c_i-a_r\|\le\frac{\lambda\alpha}{\sqrt{1+\alpha^{2}}}R_r .$$
As the point $a_r$ only depends on $C_r$ and $p_r$, every correct output lies in the ball $B(a_r, \frac{\lambda \alpha}{\sqrt{1+\alpha^2}}R_r)$, centered in point $a_r$ with radius $\frac{\lambda \alpha}{\sqrt{1+\alpha^2}}R_r$.
Therefore, the minimum enclosing ball of the new correct values cannot have a larger radius than this ball, i.e. $R_{r+1}\leq \frac{\lambda \alpha}{\sqrt{1+\alpha^2}}R_r$. 

The factor $\frac{\lambda \alpha}{\sqrt{1+\alpha^2}} < 1$, when $\alpha < 1/\sqrt{\lambda^2-1} =\sqrt{\beta -1}$. For $n>3t$, this means that $1\leq \alpha <\sqrt{2}$. At $\alpha=1$ the contraction rate is $\frac{\lambda}{\sqrt2}\leq \frac{\sqrt{3}}{2} \approx 0.866$.
\end{proof}

\begin{theorem}
\label{thm:synch-inflation}
Assume a synchronous setting with $n>3t$ and let $\beta= \lfloor\frac{n-1}{t}\rfloor$, $\lambda= \sqrt{\frac{\beta}{\beta-1}}$ and $1\leq \alpha < \sqrt{\beta-1}$, then the Inflated Adaptive \MEB Contraction Algorithm solves multidimensional approximate agreement after $O\left(\frac{\log(R_{\max}/\varepsilon)}{\log(\sqrt{1+\alpha^{2}}/(\lambda\alpha))}\right)$ rounds and satisfies $\lambda\sqrt{\frac{1+\alpha^{2}}{1+\alpha^{2}-\lambda^{2}\alpha^{2}}}$-\MEB validity.
\end{theorem}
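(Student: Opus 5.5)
The plan is to mirror the proof of Theorem~\ref{thm:synch-no-inflation}, with $R_r$ replaced by $\lambda R_r$ wherever the containment $\safeMEB_{i,\lambda}^{(r)}\subseteq B(C_r,\lambda R_r)$ enters, while keeping the radius and center of $\MEB(H^{(r+1)})$ itself un-inflated.

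\emph{Agreement and termination.} By Lemma~\ref{lem:contraction-inflated}, $R_{r+1}\le \frac{\lambda\alpha}{\sqrt{1+\alpha^2}}R_r$, and this factor is strictly below $1$ precisely because $\alpha<\sqrt{\beta-1}=1/\sqrt{\lambda^2-1}$. Iterating gives $R_r\le \bigl(\frac{\lambda\alpha}{\sqrt{1+\alpha^2}}\bigr)^r R_{\max}$. Any two correct values are then within $2R_r\le\varepsilon$ after $\bigl\lceil \log(2R_{\max}/\varepsilon)/\log(\sqrt{1+\alpha^2}/(\lambda\alpha))\bigr\rceil$ rounds. Processes know this number in advance and can stop after it. Each round is well defined, since Lemma~\ref{lem:inflation-gives-intersection} guarantees that the inflated safe areas are nonempty and share the point $p_r$.

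\emph{Per-round drift inequality.} From the proof of Lemma~\ref{lem:contraction-inflated} I have, for every correct output $c_i$,
\[
\|c_i-C_r\|^2+\tfrac{1}{\alpha^2}\|c_i-p_r\|^2\le\lambda^2R_r^2 .
\]
With $a_r=\frac{C_r+\alpha^{-2}p_r}{1+\alpha^{-2}}$ and $D_r=\|C_r-p_r\|$, the same completing-the-square identity as in~\eqref{eq:contraction-ar} puts every output in $B(a_r,\sigma_r)$, where
\[
\sigma_r^2=\frac{\lambda^2R_r^2-D_r^2/(1+\alpha^2)}{1+\alpha^{-2}}\ \ge 0 .
\]
The MEB containment inequality then gives $\|C_{r+1}-a_r\|^2+R_{r+1}^2\le\sigma_r^2$. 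Next I bound $\|C_{r+1}-C_r\|+\gamma R_{r+1}$ by the triangle inequality through $a_r$, using $\|a_r-C_r\|=D_r/(1+\alpha^2)$ and Cauchy--Schwarz on the pair $(\|C_{r+1}-a_r\|,R_{r+1})$:
\[
\|C_{r+1}-C_r\|+\gamma R_{r+1}\le \frac{1}{\sqrt{1+\alpha^2}}\cdot\frac{D_r}{\sqrt{1+\alpha^2}}+\sqrt{\frac{(1+\gamma^2)\alpha^2}{1+\alpha^2}}\cdot\sqrt{\lambda^2R_r^2-\frac{D_r^2}{1+\alpha^2}} .
\]
A second Cauchy--Schwarz step eliminates $D_r$: the vector $\bigl(D_r/\sqrt{1+\alpha^2},\sqrt{\lambda^2R_r^2-D_r^2/(1+\alpha^2)}\bigr)$ has norm exactly $\lambda R_r$. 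This yields the bound
\[
\lambda R_r\sqrt{\frac{1+(1+\gamma^2)\alpha^2}{1+\alpha^2}} .
\]
Requiring this to be at most $\gamma R_r$ and squaring gives $\gamma^2(1+\alpha^2-\lambda^2\alpha^2)\ge\lambda^2(1+\alpha^2)$. This is solvable exactly because $1+\alpha^2-\lambda^2\alpha^2>0$, which is again the condition $\alpha<\sqrt{\beta-1}$. The smallest solution is the claimed
\[
\gamma=\lambda\sqrt{\frac{1+\alpha^2}{1+\alpha^2-\lambda^2\alpha^2}}\ \ge\lambda\ge 1 .
\]

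\emph{Validity.} With this $\gamma$, the inequality $\|C_{r+1}-C_r\|+\gamma R_{r+1}\le\gamma R_r$ holds for every round. Induction with the triangle inequality, exactly as before, gives $\|C_r-C_0\|+\gamma R_r\le\gamma R_0$. Since $\gamma\ge1$, every $x\in H^{(r)}$ satisfies $\|x-C_0\|\le R_r+\|C_r-C_0\|\le\gamma R_0$, which is $\gamma$-\MEB validity for all rounds, in particular the outputs.

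The main obstacle I expect is the drift step. One has to check that the Cauchy--Schwarz elimination of $D_r$ is legitimate, i.e.\ that the square-root term stays real; this follows from $\sigma_r^2\ge0$, which holds because outputs exist. One also has to note that inflation enters only through the $\lambda^2R_r^2$ bound and not through the MEB of the new correct values, which is what produces the denominator $1+\alpha^2-\lambda^2\alpha^2$.
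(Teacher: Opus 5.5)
Your proposal is correct and follows the paper's proof step for step. Agreement and termination come from the contraction factor of Lemma~\ref{lem:contraction-inflated}, and validity from the per-round inequality $\|C_{r+1}-C_r\|+\gamma R_{r+1}\le\gamma R_r$, derived via $a_r$, $\sigma_r$ (with $\lambda R_r$ in place of $R_r$), the MEB containment inequality, Cauchy--Schwarz, and the same induction as in Theorem~\ref{thm:synch-no-inflation}. The only difference is that you write out the second Cauchy--Schwarz step eliminating $\|C_r-p_r\|$, which the paper compresses into ``using Cauchy--Schwarz''; your resulting bound $\lambda R_r\sqrt{(1+(1+\gamma^2)\alpha^2)/(1+\alpha^2)}$ coincides with the paper's.
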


\begin{proof}
We begin by showing the convergence, similarly to the proof of Theorem~\ref{thm:synch-no-inflation}. In Lemma~\ref{lem:contraction-inflated}, we showed $R_{r+1}\leq \frac{\lambda \alpha}{\sqrt{1+\alpha^{2}}} R_r$. Over $r$ rounds, $R_r\leq \bigl(\frac{\lambda \alpha}{\sqrt{1+\alpha^{2}}}\bigr)^{r} R_0 \leq \bigl(\frac{\lambda \alpha}{\sqrt{1+\alpha^{2}}}\bigr)^{r} R_{\max}$, where $R_{\max}$ is the known upper bound on the initial correct \MEB radius. Since every
correct value lies in $\MEB(H^{(r)})=B(C_r,R_r)$, any two correct values are at distance at most $2 R_r \leq \varepsilon$. Hence, $2\bigl(\frac{\lambda \alpha}{\sqrt{1+\alpha^{2}}}\bigr)^{r} R_{\max} \leq \varepsilon$ and $\varepsilon$-agreement is reached after $\bigl \lceil \frac{\log(2R_{\max}/\varepsilon)}{\log(\frac{\sqrt{1+\alpha^2}}{\lambda \alpha})} \bigr \rceil$ rounds.

Next, we show that the Inflated Adaptive \MEB Contraction Algorithm satisfies $\lambda\sqrt{\frac{1+\alpha^{2}}{1+\alpha^{2}-\lambda^{2}\alpha^{2}}}$-\MEB validity. As in Theorem~\ref{thm:synch-no-inflation}, we bound the drift of the center and the contraction of the radius simultaneously, in the style of Inequality~\eqref{eq:validity-center-drift} for a suitable $\gamma$:
$$\|C_{r+1}-C_r\|+\gamma R_{r+1}\leq \gamma R_r.$$
In comparison to the Theorem~\ref{thm:synch-no-inflation}, only the value of $\gamma$ changes. 

From the proof of Lemma~\ref{lem:contraction-inflated}, for every correct output value $c_i$ we have $\|c_i-C_r\|^{2}+ \frac{1}{\alpha^{2}}\|c_i-p_r\|^{2} \leq  \lambda^{2}R_r^{2}$, which is the corresponding inequality from Theorem~\ref{thm:synch-no-inflation} with $\lambda R_r$ instead of $R_r$. Consequently, we define the point $a_r$ as in Theorem~\ref{thm:synch-no-inflation} and every correct output lies in $B(a_r, \sigma_r)$, where $\sigma_r^{2} 
  =\frac{\lambda^{2}R_r^{2} - \frac{1}{\alpha^{2}+1} \|C_r-p_r\|^{2}}{1+ \frac{1}{\alpha^{2}}}$. The \MEB containment inequality gives $\|C_{r+1}- a_r\|^{2} + R_{r+1}^{2}\leq \sigma_r^{2}$. 

  Using the consecutive steps and computations in the proof of Theorem~\ref{thm:synch-no-inflation} with $\lambda R_r$ instead of $R_r$, we obtain 
$$\|C_{r+1}-C_r\|+\gamma R_{r+1}\leq \lambda R_r \sqrt{\frac{1+\gamma^2 +\frac{1}{\alpha^2}}{1+\frac{1}{\alpha^2}}}.$$
Solving this inequality, the smallest possible choice of $\gamma$ is $\lambda \sqrt{\frac{(1+ \alpha^{2})}{1+\alpha^{2}-\lambda^{2} \alpha^{2}}}$. Note that $\gamma \geq 1$.

With this choice of $\gamma$, we obtain $\|C_{r+1}-C_r\|+\gamma R_{r+1}\leq \gamma R_r$ for every two consecutive rounds $r$ and $r+1$. Further, we can use the proof of Theorem~\ref{thm:synch-no-inflation} verbatim and bound the drift of the center of the \MEB from $C_0$ by induction to obtain $\|x-C_0\| \leq \lambda \sqrt{\frac{(1+ \alpha^{2})}{1+\alpha^{2}-\lambda^{2} \alpha^{2}}}$, for every correct value $x\in H^{(r)}$. Hence, $H^{(r)}\subseteq B(C_0, \lambda \sqrt{\frac{(1+ \alpha^{2})}{1+\alpha^{2}-\lambda^{2} \alpha^{2}}} R_0)$, so the Inflated 
Adaptive \MEB Contraction Algorithm satisfies $\lambda \sqrt{\frac{(1+ \alpha^{2})}{1+\alpha^{2}-\lambda^{2} \alpha^{2}}}$-\MEB validity.
\end{proof}

Unlike in Theorem~\ref{thm:synch-no-inflation}, the admissible range of $\alpha$ is now bounded by $\sqrt{2}$ and both the contraction rate and validity improve as $\alpha$ decreases. The following corollary highlights the resulting validity guarantees for particular choices of $\alpha$.

\begin{corollary}\label{cor:validity-inflated-synch}
For $\alpha=1$ the Inflated Adaptive \MEB Contraction Algorithm has contraction rate
$\frac{\lambda}{\sqrt2} \leq \frac{\sqrt{3}}{2}\approx 0.866$ and satisfies $\sqrt{\frac{2\beta}{\beta-2}}$-\MEB validity. In particular, for $n>3t$ it satisfies $\sqrt{6} \approx 2.45$-\MEB validity.
\end{corollary}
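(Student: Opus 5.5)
The corollary follows by substituting $\alpha=1$ into Lemma~\ref{lem:contraction-inflated} and Theorem~\ref{thm:synch-inflation}, and then using the value of $\lambda$ from Lemma~\ref{lem:inflation-gives-intersection}. First I check that $\alpha=1$ is admissible. Lemma~\ref{lem:contraction-inflated} requires $1\le\alpha<\sqrt{\beta-1}$. Since $n>3t$ gives $\beta=\lfloor (n-1)/t\rfloor\ge 3$, we have $\sqrt{\beta-1}\ge\sqrt2>1$, so $\alpha=1$ lies in the range.

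\emph{Contraction.} With $\alpha=1$ the factor $\frac{\lambda\alpha}{\sqrt{1+\alpha^2}}$ becomes $\lambda/\sqrt2$. The map $\beta\mapsto \beta/(\beta-1)$ is decreasing, so $\lambda=\sqrt{\beta/(\beta-1)}$ is largest at $\beta=3$, where it equals $\sqrt{3/2}$. Hence $\lambda/\sqrt2\le\sqrt{3/2}/\sqrt2=\sqrt3/2$.

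\emph{Validity.} Set $\alpha=1$ in the validity constant $\lambda\sqrt{\frac{1+\alpha^2}{1+\alpha^2-\lambda^2\alpha^2}}$ to get $\lambda\sqrt{\frac{2}{2-\lambda^2}}$. Now substitute $\lambda^2=\beta/(\beta-1)$. This gives $2-\lambda^2=\frac{\beta-2}{\beta-1}$, which is positive because $\beta\ge3$. The square of the constant is then
\[
\frac{\beta}{\beta-1}\cdot\frac{2(\beta-1)}{\beta-2}=\frac{2\beta}{\beta-2},
\]
so the constant equals $\sqrt{2\beta/(\beta-2)}$, as claimed.

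For the specific value at $n>3t$, note that $2\beta/(\beta-2)=2+4/(\beta-2)$ is decreasing in $\beta$. The worst case is therefore $\beta=3$, which gives $\sqrt{6}$. For every larger $\beta$ the constant is at most $\sqrt6$, so $\sqrt6$-\MEB validity holds throughout $n>3t$.

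The only point needing care is this last monotonicity argument: it is what lets the single constant $\sqrt6$ cover all $n>3t$, not just $\beta=3$. Everything else is direct substitution into results already proved.
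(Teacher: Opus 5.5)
Your proposal is correct and follows essentially the same route as the paper, which gives no separate proof and obtains the corollary by substituting $\alpha=1$ into Lemma~\ref{lem:contraction-inflated} and Theorem~\ref{thm:synch-inflation}, simplifying with $\lambda^2=\beta/(\beta-1)$ to get $\sqrt{2\beta/(\beta-2)}$, and using $\beta\ge 3$ to reach $\sqrt{6}$. Your explicit checks (that $\alpha=1$ is admissible, that $2-\lambda^2>0$, and that the constant is monotone in $\beta$) fill in steps the paper leaves implicit.
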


\begin{remark}
    Substituting $\lambda=1$ into the contraction factor $\frac{\lambda\alpha}{\sqrt{1+\alpha^{2}}}$ of Lemma~\ref{lem:contraction-inflated} and into the validity constant $\lambda\sqrt{\frac{1+\alpha^{2}}{1+\alpha^{2}-\lambda^{2}\alpha^{2}}}$ of Theorem~\ref{thm:synch-inflation} recovers $\frac{\alpha}{\sqrt{1+\alpha^{2}}}$ and $\sqrt{1+\alpha^{2}}$, and the admissible range $\alpha<\sqrt{\beta-1}$ becomes unbounded, as in Theorem~\ref{thm:synch-no-inflation} (see Remark~\ref{remark:upperbound-alpha}).
\end{remark}

\section{Asynchronous Approximate Agreement}\label{sec:asynchMAA}
We now turn to the asynchronous setting, where messages can be arbitrarily delayed and processes cannot wait for all correct values before proceeding with the computations. A process cannot distinguish a faulty process that never sends its value from a correct process whose message is delayed, so waiting for more than $n-t$ values risks waiting forever. Consequently, a correct process may proceed without the values of some correct processes.
Hence, each process invokes the Gather protocol~\cite{abraham2021reaching, canetti1993fast}, which returns a local view containing a common-core \gatherr of at least $n-t$ values among all correct processes. However, a process does not know which of its values belong to the common-core. This is a weakness, compared to the synchronous setting, where every correct process received all correct values, so the set $H^{(r)}$ appeared in the candidate subsets and the \MEB-safe area was compared to the correct minimum enclosing ball $\MEB(H^{(r)})$. Under the Gather protocol, we only have a guarantee that there exists a common-core of size $n-t$. However, this common-core might be missing correct values, so it is not possible to compare the candidate subsets to the $\MEB(H^{(r)})$. Instead, we compare the \MEB-safe area to the minimum enclosing ball of the correct values a process received, and then relate that ball to $\MEB(H^{(r)})$. 

First, we adapt the Adaptive \MEB Contraction Algorithm to the asynchronous setting in Section~\ref{sec:asynchalg} and analyze its contraction rate. Then, in Section~\ref{sec:async-resilience-d2t} we show that this algorithm can be used to solve multidimensional approximate agreement in the asynchronous setting with resilience $n>(d+2)t$, and in Section~\ref{sec:async-resilience-4t} with dimension-free resilience in the asynchronous setting, namely $n>4t$.

\subsection{Asynchronous Algorithm}\label{sec:asynchalg}
In this section, we adapt the Adaptive \MEB Contraction to the asynchronous setting. The algorithm is presented in Algorithm~\ref{alg:adaptive-contraction-asynch} and differs from Adaptive \MEB Contraction Algorithm only in its first two lines. 

First, a process cannot wait for all correct values to arrive, since messages can be arbitrarily delayed. Hence, each process invokes the Gather protocol, which returns a local view $P_i^{(r)}$ for process $i$ in round $r$. The Gather protocol guarantees that all correct processes receive a set $\gatherr^{(r)}$, named common-core, consisting of at least $n-t$ values. However, different correct processes may therefore hold different local views, so no process knows which of its values belong to the common-core.

Second, since the local views can differ in size, the candidate subsets are taken relative to the size of the local view. Each process $i$ intersects the minimum enclosing balls of all subsets $T\subseteq P_i^{(r)}$ with $|T|=|P_i^{(r)}|-t$. Note that $P_i^{(r)}$ contains at most $t$ faulty values, so there exists at least one subset $T$ consisting only of correct values. A natural alternative would be to fix the subset size to $|T|=n-2t$, which is the smallest size any local view admits and would make all correct processes use subsets of same size. However, the disadvantage is that a process which received more values would then discard more of them than the $t$ it must and its candidate subsets would omit correspondingly more of the common-core. Since it is the common-core that connects the local views of different processes together, this weakens the guarantee that their \MEB-safe areas intersect and requires a 
stronger resilience assumption than $n>(d+2)t$. We therefore only remove the $t$ values a process cannot trust. 

All remaining steps are unchanged: each process initializes the selected set with the diameter of $\safeMEB_i^{(r)}$, then computes the center $c_i$ and radius $r_i$, and repeatedly adds the farthest point of $\safeMEB_i^{(r)}$ until the stopping criterion is satisfied.

\begin{algorithm}[H]
\caption{Asynchronous Adaptive \MEB Contraction (for process $i$)}
\label{alg:adaptive-contraction-asynch}
\begin{algorithmic}[1]
\Require Input value $m_i^{(r)}$, threshold $1\leq \alpha < \sqrt{2}$ 
\Ensure New input value $m_i^{(r+1)}$
\State Invoke $Gather$ and obtain the local view $P_i^{(r)}$
\State Compute the \MEB-safe area
$\safeMEB_i^{(r)}
=\bigcap_{\substack{T\subseteq P_i^{(r)}\\ |T|=|P_i^{(r)}|-t}}
\MEB(T)$
\State Choose two diameter points:
$a_i,b_i\in
\arg\max_{x,y\in \safeMEB_i^{(r)}}\|x-y\|$
\State Initialize the selected set
$S_i\gets{a_i,b_i}$
\Repeat
\State Compute the minimax center of the selected set:
$
c_i\in
\arg\min_{c\in\mathbb{R}^d}
\max_{s\in S_i}\|s-c\|$
\State Define the corresponding radius: 
$r_i=\max_{s\in S_i}\|s-c_i\|$
\State Find a farthest point from the current center:
$q_i\in
\arg\max_{x\in \safeMEB_i^{(r)}}\|x-c_i\|$
\If{$\|q_i-c_i\|\le \alpha r_i$}
    \State \Return $m_i^{(r+1)}\gets c_i$
\Else
    \State Add this point: $S_i\gets S_i\cup\{q_i\}$
\EndIf
\Until{all points of $\safeMEB_i^{(r)}$ are within radius $\alpha r_i$}
\end{algorithmic}
\end{algorithm}

\subsection{Resilience $n>(d+2)t$}\label{sec:async-resilience-d2t}
In the following, we show that the contraction factor of Asynchronous Adaptive \MEB Contraction Algorithm is $\alpha\sqrt{\frac{2}{2+\alpha^2}}$. Note that the contraction factor does not depend on dimension $d$, only on the parameter $\alpha$, which takes values from 1 to $\sqrt{2}$.

\begin{lemma}\label{lem:asynch-contraction-d+2t}
In the asynchronous setting with $n>(d+2)t$ and $1\le \alpha< \sqrt2$, the contraction rate of the Asynchronous Adaptive \MEB Contraction Algorithm is $R_{r+1} \leq \alpha\sqrt{\frac{2}{2+\alpha^2}} R_r$.
\end{lemma}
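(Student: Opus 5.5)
The plan is to reproduce the quadratic-inequality argument of Lemma~\ref{lem:new-contraction}. The synchronous containment $\safeMEB_i^{(r)}\subseteq B(C_r,R_r)$ is no longer available, so I would insert an intermediate ball: the minimum enclosing ball of the correct values that process $i$ actually received. Let $H_i^{(r)}=H^{(r)}\cap P_i^{(r)}$ and $\MEB(H_i^{(r)})=B(C_r^i,R_r^i)$. The weaker chain of containments then costs a factor $2$ in one weight of the final inequality, and this factor $2$ produces the rate $\alpha\sqrt{2/(2+\alpha^2)}$.

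\textbf{Step 1 (common point).} I would show that all local safe areas share a point $p_r$, via Helly's theorem as in Lemma~\ref{lem:new-contraction}. Take any $d+1$ candidate sets $T_1,\dots,T_{d+1}$, possibly from different processes. Each $T_j\subseteq P_{i_j}^{(r)}$ has size $|P_{i_j}^{(r)}|-t$, so it omits at most $t$ values of the common core $\gatherr^{(r)}$. Since $|\gatherr^{(r)}|\ge n-t>(d+1)t$, some core value lies in all $T_j$. By the agreement property of Gather, it is the same value in each set, so it lies in all $d+1$ balls. Helly's theorem then gives $p_r\in\safeMEB_i^{(r)}$ for every correct $i$.

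\textbf{Step 2 (local containment).} I would show $\safeMEB_i^{(r)}\subseteq B(C_r^i,R_r^i)$. The ball $\MEB(H_i^{(r)})$ is determined by a support set of at most $d+1$ points. Since $P_i^{(r)}$ contains at most $t$ faulty values, $|H_i^{(r)}|\ge |P_i^{(r)}|-t$. Moreover $|P_i^{(r)}|-t\ge n-2t\ge d+1$, so I can choose a candidate $T\subseteq H_i^{(r)}$ of size exactly $|P_i^{(r)}|-t$ that contains this support set. For this $T$ we get $\MEB(T)=\MEB(H_i^{(r)})$.

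\textbf{Step 3 (combining the inequalities).} By the \MEB containment inequality, $S_i\subseteq B(C_r^i,R_r^i)$ gives $\|c_i-C_r^i\|^2+r_i^2\le (R_r^i)^2$. Since $H_i^{(r)}\subseteq B(C_r,R_r)$, the same inequality gives $\|C_r^i-C_r\|^2+(R_r^i)^2\le R_r^2$. The stopping rule yields $r_i^2\ge \|c_i-p_r\|^2/\alpha^2$. Chaining these,
\[
\|c_i-C_r^i\|^2+\|C_r^i-C_r\|^2+\tfrac{1}{\alpha^2}\|c_i-p_r\|^2\le R_r^2 .
\]
Using $\|c_i-C_r\|^2\le 2\|c_i-C_r^i\|^2+2\|C_r^i-C_r\|^2$, this becomes
\[
\tfrac12\|c_i-C_r\|^2+\tfrac{1}{\alpha^2}\|c_i-p_r\|^2\le R_r^2 .
\]
The unknown per-process center $C_r^i$ has now been eliminated.

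\textbf{Step 4 (common ball).} As in~\eqref{eq:contraction-ar}, I would rewrite the left-hand side around the weighted midpoint $a_r=\frac{\frac12 C_r+\frac1{\alpha^2}p_r}{\frac12+\frac1{\alpha^2}}$. This point depends only on $C_r$ and $p_r$, so it is common to all correct processes. Dropping the nonnegative $\|C_r-p_r\|^2$ term gives $\|c_i-a_r\|^2\le R_r^2/(\tfrac12+\tfrac1{\alpha^2})=\frac{2\alpha^2}{2+\alpha^2}R_r^2$. Hence every correct output lies in $B\bigl(a_r,\alpha\sqrt{2/(2+\alpha^2)}\,R_r\bigr)$, which bounds $R_{r+1}$. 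The factor is below $1$ exactly when $\alpha<\sqrt2$.

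I expect Step 2 to be the main obstacle. The step only works because the candidate sets have size exactly $|P_i^{(r)}|-t$ and $n-2t\ge d+1$, and because the inequality $\|c_i-C_r\|^2\le 2(\cdots)$ is what lets the argument avoid ever comparing $C_r^i$ across processes.
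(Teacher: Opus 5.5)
Your proposal is correct and follows the paper's proof essentially step for step. The matching steps are: the Helly argument via the Gather common core, where each candidate omits at most $t$ core values and so leaves $n-(d+2)t>0$ shared senders; the extension of the at most $d+1$ support points of $\MEB(H_i^{(r)})$ to an all-correct candidate of size $|P_i^{(r)}|-t$; the chaining of the two containment inequalities with the stopping inequality, followed by $\|c_i-C_r\|^2\le 2\|c_i-C_r^i\|^2+2\|C_r^i-C_r\|^2$; and the weighted midpoint $a_r$ that removes $\|C_r-p_r\|^2$.
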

\begin{proof}
    Let $H^{(r)}$ denote the values of correct processes in round $r$, and let $\MEB(H^{(r)})=B(C_r,R_r)$
be the minimum enclosing ball around the correct values. Moreover, let $R_{r+1}$ denote the radius of the minimum enclosing ball around the new correct values after all correct processes output their $c_i$.

We first show that all locally computed $\MEB$-safe areas have a common point. Consider the family of all candidate balls computed by correct processes in round $r$:
$$\mathcal{B}_r =\bigcup_{i\in H} \left\{ \MEB(T): T \subseteq P_i^{(r)}, |T|=|P_i^{(r)}|-t \right\}.$$ We show that every $d+1$ balls in $\mathcal{B}_r$ have a nonempty intersection. Take any $d+1$ sets $T_1,\ldots, T_{d+1}$ defining such balls.
For every $j$ there is a correct process $\pi_j$ with $T_j \subseteq P_{\pi_j}^{(r)}$ and $|T_j| = |P_{\pi_j}^{(r)}|-t$. By the Gather protocol, there exists a common-core $\gatherr^{(r)}$ of values from at least $n-t$ senders, with $\gatherr^{(r)} \subseteq P_{\pi_j}^{(r)}$ for every $j$. Since $T_j$ is obtained by removing exactly $t$ values from $P_{\pi_j}^{(r)}$, it can omit at most $t$ values from $\gatherr^{(r)}$, i.e. $|\gatherr^{(r)} \setminus T_j| \leq t$. From the union we obtain: 
$$\Bigl|\gatherr^{(r)}\cap \bigcap_{j=1}^{d+1}T_j\Bigr| \geq |\gatherr^{(r)}|-\sum_{j=1}^{d+1}|\gatherr^{(r)}\setminus T_j|\geq (n-t)-(d+1)t = n-(d+2)t > 0.$$
Hence, there is a sender, whose value is contained in all $T_1,\ldots, T_{d+1}$ and this value lies in all minimum enclosing balls defined on these sets. Every $d+1$ balls therefore intersect and by Helly's theorem~\cite{danzer1963helly} the whole family of candidate balls has a nonempty intersection. Consequently, all locally computed $\MEB$-safe areas intersect, so there exists a point $p_r$ such that $p_r\in \safeMEB_i^{(r)}$ for every correct process $i$.

Unlike in the synchronous case, we cannot directly compare the \MEB-safe area $\safeMEB_i^{(r)}$ to the correct minimum enclosing ball $\MEB(H^{(r)})$ because process $i$ may not receive all correct values. Instead, we compare $\safeMEB_i^{(r)}$ to the \MEB of the correct values received by process $i$. Let $H_i^{(r)}$ be the set of correct values contained in $P_i^{(r)}$ and write $\MEB(H_i^{(r)}) = B(\widehat{C}_{i,r}, \widehat{R}_{i,r})$ centered in $\widehat{C}_{i,r}$ with radius $\widehat{R}_{i,r}$. We now show that $\safeMEB_i^{(r)}\subseteq \MEB(H_i^{(r)})$. At most $t$ values in $P_i^{(r)}$ are Byzantine, so $|H_i^{(r)}| \geq |P_i^{(r)}| -t$. Next, let $D_i$ be the set of points which defines the $\MEB(H_i^{(r)})$. Since at most $d+1$ points can define a ball, then $|D_i| \leq d+1$. From $|P_i^{(r)}| \geq n-t$, we have $|P_i^{(r)}| - t \geq n-2t$ and together with $n>(d+2)t$, this gives $|P_i^{(r)}|-t \geq d+1$. Hence, we can extend $D_i$ to a subset consisting only of correct values $U_i^{(r)} \subseteq H_i^{(r)}$ of size $|P_i^{(r)}|-t$. From $D_i\subseteq U_i^{(r)} \subseteq H_i^{(r)}$, we get $\MEB(U_i^{(r)}) = \MEB(H_i^{(r)})$ and since $U_i^{(r)}$ is one of the candidate subsets computed by process $i$, its minimum enclosing ball appears in the intersection defining $\safeMEB_i^{(r)}$, so $\safeMEB_i^{(r)}\subseteq \MEB(H_i^{(r)})$. 

In Asynchronous Adaptive \MEB Contraction Algorithm, each process $i$ maintains a selected set $S_i$ in $\safeMEB_i^{(r)}$ and computes the point that minimizes the largest distance to the selected points denoted by $c_i$. Since all selected points are within distance $r_i$ from $c_i$, we have $S_i \subseteq B(c_i, r_i)$, and because $S_i \subseteq \safeMEB_i^{(r)} \subseteq \MEB(H_i^{(r)})$, the set $S_i$ is also contained in $\MEB(H_i^{(r)}) = B(\widehat{C}_{i,r}, \widehat{R}_{i,r})$. Since $B(c_i, r_i)$ is the smallest ball which contains $S_i$:
\begin{align}\label{eq:asynch-local}
    \|c_i-\widehat{C}_{i,r}\|^{2}+r_i^{2} \leq \widehat{R}_{i,r}^{2}.
\end{align}
Applying the same argument $H_i^{(r)}\subseteq H^{(r)} \subseteq \MEB(H^{(r)})$ relates the locally computed correct ball $\MEB(H_i^{(r)})$ to the correct $\MEB(H^{(r)})=B(C_r, R_r)$:
\begin{align}\label{eq:asynch-local-to-global}
    \|\widehat{C}_{i,r}-C_r\|^{2}+\widehat{R}_{i,r}^{2} \leq R_r^{2}.
\end{align}
Next, Asynchronous Adaptive \MEB Contraction Algorithm stops only when every point of $\safeMEB_i^{(r)}$ is within distance $\alpha r_i$ from $c_i$. This also holds for the common point $p_r$, so $\| p_r -c_i\|\leq \alpha r_i$ and 
\begin{align} \label{eq:asynch-stopping}
     r_i^{2} \geq \frac{1}{\alpha^{2}}\|c_i-p_r\|^{2}.
\end{align}
Similarly to the proof of Lemma~\ref{lem:new-contraction}, plugging in Inequality~\ref{eq:asynch-stopping} into Inequality~\ref{eq:asynch-local} and eliminating $\widehat{R}_{i,r}^{2}$ with Inequality~\ref{eq:asynch-local-to-global}, we obtain 
\begin{align}\label{eq:three-points}
\|c_i-\widehat{C}_{i,r}\|^{2}+\|\widehat{C}_{i,r}-C_r\|^{2}+\frac{1}{\alpha^{2}} \|c_i-p_r\|^{2} \leq R_r^{2}.
\end{align}
Using the parallelogram law on the first two terms of Inequality~\ref{eq:three-points}, we get 
\begin{align}
    \frac{1}{2}\|c_i-C_r\|^{2}+\frac{1}{\alpha^{2}}\|c_i-p_r\|^{2} \leq R_r^{2}  
\end{align}
which is analogous to the Inequality~\ref{eq:synch-combined} in Lemma~\ref{lem:new-contraction}. Note that the coefficient of $\|c_i-C_r\|^{2}$ dropped from 1 to $1/2$ which is the result of comparing the $\safeMEB_i^{(r)}$ to the local ball instead of directly to $\MEB(H^{(r)})$.

Along the lines of Lemma~\ref{lem:new-contraction}, we define $a_r$ as the weighted midpoint between $C_r$ and $p_r$, i.e. $a_r=(\frac{1}{2}C_r + \frac{1}{\alpha^2}p_r)/(\frac{1}{2} + \frac{1}{\alpha^2})$. Thus, 
\begin{align}\label{eq:asynch-infl-drop-pr}
    \frac{1}{2} \|c_i-C_r\|^2
+\frac{1}{\alpha^2}\|c_i-p_r\|^2 = \left(\frac{1}{2}+\frac{1}{\alpha^2}\right)\|c_i-a_r\|^2
+ \frac{1}{\alpha^2+2}\|C_r-p_r\|^2 \leq R_r^2.
\end{align}
Hence, $\|c_i-a_r\|\leq  \alpha\sqrt{\frac{2}{2+\alpha^{2}}} R_r$. So every correct output $c_i$ lies inside the ball $B(a_r, \alpha\sqrt{\frac{2}{2+\alpha^{2}}} R_r)$, centered in $a_r$ with radius $\alpha\sqrt{\frac{2}{2+\alpha^{2}}} R_r$. Therefore, the minimum enclosing ball of the new correct values cannot have a larger radius than this ball, i.e. $R_{r+1}\leq \alpha\sqrt{\frac{2}{2+\alpha^{2}}} R_r$. This factor is smaller than one exactly when $\alpha < \sqrt{2}$. 
\end{proof}
We showed that the contraction rate of Asynchronous Adaptive \MEB Contraction Algorithm is $\alpha\sqrt{\frac{2}{2+\alpha^{2}}}$ for $1\leq \alpha < \sqrt{2}$. Note that the contraction is dimension-independent. Next, we show that Asynchronous Adaptive \MEB Contraction Algorithm solves multidimensional approximate agreement under \MEB-validity.

\begin{theorem}
\label{thm:async-d2t}
Assume $n>(d+2)t$ and $1\le \alpha<\sqrt2$. Then Asynchronous Adaptive \MEB Contraction Algorithm solves multidimensional approximate agreement after
$O\left(\frac{\log(R_{\max}/\varepsilon)}
{\log\left(\frac{\sqrt{2+\alpha^2}}{\alpha \sqrt{2}}\right)}
\right)$ rounds and satisfies $\sqrt{\frac{2(\alpha^2+2)}{2-\alpha^2}}$-\MEB validity.

\end{theorem}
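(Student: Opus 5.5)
The plan is to follow the template of Theorem~\ref{thm:synch-no-inflation}, using the asynchronous contraction inequality from the proof of Lemma~\ref{lem:asynch-contraction-d+2t} in place of Inequality~\eqref{eq:synch-combined}.

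\emph{Convergence and termination.} Lemma~\ref{lem:asynch-contraction-d+2t} gives $R_{r+1}\le \alpha\sqrt{2/(2+\alpha^2)}\,R_r$, and this factor is strictly below one for $\alpha<\sqrt2$. Iterating and using $R_0\le R_{\max}$, any two correct values are within $2R_r\le\varepsilon$ once
\[
r\ge \log(2R_{\max}/\varepsilon)\big/\log\bigl(\sqrt{2+\alpha^2}/(\alpha\sqrt2)\bigr).
\]
Every process can therefore stop after this predetermined number of rounds. Each round terminates because Gather terminates for $n>(d+2)t$, and the local computation halts just as in the synchronous case.

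\emph{Validity: the nesting inequality.} The goal is to prove the one-round nesting $\|C_{r+1}-C_r\|+\gamma R_{r+1}\le\gamma R_r$ with $\gamma=\sqrt{2(\alpha^2+2)/(2-\alpha^2)}$.
\begin{itemize}
\item From Inequality~\eqref{eq:asynch-infl-drop-pr}, every correct output lies in $B(a_r,\sigma_r)$, where $x=\|C_r-p_r\|$ and
\[
\sigma_r^2=\frac{2\alpha^2}{\alpha^2+2}\Bigl(R_r^2-\frac{x^2}{\alpha^2+2}\Bigr).
\]
\item Since all new correct values lie in $B(a_r,\sigma_r)$, the \MEB containment inequality gives $\|C_{r+1}-a_r\|^2+R_{r+1}^2\le\sigma_r^2$.
\item From the definition of $a_r$ we get $\|a_r-C_r\|=\frac{2}{\alpha^2+2}\,x$.
\item By the triangle inequality, and Cauchy--Schwarz on the pair $(\|C_{r+1}-a_r\|,R_{r+1})$,
\[
\|C_{r+1}-C_r\|+\gamma R_{r+1}\le \tfrac{2}{\alpha^2+2}\,x+\sqrt{1+\gamma^2}\,\sigma_r .
\]
\item The key step is a second Cauchy--Schwarz that eliminates the unknown $x$. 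Write the first term as $\frac{2}{\sqrt{\alpha^2+2}}\cdot\frac{x}{\sqrt{\alpha^2+2}}$ and the second as $\sqrt{\frac{2\alpha^2(1+\gamma^2)}{\alpha^2+2}}\cdot\sqrt{R_r^2-\frac{x^2}{\alpha^2+2}}$. This bounds the right-hand side by
\[
R_r\sqrt{\frac{4+2\alpha^2(1+\gamma^2)}{\alpha^2+2}}.
\]
\item Requiring this to be at most $\gamma R_r$ reduces to $\gamma^2(2-\alpha^2)\ge 2(\alpha^2+2)$. This is solvable exactly because $\alpha<\sqrt2$, and the smallest solution is the claimed $\gamma$.
\end{itemize}

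\emph{From nesting to validity.} Induction on $r$, as in Theorem~\ref{thm:synch-no-inflation}, gives $\|C_r-C_0\|+\gamma R_r\le\gamma R_0$. Since $\gamma\ge1$, every $x\in H^{(r)}$ then satisfies $\|x-C_0\|\le R_r+\|C_r-C_0\|\le\gamma R_0$, which is $\gamma$-\MEB validity.

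The main obstacle I expect is making sure Inequality~\eqref{eq:asynch-infl-drop-pr} holds uniformly with the same common point $p_r$ and the same global center $C_r$ for all correct processes, even though each process only compares against its local ball $\MEB(H_i^{(r)})$. This is already established in the proof of Lemma~\ref{lem:asynch-contraction-d+2t}, so I would simply invoke it. Beyond that, the remaining difficulty is just choosing the right weights in the second Cauchy--Schwarz step.
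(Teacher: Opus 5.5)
Your proposal is correct and follows the paper's proof essentially step for step. Convergence comes from Lemma~\ref{lem:asynch-contraction-d+2t}, and validity comes from the nesting inequality, obtained via $B(a_r,\sigma_r)$, the \MEB containment inequality, the triangle inequality and Cauchy--Schwarz, and then the induction of Theorem~\ref{thm:synch-no-inflation}. Your explicit second Cauchy--Schwarz, which eliminates $\|C_r-p_r\|$ and gives $R_r\sqrt{(4+2\alpha^2(1+\gamma^2))/(\alpha^2+2)}$, is exactly the step the paper leaves implicit when it reduces to $\sqrt{\frac{2}{\alpha^2}+1+\gamma^2}\sqrt{\frac{2\alpha^2}{\alpha^2+2}}\le\gamma$.
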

\begin{proof}
We first show the convergence. In Lemma~\ref{lem:asynch-contraction-d+2t}, we showed that the radius of the minimum enclosing ball shrinks $R_{r+1}\leq \alpha\sqrt{\frac{2}{2+\alpha^{2}}} R_r$, with $1\leq \alpha < \sqrt{2}$. The
argument of Theorem~\ref{thm:synch-no-inflation} applies verbatim with contraction factor $\alpha\sqrt{\frac{2}{2+\alpha^{2}}}$ and gives convergence after $\left\lceil \frac{\log(2R_{\max}/\varepsilon)}{\log\left(\frac{\sqrt{2+ \alpha^2}}{\alpha \sqrt{2}}\right)} \right\rceil$ rounds. This proves convergence and termination. 

Next, we show that Asynchronous Adaptive \MEB Contraction Algorithm satisfies $\sqrt{\frac{2(\alpha^2+2)}{2-\alpha^2}}$-\MEB validity.
As in Theorem~\ref{thm:synch-no-inflation}, we bound the drift
of the center and the contraction of the radius simultaneously and establish
\begin{equation}\label{eq:asynch-center-drift}
  \|C_{r+1}-C_r\|+\gamma R_{r+1}\leq \gamma R_r
\end{equation}
for a suitable $\gamma \geq 1$. Once this holds, the induction of Theorem~\ref{thm:synch-no-inflation} gives $H^{(r)}\subseteq B(C_0,\gamma R_0)$ for every round $r$. 

From Lemma~\ref{lem:asynch-contraction-d+2t} we have $ \frac{1}{2}\|c_i-C_r\|^{2}+\frac{1}{\alpha^{2}}\|c_i-p_r\|^{2} \leq R_r^{2} $ for every output $c_i$ and with the point $a_r$ as defined there, every correct output lies inside the ball $B(a_r, \sigma_r)$, where $ \sigma_r^{2}
  =\frac{2\alpha^{2}}{\alpha^{2}+2}
   \left(R_r^{2}-\frac{\|C_r-p_r\|^{2}}{\alpha^{2}+2}\right)$. 
Hence, all new correct inputs lie inside the ball $H^{(r+1)}\subseteq B(a_r,\sigma_r)$ and the \MEB containment inequality gives $\|C_{r+1}-a_r\|^{2}+ R_{r+1}^{2}\leq \sigma_r^{2}$.
We now bound the validity factor $\gamma$. By the triangle inequality, $\|C_{r+1}-C_r\|+\gamma R_{r+1} \leq \|a_r-C_r\|+\|C_{r+1}-a_r\|+\gamma R_{r+1}$. Using the definition of $a_r$ for the first term on the right side, using Cauchy-Schwarz and definition of $\sigma_r$ for the second and third term gives:
\begin{align}\label{eq:asynch-validity}
    \|C_{r+1}-C_r\|+\gamma R_{r+1} \leq 
    \frac{2\|p_r-C_r\|}{\alpha^2+ 2} + \sqrt{1+\gamma^2} \sqrt{\frac{2\alpha^2}{\alpha^2+2}R_r^2 -
    \frac{2\alpha^2}{(\alpha^2+2)^2}\|p_r-C_r\|^2}.
\end{align}
Hence, it is enough to choose a $\gamma$ such that $\sqrt{\frac{2}{\alpha^2}+1+\gamma^2} \sqrt{\frac{2\alpha^2}{\alpha^2 +2}} \leq \gamma$. Solving this inequality gives the smallest possible choice of $\gamma$, that is $\gamma = \sqrt{\frac{2(\alpha^2+2)}{2 -\alpha^2}}$. Plugging $\gamma$ into Inequality~\eqref{eq:asynch-center-drift} and performing the same analysis as in the proof of Theorem~\ref{thm:synch-no-inflation} we get $H^{(r)}\subseteq B(C_0,\gamma R_0)$. Thus, Asynchronous Adaptive \MEB Contraction Algorithm satisfies $\sqrt{\frac{2(\alpha^2+2)}{2 -\alpha^2}}$-\MEB validity.

\end{proof}

\begin{corollary}\label{cor:asynch}
    For $\alpha=1$, the contraction factor is $\sqrt{\frac{2}{3}}$ and the algorithm satisfies $\sqrt{6}$-\MEB validity.
\end{corollary}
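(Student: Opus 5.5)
We obtain the corollary by substituting $\alpha=1$ into the two general formulas already proved. For the contraction factor, Lemma~\ref{lem:asynch-contraction-d+2t} gives $R_{r+1}\le \alpha\sqrt{\frac{2}{2+\alpha^2}}\,R_r$ for every $1\le\alpha<\sqrt2$. The value $\alpha=1$ lies in this admissible range, so we only need to evaluate the factor there. This gives $1\cdot\sqrt{2/3}=\sqrt{2/3}\approx 0.816$.

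For validity, Theorem~\ref{thm:async-d2t} gives $\gamma=\sqrt{\frac{2(\alpha^2+2)}{2-\alpha^2}}$-\MEB validity. At $\alpha=1$ this becomes $\sqrt{\frac{2\cdot 3}{1}}=\sqrt6\approx 2.45$. The corollary inherits the hypotheses of Theorem~\ref{thm:async-d2t}, in particular $n>(d+2)t$.

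As a sanity check, we verify that $\gamma=\sqrt6$ really satisfies the defining inequality used in the proof of Theorem~\ref{thm:async-d2t}, namely
\[
\sqrt{\tfrac{2}{\alpha^2}+1+\gamma^2}\,\sqrt{\tfrac{2\alpha^2}{\alpha^2+2}}\le\gamma .
\]
At $\alpha=1$ the left side is $\sqrt{3+\gamma^2}\sqrt{2/3}$. The inequality therefore reduces to $\tfrac23(3+\gamma^2)\le\gamma^2$, which is equivalent to $\gamma^2\ge 6$, so equality holds exactly at $\gamma=\sqrt6$. This confirms both the formula and its evaluation.

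The only point needing care is checking that $\alpha=1$ satisfies $1\le\alpha<\sqrt2$ strictly. This is what guarantees that the contraction factor is below one and that the denominator $2-\alpha^2$ is positive. There is no real obstacle here; the proof is a direct substitution.
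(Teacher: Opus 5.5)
Your proposal is correct and follows the paper's route: the corollary is a direct substitution of $\alpha=1$, which lies in the admissible range $1\le\alpha<\sqrt2$, into the contraction factor $\alpha\sqrt{2/(2+\alpha^2)}$ of Lemma~\ref{lem:asynch-contraction-d+2t} and the validity constant $\sqrt{2(\alpha^2+2)/(2-\alpha^2)}$ of Theorem~\ref{thm:async-d2t}. Your extra check that $\gamma=\sqrt6$ satisfies $\tfrac23(3+\gamma^2)\le\gamma^2$ agrees with the paper's derivation of $\gamma$, but it is not needed.
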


We next consider the resilience case $n>4t$, which removes the dependence on the dimension $d$. As in the synchronous setting, we inflate the candidate balls by a factor $\lambda>1$ to obtain a common intersection point. The counting argument above shows that under Gather protocol every $l$ candidate balls intersect when $n>(l+1)t$, so for $n>4t$ every three balls intersect. Therefore, we apply Theorem~\ref{thm:beta-wise-ball-inflation} with $\beta=3$. We also enlarge the family of candidate subsets, as in Section~\ref{sec:resilience3t}, since the containment of the locally computed safe area \safeMEB in the correct \MEB cannot be obtained by extending a defining set. 

\subsection{Multidimensional Approximate Agreement with Resilience $n>4t$}\label{sec:async-resilience-4t}
In order to get the intersection of the candidate balls, we must first modify Asynchronous Adaptive \MEB Contraction Algorithm and inflate the candidate balls' radius by a factor $\lambda>1$. The modification is similar to the modification done for Adaptive \MEB Contraction Algorithm in the synchronous setting.

\paragraph{Inflated Adaptive \MEB Contraction}
We use the same Asynchronous Adaptive \MEB Contraction as shown in Algorithm~\ref{alg:adaptive-contraction-asynch}, except that we replace Line~2 by the inflated $\MEB$-safe area
$$\safeMEB_{i,\lambda}^{(r)} = \bigcap_{\substack{T\subseteq P_i^{(r)}\\ |T|\geq |P_i^{(r)}|-t}}
B(C_T,\lambda R_T)$$
where $B(C_T,\lambda R_T)$ denotes the smallest enclosing ball around subset $T$ with its radius increased by factor $\lambda$. 
Two changes are made at once. First, the candidate $\MEB$ balls are inflated, which provides a common intersection point when $n>4t$. Second, the candidate subsets are of size at least $|P_i^{(r)}|-t$, instead of exactly $|P_i^{(r)}|-t$. This allows us to relate the inflated $\MEB$-safe area to the $\MEB$ of the correct values received by process $i$. Indeed, if $H_i^{(r)}$ denotes the set of correct values contained in $P_i^{(r)}$, then $|H_i^{(r)}|\ge |P_i^{(r)}|-t$. Hence, $H_i^{(r)}$ is itself a candidate subset and its inflated ball appears in the intersection defining the safe area $\safeMEB_{i,\lambda}^{(r)}$.

All subsequent steps remain the same: each process initializes the selected set with the diameter pair of $\safeMEB_{i,\lambda}^{(r)}$ and computes $c_i$ and $r_i$. Then it repeatedly adds the farthest point of $\safeMEB_{i,\lambda}^{(r)}$ until the stopping criterion is satisfied.

Note that the inflation restricts the admissible range of $\alpha$. As we show below, the contraction requires $\lambda\alpha
\sqrt{\frac{1+\lambda^2}{1+\lambda^2+\alpha^2\lambda^2}} <1$, which is $\alpha<\frac{\sqrt{1+\lambda^2}}{\lambda^2}$. For the guaranteed inflation factor $\lambda=\sqrt{3/2}$, the bound on $\alpha$ becomes $1\le \alpha<\frac{\sqrt{10}}{3}$.

Next, we show that increasing the radius of candidate balls provides a common intersection point. 

\begin{lemma}\label{lem:asynch-inflation-common-point}
    Assume the asynchronous setting using the Gather protocol and $n>4t$. Let $\beta= \left\lfloor\frac{n-t-1}{t}\right\rfloor$ and
    $\lambda= \sqrt{\frac{\beta}{\beta-1}}$. Then, in every round $r$, there exists a common point $p_r$ such that $p_r\in  \safeMEB_{i,\lambda}^{(r)} $ for every correct process $i$. In particular, $n>4t$ implies $\beta\geq 3$, and hence $\lambda\leq \sqrt{3/2}$.
\end{lemma}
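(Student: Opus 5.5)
The plan follows the proof of Lemma~\ref{lem:inflation-gives-intersection}, with the consistent-broadcast counting replaced by the common-core counting from Lemma~\ref{lem:asynch-contraction-d+2t}. Fix a round $r$ and let
$$\mathcal{B}_r=\bigcup_{i\in H}\bigl\{\MEB(T): T\subseteq P_i^{(r)},\ |T|\ge |P_i^{(r)}|-t\bigr\}$$
be the family of all candidate balls computed by correct processes. The goal is to show that every subfamily of at most $\beta$ balls of $\mathcal{B}_r$ has a common point. Theorem~\ref{thm:beta-wise-ball-inflation} then gives a point $p_r$ lying in every inflated ball $B(C_T,\lambda R_T)$ with $\MEB(T)\in\mathcal{B}_r$. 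Since each $\safeMEB_{i,\lambda}^{(r)}$ is the intersection of a subfamily of these inflated balls, $p_r$ lies in every local inflated safe area.

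For the $\beta$-wise intersection, take $\MEB(T_1),\dots,\MEB(T_\ell)$ with $\ell\le\beta$. Each $T_j$ belongs to some correct process $\pi_j$, so $T_j\subseteq P_{\pi_j}^{(r)}$ and $|T_j|\ge |P_{\pi_j}^{(r)}|-t$. By the common-core property, $\gatherr^{(r)}\subseteq P_{\pi_j}^{(r)}$. Since $T_j$ removes at most $t$ values from $P_{\pi_j}^{(r)}$, we get $|\gatherr^{(r)}\setminus T_j|\le t$. A union bound then gives
$$\Bigl|\gatherr^{(r)}\cap\bigcap_{j=1}^{\ell}T_j\Bigr|\ \ge\ (n-t)-\ell t\ \ge\ (n-t)-\beta t\ \ge\ 1,$$
where the last step uses $\beta=\lfloor (n-t-1)/t\rfloor$, so that $\beta t\le n-t-1$.

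The common sender found this way has the same value in every local view by the Agreement property of Gather. So the same point lies in each $T_j$, and hence in each $\MEB(T_j)$. This sender-versus-value identification is the one point needing care: the sets $T_j$ come from different local views, and we must ensure they refer to identical values. Gather's Agreement property (and Validity, for correct senders) is exactly what guarantees this.

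Before applying Theorem~\ref{thm:beta-wise-ball-inflation}, two technical points need checking. First, $\mathcal{B}_r$ is finite, since there are finitely many processes and subsets. Second, the theorem needs $\beta\ge2$, which follows from the final claim below.

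The final claim is the arithmetic $n>4t \Rightarrow n-t-1\ge 3t \Rightarrow \beta\ge3$; here $t\ge1$ is assumed so that $\beta$ is defined, and the case $t=0$ is trivial since no inflation is needed. Since $\beta\mapsto\sqrt{\beta/(\beta-1)}$ is decreasing, $\lambda\le\sqrt{3/2}$. I do not expect a real obstacle. The main subtlety is the common-core counting, which yields $n-t-\ell t$ rather than the synchronous $n-\ell t$, and this is precisely why $\beta$ is defined with $n-t-1$ in place of $n-1$.
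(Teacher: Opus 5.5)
Your proposal is correct and follows the paper's proof essentially step for step: the same family $\mathcal{B}_r$ of candidate balls, the same common-core counting $|\gatherr^{(r)}\cap\bigcap_{j} T_j|\ge (n-t)-\ell t\ge 1$, and then Theorem~\ref{thm:beta-wise-ball-inflation}. The only difference is that the paper invokes the theorem with $\beta=3$, giving $\lambda=\sqrt{3/2}$, whereas you apply it with the general $\beta=\lfloor (n-t-1)/t\rfloor$; your version yields the stated $\lambda=\sqrt{\beta/(\beta-1)}$ directly, including when $\beta>3$.
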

\begin{proof} 
Fix a round $r$ and let $\mathcal{B}_r=\bigcup_{i\in H}\bigl\{\MEB(T): T\subseteq P_i^{(r)},\ |T|\geq |P_i^{(r)}|-t \bigr\}$ denote the family of all candidate \MEBS computed by any correct process in round $r$.
We first show that every subfamily of at most $\beta$ balls from $\mathcal{B}_r$ has a non-empty intersection. Take any $\ell \leq \beta$ candidate sets $T_1,\dots,T_\ell$, where for each $j\in\{1,\dots, \ell\}$ there is a correct process $\pi_j$ such that $T_j\subseteq P_{\pi_j}^{(r)}$ and $|T_j|\geq |P_{\pi_j}^{(r)}| -t$. By the Gather protocol, there exists a common-core $\gatherr^{(r)}$ of values from at least $n-t$ senders, with $\gatherr^{(r)}\subseteq P_{\pi_j}^{(r)}$ for every $j$. Since $T_j$ excludes at most $t$ values from $P_{\pi_j}^{(r)}$, it also omits at most $t$ values from $\gatherr^{(r)}$, that is $|\gatherr^{(r)} \setminus T_j|\leq t$. Therefore, 
$$\left| \gatherr^{(r)} \cap \bigcap_{j=1}^{\ell}T_j
\right| \geq |\gatherr^{(r)}|- \sum_{j=1}^{\ell} |\gatherr^{(r)}\setminus T_j| \geq (n-t)-\ell t \geq 1.$$
Thus, there is a process whose value is contained in all sets $T_1,\dots,T_\ell$ and this value lies in each of the balls $\MEB(T_1),\dots,\MEB(T_\ell)$. Therefore, every $\beta$ balls from $\mathcal{B}_r$ intersect and for $n>4t$ we can apply Theorem~\ref{thm:beta-wise-ball-inflation} with $\beta=3$. Inflating every ball by a factor $\lambda= \sqrt{3/2}$ gives a common intersection point $p_r$ with $p_r\in B(C_T,\lambda R_T)$ for every $\MEB(T) \in \mathcal{B}_r$. Since each $\safeMEB_{i,\lambda}^{(r)}$ is an intersection of a subfamily of these inflated balls, we get $p_r\in \safeMEB_{i,\lambda}^{(r)}$ for every correct process $i$.

\end{proof}

We proved that the \MEBS inflated by $\lambda \leq \sqrt{\frac{\beta}{\beta-1}}$ have a common intersection point. Next, we show that the Inflated Asynchronous Adaptive \MEB Contraction Algorithm for inflated candidate balls provides a contraction rate $\lambda\alpha \sqrt{\frac{1+ \lambda^2}{1+\lambda^2+\alpha^2\lambda^2}}$ and solves multidimensional approximate agreement.

\begin{lemma}\label{lem:asynch-contraction-4t}
In the asynchronous setting with $n>4t$, the contraction rate of the Inflated Asynchronous Adaptive \MEB Contraction Algorithm is $R_{r+1} \leq \lambda\alpha \sqrt{\frac{1+ \lambda^2}{1+\lambda^2+\alpha^2\lambda^2}} R_r$ for $1 \leq \alpha < \frac{\sqrt{1+\lambda^2}}{\lambda^2}$. In particular, $1\le \alpha<\frac{\sqrt{10}}{3}$ for $n>4t$.
\end{lemma}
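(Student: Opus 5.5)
The plan is to follow the proof of Lemma~\ref{lem:asynch-contraction-d+2t}, with two changes. The local containment inequality picks up a factor $\lambda$, and the parallelogram step is replaced by a weighted version, because the two squared distances no longer carry equal weights.

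First, Lemma~\ref{lem:asynch-inflation-common-point} gives a common point $p_r\in\safeMEB_{i,\lambda}^{(r)}$ for every correct $i$. As before, let $H_i^{(r)}$ be the correct values in $P_i^{(r)}$, with $\MEB(H_i^{(r)})=B(\widehat C_{i,r},\widehat R_{i,r})$. Since $|H_i^{(r)}|\ge|P_i^{(r)}|-t$, the set $H_i^{(r)}$ is itself a candidate subset. Hence $\safeMEB_{i,\lambda}^{(r)}\subseteq B(\widehat C_{i,r},\lambda\widehat R_{i,r})$.

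Applying the \MEB containment inequality to $S_i\subseteq\safeMEB_{i,\lambda}^{(r)}$ gives
$\|c_i-\widehat C_{i,r}\|^2+r_i^2\le\lambda^2\widehat R_{i,r}^2$.
Inequality~\eqref{eq:asynch-local-to-global} is unchanged: $\|\widehat C_{i,r}-C_r\|^2+\widehat R_{i,r}^2\le R_r^2$. The stopping criterion applied to $p_r$ gives $r_i^2\ge\frac1{\alpha^2}\|c_i-p_r\|^2$. Multiplying the second inequality by $\lambda^2$ and combining all three yields
\begin{align*}
\|c_i-\widehat C_{i,r}\|^2+\lambda^2\|\widehat C_{i,r}-C_r\|^2+\frac{1}{\alpha^2}\|c_i-p_r\|^2\le\lambda^2R_r^2 .
\end{align*}

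The key step is eliminating the unknown local center $\widehat C_{i,r}$. In place of the parallelogram law, I would use the weighted identity $\|x-y\|^2+\lambda^2\|y-z\|^2\ge\frac{\lambda^2}{1+\lambda^2}\|x-z\|^2$. It follows by minimizing over $y$: the minimizer is $y=(x+\lambda^2z)/(1+\lambda^2)$. For $\lambda=1$ it reduces to the factor $1/2$ used earlier. Setting $k=\frac{\lambda^2}{1+\lambda^2}$, this gives
\begin{align*}
k\|c_i-C_r\|^2+\frac1{\alpha^2}\|c_i-p_r\|^2\le\lambda^2R_r^2 .
\end{align*}

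Next, let $a_r=(kC_r+\frac1{\alpha^2}p_r)/(k+\frac1{\alpha^2})$, which depends only on $C_r$ and $p_r$. Expanding as in \eqref{eq:asynch-infl-drop-pr} and dropping the nonnegative $\|C_r-p_r\|^2$ term gives
$\|c_i-a_r\|^2\le\lambda^2R_r^2/(k+\frac1{\alpha^2})$.
Now $k+\frac1{\alpha^2}=\frac{1+\lambda^2+\alpha^2\lambda^2}{\alpha^2(1+\lambda^2)}$, so every correct output lies in $B\bigl(a_r,\lambda\alpha\sqrt{\frac{1+\lambda^2}{1+\lambda^2+\alpha^2\lambda^2}}R_r\bigr)$. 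Hence $R_{r+1}$ is at most this radius.

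Finally, the factor is below $1$ if and only if $\lambda^2\alpha^2(1+\lambda^2)<1+\lambda^2+\alpha^2\lambda^2$. This is equivalent to $\alpha^2\lambda^4<1+\lambda^2$, that is, $\alpha<\sqrt{1+\lambda^2}/\lambda^2$. For $\lambda^2=3/2$ this bound equals $\sqrt{5/2}\cdot\frac23=\sqrt{10}/3$.

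The main obstacle is the weighted elimination of $\widehat C_{i,r}$. One must check that the minimization over $y$ is exact and yields precisely the coefficient $\lambda^2/(1+\lambda^2)$; the rest is the algebra of Lemma~\ref{lem:new-contraction}.
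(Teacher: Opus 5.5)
Your proposal is correct and follows the paper's proof step for step. It uses the same three inequalities, the same combined bound with the $\lambda^2$-weighted middle term, and the same point $a_r$ with weight $\frac{\lambda^2}{1+\lambda^2}$ on $C_r$. The only difference is that you state and verify the weighted elimination $\|x-y\|^2+\lambda^2\|y-z\|^2\ge\frac{\lambda^2}{1+\lambda^2}\|x-z\|^2$ explicitly, whereas the paper uses it implicitly when it ``completes the square'' to obtain that coefficient.
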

\begin{proof}
    Let $\MEB(H^{(r)})=B(C_r,R_r)$ be the minimum enclosing ball around correct values in round $r$. By Lemma~\ref{lem:asynch-inflation-common-point} there exists a point $p_r$ with $p_r\in \safeMEB_{i,\lambda}^{(r)}$ for every correct process $i$. 

    We first relate the inflated safe area to the $\MEB$ of the correct values received by process $i$. Fix a correct process $i$ and let $H_i^{(r)}$ be the set of correct values contained in the local view $P_i^{(r)}$ and write $\MEB(H_i^{(r)})=B(\widehat{C}_{i,r},\widehat{R}_{i,r})$. At most $t$ values in the local view are Byzantine, so $|H_i^{(r)}|\geq |P_i^{(r)}| -t$. Next, the Inflated Asynchronous Adaptive \MEB Contraction Algorithm considers all subsets of size at least $|P_i^{(r)}|-t$, and therefore $H_i^{(r)}$ is one of the candidate subsets. Hence, its inflated ball appears in the intersection defining $\safeMEB_{i,\lambda}^{(r)}$ and we get $\safeMEB_{i,\lambda}^{(r)} \subseteq B(\widehat{C}_{i,r},\lambda \widehat{R}_{i,r})$. Using the Inequality~\eqref{eq:asynch-local-to-global} from Lemma~\ref{lem:asynch-contraction-d+2t}, the local correct $\MEB(H_i^{(r)})$ relates to the correct $\MEB(H^{(r)})$ with 
    $$\|\widehat{C}_{i,r} -C_r\|^2 + \widehat{R}_{i,r}^2 \leq R_r^2.$$
The remaining steps follow from Lemma~\ref{lem:asynch-contraction-d+2t}. The selected set from the Inflated Asynchronous Adaptive \MEB Contraction Algorithm satisfies $S_i\subseteq \safeMEB_{i,\lambda}^{(r)} \subseteq B(\widehat{C}_{i,r},\lambda \widehat{R}_{i,r})$ and since $B(c_i, r_i)$ is its minimum enclosing ball, gives $$\|c_i-\widehat C_{i,r}\|^{2} + r_i^{2} \leq \lambda^{2} \widehat{R}_{i,r}^{2}.$$ 
The stopping criterion also holds for the common point $p_r$, hence $r_i^{2}\geq \frac{1}{\alpha^{2}} \|c_i-p_r\|^{2}$. Combining these inequalities, we get
\begin{align}
    \|c_i-\widehat C_{i,r}\|^{2} +\lambda^{2} \|\widehat C_{i,r}-C_r\|^{2} +\frac{1}{\alpha^{2}}\|c_i-p_r\|^{2} \leq \lambda^{2} R_r^{2}.
\end{align}
In contrast to Inequality~\eqref{eq:three-points}, the middle term is multiplied by the inflation factor. Thus, we perform the same step and define $a_r = ({\frac{\lambda^2}{1+ \lambda^2} C_r+ \frac{1}{\alpha^2}p_r})/(\frac{\lambda^2}{1+  \lambda^2}+\frac{1}{\alpha^2})$. Completing the square in Inequality~\eqref{eq:asynch-infl-drop-pr} and dropping the non-negative term containing $\|C_r-p_r\|^2$ gives:
$$
\|c_i-a_r\| \leq \lambda\alpha \sqrt{\frac{1+ \lambda^2}{1+\lambda^2 + \alpha^2 \lambda^2}}R_r.$$
So, every correct output $c_i$ lies inside the ball $B(a_r, \lambda\alpha \sqrt{\frac{1+ \lambda^2}{1+\lambda^2 + \alpha^2 \lambda^2}}R_r)$, centered in $a_r$ with radius $\lambda\alpha \sqrt{\frac{1+ \lambda^2}{1+\lambda^2 + \alpha^2 \lambda^2}}R_r$. Therefore, the minimum enclosing ball of the new correct values cannot have a larger radius, i.e. $R_{r+1}\leq \lambda\alpha \sqrt{\frac{1+ \lambda^2}{1+\lambda^2 + \alpha^2 \lambda^2}}R_r$. This factor is smaller than one exactly when $\alpha<\frac{\sqrt{1 +\lambda^2}}{\lambda^2}$.
\end{proof}

We showed that the contraction rate of the Inflated Asynchronous Adaptive \MEB Contraction Algorithm is $\lambda\alpha \sqrt{\frac{1+ \lambda^2}{1+\lambda^2 + \alpha^2 \lambda^2}}$ for $1\leq \alpha < \frac{\sqrt{1 +\lambda^2}}{\lambda^2}$. Next, we show that the Inflated Asynchronous Adaptive \MEB Contraction Algorithm solves multidimensional approximate agreement under \MEB-validity.

\begin{theorem}\label{thm:asynch-inflation}
    Assume asynchronous setting with $n>4t$, and let $\beta= \left\lfloor\frac{n-t-1}{t}\right\rfloor$, $\lambda= \sqrt{\frac{\beta}{\beta-1}}$ and $1\leq \alpha < \frac{\sqrt{1 +\lambda^2}}{\lambda^2}$. Then, the Inflated Asynchronous Adaptive \MEB Contraction Algorithm solves multidimensional approximate agreement after $O\left( \frac{\log(R_{\max}/ \varepsilon)}{\log\left(\frac{\sqrt{1+ \lambda^2 + \alpha^2 \lambda^2}}{\lambda \alpha \sqrt{1+\lambda^2}}\right)}\right)$ rounds and satisfies $\sqrt{\frac{(1 + \lambda^2)(1 +\lambda^2 +\alpha^2 \lambda^2)}{1+ \lambda^2- \alpha^2 \lambda^4}}$-\MEB validity. 
\end{theorem}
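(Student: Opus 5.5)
The plan follows the proof of Theorem~\ref{thm:async-d2t}, with the inflated inequality from Lemma~\ref{lem:asynch-contraction-4t} in place of the non-inflated one.

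\emph{Convergence and termination.} Lemma~\ref{lem:asynch-contraction-4t} gives $R_{r+1}\le \kappa R_r$ with $\kappa=\lambda\alpha\sqrt{\frac{1+\lambda^2}{1+\lambda^2+\alpha^2\lambda^2}}<1$ on the admissible range of $\alpha$. So $R_r\le \kappa^r R_{\max}$. Every two correct values are within $2R_r$, hence after $\lceil \log(2R_{\max}/\varepsilon)/\log(1/\kappa)\rceil$ rounds they are within $\varepsilon$, which is the stated round complexity. Since $R_{\max}$ is known, processes stop after this fixed number of rounds.

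\emph{Validity: reduction to a one-step inequality.} As in Theorem~\ref{thm:synch-no-inflation}, it suffices to prove
\[
\|C_{r+1}-C_r\|+\gamma R_{r+1}\le \gamma R_r
\]
for the claimed $\gamma$. The same induction then gives $\|C_r-C_0\|+R_r\le\gamma R_0$, and thus $H^{(r)}\subseteq B(C_0,\gamma R_0)$.

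\emph{The quadratic inequality.} First I eliminate $\widehat C_{i,r}$ from the three-term inequality in the proof of Lemma~\ref{lem:asynch-contraction-4t}. By the weighted parallelogram law,
\[
\|x-y\|^2+\lambda^2\|y-z\|^2\ge \tfrac{\lambda^2}{1+\lambda^2}\|x-z\|^2 .
\]
Set $w=\lambda^2/(1+\lambda^2)$ and $W=w+1/\alpha^2$. This yields
\[
w\|c_i-C_r\|^2+\tfrac1{\alpha^2}\|c_i-p_r\|^2\le\lambda^2R_r^2 .
\]
Completing the square around $a_r=(wC_r+\alpha^{-2}p_r)/W$ and writing $D=\|C_r-p_r\|$, every correct output lies in $B(a_r,\sigma_r)$ with
\[
\sigma_r^2=\frac{\lambda^2R_r^2-\frac{w}{\alpha^2W}D^2}{W}.
\]
Also $\|a_r-C_r\|=\frac{D}{\alpha^2W}$. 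The \MEB containment inequality then gives $\|C_{r+1}-a_r\|^2+R_{r+1}^2\le\sigma_r^2$.

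\emph{Bounding the one-step quantity.} By the triangle inequality and Cauchy--Schwarz on the pair $(\|C_{r+1}-a_r\|,R_{r+1})$,
\[
\|C_{r+1}-C_r\|+\gamma R_{r+1}\le \frac{D}{\alpha^2W}+\sqrt{\frac{1+\gamma^2}{W}}\sqrt{\lambda^2R_r^2-\frac{w}{\alpha^2W}D^2}.
\]
The main obstacle is that $D$ is uncontrolled, since $p_r$ is adversarially placed. I handle it by maximizing the right-hand side over $D\ge 0$ with the elementary bound
\[
aD+B\sqrt{c-eD^2}\le\sqrt{c\,(a^2/e+B^2)}.
\]
Here $a^2/e=1/(\alpha^2Ww)$, so the bound becomes
\[
\lambda R_r\sqrt{\frac{1/(\alpha^2w)+1+\gamma^2}{W}}.
\]
\emph{Choosing $\gamma$.} We need this to be at most $\gamma R_r$, i.e.\ $\gamma^2(W-\lambda^2)\ge\lambda^2\bigl(1+\frac1{\alpha^2w}\bigr)$. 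This requires $W>\lambda^2$, which should be exactly the admissibility condition $\alpha<\sqrt{1+\lambda^2}/\lambda^2$ after multiplying through by $\alpha^2(1+\lambda^2)$. Substituting $w$ and $W$ and simplifying should reproduce
\[
\gamma=\sqrt{\frac{(1+\lambda^2)(1+\lambda^2+\alpha^2\lambda^2)}{1+\lambda^2-\alpha^2\lambda^4}}.
\]
I would check this simplification and the requirement $\gamma\ge1$ carefully, as the remaining risk. Finally, plugging in $\lambda^2=3/2$ and $\alpha=1$ should give $2\sqrt{10}$, matching Corollary~\ref{cor:async-4t-alpha-one}.
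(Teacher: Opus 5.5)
Your proposal is correct and follows the paper's proof essentially step for step: the triangle inequality through $a_r$, Cauchy--Schwarz on the pair $(\|C_{r+1}-a_r\|,R_{r+1})$, and then choosing $\gamma$. Your explicit bound $aD+B\sqrt{c-eD^2}\le\sqrt{c(a^2/e+B^2)}$ eliminating $D=\|C_r-p_r\|$ spells out the step the paper compresses into ``it is enough to choose $\gamma$ such that \dots''. The remaining algebra also checks out: $W>\lambda^2$ is exactly the condition $\alpha<\sqrt{1+\lambda^2}/\lambda^2$, the formula for $\gamma$ is as stated, $\gamma\ge 1$ holds, and the value $2\sqrt{10}$ is recovered at $\lambda^2=3/2$, $\alpha=1$.
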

\begin{proof}
    We first show the convergence. In Lemma~\ref{lem:asynch-contraction-4t}, we showed that the radius of the minimum enclosing ball shrinks  $R_{r+1}\leq \lambda\alpha \sqrt{\frac{1+ \lambda^2}{1+\lambda^2 + \alpha^2 \lambda^2}}R_r$ with $1\leq \alpha < \frac{\sqrt{1 +\lambda^2}}{\lambda^2}$. The argument of Theorem~\ref{thm:synch-no-inflation} applies verbatim with contraction factor $\lambda\alpha \sqrt{\frac{1+ \lambda^2}{1+\lambda^2 + \alpha^2 \lambda^2}}$ and gives convergence after $\left\lceil \frac{\log(2R_{\max} / \varepsilon)}
{\log\left(\frac{\sqrt{1+\lambda^2+\alpha^2\lambda^2}}{\lambda \alpha \sqrt{1 +\lambda^2}}\right)} \right \rceil$ rounds. This proves convergence and termination.

Next, we show that the Inflated Asynchronous Adaptive \MEB Contraction Algorithm satisfies $\sqrt{\frac{(1 + \lambda^2)(1 +\lambda^2 +\alpha^2 \lambda^2)}{1+ \lambda^2- \alpha^2 \lambda^4}}$-\MEB validity.
As in Theorem~\ref{thm:synch-no-inflation}, we bound the drift of the center and the contraction of the radius simultaneously and establish
\begin{equation}\label{eq:asynch-center-drift-inflated}
  \|C_{r+1}-C_r\|+\gamma R_{r+1}\leq \gamma R_r
\end{equation}
for a suitable $\gamma \geq 1$. Once this holds, the induction of Theorem~\ref{thm:synch-no-inflation} gives $H^{(r)}\subseteq B(C_0,\gamma R_0)$ for every round $r$. From the proof of Lemma~\ref{lem:asynch-contraction-4t}, every correct output lies in $B(a_r,\sigma_r)$, with 
$\sigma_r^2 = \lambda^2 \alpha^2 \frac{1 +\lambda^2}{1+\lambda^2+\alpha^2\lambda^2}R_r^2 - \frac{\alpha^2 \lambda^2(1+\lambda^2)}{(1 +\lambda^2+ \alpha^2 \lambda^2)^2} \|C_r- p_r\|^2$.
Hence, all new correct values lie inside the ball $H^{(r+1)}\subseteq B(a_r,\sigma_r)$, and the $\MEB$ containment inequality gives $\|C_{r+1}-a_r\|^{2}+R_{r+1}^{2}\leq \sigma_r^{2}$. 
We now bound the validity factor $\gamma$. By the triangle inequality, we get $\|C_{r+1}-C_r\| + \gamma R_{r+1} \leq \|a_r - C_r\| + \|C_{r+1}-a_r \| + \gamma R_{r+1}$. Using the definition of $a_r$ for the first term on the right side, using Cauchy-Schwarz and definition of $\sigma_r$ for the second and third term gives:
\begin{align}
    \|C_{r+1}-C_r\| +\gamma R_{r+1} 
    &\leq \frac{(1+\lambda^{2}) \|p_r -C_r\|}{1 +\lambda^{2}  
    + \alpha^{2}\lambda^{2}}\\
    &+\sqrt{1+\gamma^{2}} \sqrt{\lambda^{2} \alpha^{2} \frac{1+\lambda^{2}}{1+\lambda^{2} +\alpha^{2} \lambda^{2}} R_r^{2}-\frac{\alpha^{2} \lambda^{2} (1+\lambda^{2})}{(1 + \lambda^{2} + \alpha^{2} \lambda^{2})^{2}} \|p_r-C_r\|^{2}}.
\end{align}
Hence, it is enough to choose a value of $\gamma$ such that $\sqrt{\frac{1+\lambda^{2}}{\alpha^{2} \lambda^{2}} +1 +\gamma^{2}} \sqrt{\lambda^{2} \alpha^{2} \frac{1 +\lambda^{2}}{1 + \lambda^{2} + \alpha^{2} \lambda^{2}}} \leq \gamma$. After solving this inequality, the smallest possible choice of $\gamma$ is $\gamma=\sqrt{\frac{(1+\lambda^2)(1+\lambda^2 +\alpha^2 \lambda^2)}{1+\lambda^2 -\alpha^2 \lambda^4}}$. Note that the denominator is positive because $\alpha < \frac{\sqrt{1 +\lambda^2}}{\lambda^2}$. Next we plug in $\gamma$ into Inequality~\ref{eq:asynch-center-drift-inflated} and perform the same analysis as in the proof of Theorem~\ref{thm:synch-no-inflation}. 
We get $H^{(r)} \subseteq B(C_0, \gamma R_0)$. Thus, the Inflated Asynchronous Adaptive \MEB Contraction Algorithm satisfies $\sqrt{\frac{(1+\lambda^2)(1+\lambda^2 +\alpha^2 \lambda^2)}{1+\lambda^2 -\alpha^2 \lambda^4}}$-\MEB validity.

\end{proof}

\begin{corollary}
\label{cor:async-4t-alpha-one}
For $n>4t$, we have $\beta \geq 3$ hence $\lambda \leq \sqrt{3/2}$, so the admissible range for $\alpha$ is $1\leq \alpha < \frac{\sqrt{10}}{3} \approx 1.054$. 
For $\alpha=1$, the Inflated Asynchronous Adaptive \MEB Contraction Algorithm  has contraction factor $\frac{\sqrt{15}}{4} \approx 0.968$ and satisfies $2\sqrt{10} \approx 6.32$-\MEB validity.
\end{corollary}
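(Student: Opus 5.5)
The corollary follows by direct substitution into Lemma~\ref{lem:asynch-contraction-4t} and Theorem~\ref{thm:asynch-inflation}, plus a short monotonicity argument covering $\beta>3$.

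\emph{Bound on $\beta$ and $\lambda$.} By Lemma~\ref{lem:asynch-inflation-common-point}, $n>4t$ gives $\beta=\lfloor (n-t-1)/t\rfloor\ge 3$. Since $\beta\mapsto\beta/(\beta-1)$ is decreasing, $\lambda=\sqrt{\beta/(\beta-1)}\le\sqrt{3/2}$.

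\emph{Admissible range of $\alpha$.} Write $x=\lambda^2\in(1,3/2]$. The upper limit on $\alpha$ from Lemma~\ref{lem:asynch-contraction-4t} is $\sqrt{1+x}/x$. This is decreasing in $x$, so the worst case is $x=3/2$, where it equals
\[
\sqrt{5/2}\,/\,(3/2)=\sqrt{10}/3\approx 1.054 .
\]
Hence every $1\le\alpha<\sqrt{10}/3$ is admissible for all $\beta\ge 3$. In particular $\alpha=1$ is allowed.

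\emph{Contraction factor at $\alpha=1$.} The factor from Lemma~\ref{lem:asynch-contraction-4t} becomes
\[
\lambda\sqrt{\frac{1+\lambda^2}{1+2\lambda^2}}=\sqrt{g(x)},\qquad g(x)=\frac{x(1+x)}{1+2x}.
\]
Its derivative has numerator $(1+2x)^2-2x(1+x)=2x^2+2x+1>0$, so $g$ is increasing. The worst case is therefore $x=3/2$, giving
\[
g(3/2)=\frac{(3/2)(5/2)}{4}=\frac{15}{16},
\]
so the contraction factor is at most $\sqrt{15}/4\approx 0.968$.

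\emph{Validity constant at $\alpha=1$.} The constant from Theorem~\ref{thm:asynch-inflation} is $\sqrt{h(x)}$ with
\[
h(x)=\frac{(1+x)(1+2x)}{1+x-x^2}.
\]
The denominator is positive on $(1,3/2]$ because $1+x-x^2>0$ for $x<(1+\sqrt5)/2$. On $x\ge 1$ the numerator is positive and increasing, while the denominator is positive and decreasing (its derivative $1-2x$ is negative there). So $h$ is increasing on $(1,3/2]$, and the worst case is again $x=3/2$:
\[
h(3/2)=\frac{(5/2)\cdot 4}{5/2-9/4}=\frac{10}{1/4}=40 .
\]
This gives $\sqrt{40}=2\sqrt{10}\approx 6.32$-\MEB validity.

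The only step needing care is the monotonicity in $\lambda$. It lets the $\beta=3$ values serve as uniform bounds for every $n>4t$, including the case $\beta>3$ where $\lambda$ is strictly smaller.
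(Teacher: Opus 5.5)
Your proposal is correct and matches the paper's implicit argument: substitute $\alpha=1$ and the worst case $\lambda^2=3/2$ into Lemma~\ref{lem:asynch-contraction-4t} and Theorem~\ref{thm:asynch-inflation}, and your values $\sqrt{10}/3$, $15/16$ and $40$ all check out. Your monotonicity argument in $x=\lambda^2$ is a small step the paper leaves unstated, and it correctly justifies treating the $\beta=3$ values as uniform bounds for every $n>4t$.
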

The range of admissible range for $\alpha$ is for $n>4t$ narrow, so $\alpha$ can be chosen only slightly above one. As fraction $\frac{n}{t}$ grows, $\beta$ increases and $\lambda$ approaches one, so the admissible range broadens to $1\leq \alpha <\sqrt{2}$. 

\begin{remark}
    Substituting $\lambda=1$ into the contraction factor of Lemma~\ref{lem:asynch-contraction-4t} and into the validity constant of Theorem~\ref{thm:asynch-inflation} recovers $\alpha\sqrt{\frac{2}{2 +\alpha^{2}}}$ and $\sqrt{\frac{2(\alpha^{2}+ 2)}{2- \alpha^{2}}}$. The bound $\alpha< \frac{\sqrt{1+ \lambda^{2}}}{\lambda^{2}}$ becomes $\alpha< \sqrt2$, as in Lemma~\ref{lem:asynch-contraction-d+2t} and Theorem~\ref{thm:async-d2t}. In contrast to the synchronous case, the admissible range for $\alpha$ remains bounded even at $\lambda=1$. This restriction $\alpha< \sqrt2$ is imposed by the Gather protocol rather than by ball inflation.
\end{remark}

\section{Discussion and Future Work}\label{sec:discussion}

In this section, we first show that the contraction analysis of the Adaptive \MEB Contraction Algorithm is tight. For this, we give an example in which the contraction factor $1/\sqrt{2}$ for $\alpha=1$ is achieved exactly. We then turn to Minimum-Diameter Averaging (MDA), another multidimensional approximate agreement algorithm with dimension-free contraction and strong resilience guarantees. Since MDA is known not to satisfy convex nor box validity, we derive the \MEB-validity guarantees implied by its known diameter contraction bounds. This allows us to compare MDA with our algorithms under a common validity notion. Finally, we discuss the computational aspects of our approach.

\paragraph{Tightness of the contraction analysis.} 
The factor $\frac{1}{\sqrt{2}}$ in Lemma~\ref{lem:new-contraction} is tight for the geometric analysis of Adaptive \MEB Contraction with $\alpha=1$. 

Consider one round with $n=7$, $t=2$, and $d=2$. Let the five correct values be $H^{(r)}=\{(-1,0),(0,0),(0,0),(0,0),(1,0)\}$.
Thus, $\MEB(H^{(r)})=B((0,0),1)$, so $C_r=(0,0)$ and radius $R_r=1$. Let the two Byzantine values be $b^+=(1,1)$ and $b^-=(-1,-1)$.
Suppose that correct processes with values $(0,0),(1,0)$ have local view $P^+=H^{(r)}\cup\{b^+\}$, 
while correct process with value $(-1,0)$ has local view
$P^-=H^{(r)}\cup\{b^-\}$.
Since $n-t=5$, the local $\MEB$-safe area is the intersection of the \MEBS of all subsets of size $n-t=5$ of the corresponding local view.

We first consider correct processes with local view $P^+$. The distinct candidate balls are $B_0=B((0,0),1)$
coming from the subset containing only correct values, then $B_1=B\left(\left(\frac12,\frac12\right),\frac{1}{\sqrt{2}}\right)$,
coming from the subset $\{(0,0),(1,0),(1,1)\}$ and $B_2=B\left(\left(0,\frac12\right),\frac{\sqrt{5}}{2}\right)$
coming from subsets that contain $\{(-1,0), (0,0), (1,1)\}$. Hence $\safeMEB^+ = B_0\cap B_1\cap B_2$.
The three points $(0,0)$, $(1,0)$, and $(0,1)$ all belong to $\safeMEB^+$. Moreover, $\safeMEB^+\subseteq B_1$, and these three points have minimum enclosing ball exactly $B_1$. Therefore, $\MEB(\safeMEB^+) = B\left(\left(\frac12,\frac12\right),\frac{1}{\sqrt{2}}\right)$.
Thus, for $\alpha=1$, Adaptive \MEB Contraction Algorithm outputs $c^+=\left(\frac12,\frac12\right)$.

By symmetry, a process with local view $P^-$ computes a safe area whose minimum enclosing ball is $B\left(\left(-\frac12,-\frac12\right),\frac{1}{\sqrt{2}}\right)$,
and outputs $c^-=\left(-\frac12,-\frac12\right)$.
Hence the new correct values contain $c^+$ and $c^-$, whose distance is $\|c^+-c^-\|=\sqrt{2}$.
Consequently, the new correct $\MEB$ has radius $R_{r+1}=\frac{\sqrt{2}}{2}=\frac{1}{\sqrt{2}}R_r$. 
Thus, the contraction factor $1/\sqrt{2}$ is achieved. The example is illustrated in Figure~\ref{fig:tight-example-contraction}.

\DeclareRobustCommand{\correctmarker}{%
\tikz[baseline=-0.55ex]\node[circle,fill=black,inner sep=1.6pt]{};%
}
\DeclareRobustCommand{\byzmarker}{%
\tikz[baseline=-0.55ex]\node[rectangle,fill=red!70,draw=red!70,minimum size=5pt,inner sep=0pt]{};%
}
\DeclareRobustCommand{\supportmarker}{%
\tikz[baseline=-0.55ex]\node[circle,draw=blue!70,fill=blue!20,inner sep=1.5pt]{};%
}
\DeclareRobustCommand{\outputmarker}{%
\tikz[baseline=-0.55ex]\node[circle,fill=blue!75,inner sep=1.8pt]{};%
}

\begin{figure}[t]
\centering
\begin{tikzpicture}[scale=1.15,>=latex]

\tikzset{
    correct/.style={circle,fill=black,inner sep=1.6pt},
    byz/.style={rectangle,fill=red!70,draw=red!70,minimum size=5pt,inner sep=0pt},
    safe/.style={fill=blue!15,draw=blue!60,thick},
    cand/.style={draw=gray!70,dashed},
    output/.style={circle,fill=blue!75,inner sep=1.8pt},
    support/.style={circle,draw=blue!70,fill=blue!20,inner sep=1.5pt},
    lab/.style={font=\scriptsize}
}

% =========================
% Right local view
% =========================
\begin{scope}[xshift=-2.9cm]
    \node at (0,2.25) {\textbf{Right local view}};

    % axes
    \draw[->,gray!60] (-1.55,0) -- (1.75,0);
    \draw[->,gray!60] (0,-0.35) -- (0,1.7);

    % safe area: intersection of the three candidate balls
    \begin{scope}
        \clip (0,0) circle[radius=1];
        \clip (0.5,0.5) circle[radius={sqrt(0.5)}];
        \filldraw[safe] (0,0.5) circle[radius={sqrt(5)/2}];
    \end{scope}

    % candidate balls
    \draw[cand] (0,0) circle[radius=1];
    \draw[cand] (0.5,0.5) circle[radius={sqrt(0.5)}];
    \draw[cand] (0,0.5) circle[radius={sqrt(5)/2}];

    % points
    \node[correct,label={[lab,xshift=-5pt,yshift=-1pt]below right:{$(0,0)$}}] at (0,0) {};
    \node[correct,label={[lab]below right:{$(1,0)$}}] at (1,0) {};
    \node[correct,label={[lab]below:{$(-1,0)$}}] at (-1,0) {};
    \node[byz,label={[lab]above right:{$(1,1)$}}] at (1,1) {};

    % boundary/support point of the safe area
    \node[support,label={[lab]above left:{$(0,1)$}}] at (0,1) {};

    % output
    \node[output,label={[lab]right:{$c_R=(\frac12,\frac12)$}}] at (0.5,0.5) {};
\end{scope}

% =========================
% Left local view
% =========================
\begin{scope}[xshift=2.9cm]
    \node at (0,2.25) {\textbf{Left local view}};

    % axes
    \draw[->,gray!60] (-1.75,0) -- (1.55,0);
    \draw[->,gray!60] (0,-1.7) -- (0,0.35);

    % safe area: symmetric intersection of the three candidate balls
    \begin{scope}
        \clip (0,0) circle[radius=1];
        \clip (-0.5,-0.5) circle[radius={sqrt(0.5)}];
        \filldraw[safe] (0,-0.5) circle[radius={sqrt(5)/2}];
    \end{scope}

    % candidate balls
    \draw[cand] (0,0) circle[radius=1];
    \draw[cand] (-0.5,-0.5) circle[radius={sqrt(0.5)}];
    \draw[cand] (0,-0.5) circle[radius={sqrt(5)/2}];

    % points
    \node[correct,label={[lab]above right:{$(0,0)$}}] at (0,0) {};
    \node[correct,label={[lab]below left:{$(-1,0)$}}] at (-1,0) {};
    \node[correct,label={[lab]below:{$(1,0)$}}] at (1,0) {};
    \node[byz,label={[lab]below left:{$(-1,-1)$}}] at (-1,-1) {};

    % boundary/support point of the safe area
    \node[support,label={[lab]below right:{$(0,-1)$}}] at (0,-1) {};

    % output
    \node[output,label={[lab,yshift=-5pt]left:{$c_L=(-\frac12,-\frac12)$}}] at (-0.5,-0.5) {};
\end{scope}

\end{tikzpicture}
\caption{Tight example showing that the $\alpha=1$ contraction analysis achieves $\frac{1}{\sqrt{2}}$ contraction factor. The right and left local $\MEB$-safe areas are symmetric and produce outputs $c_R=(\frac12,\frac12)$ and $c_L=(-\frac12,-\frac12)$. Hence $\|c_R-c_L\|=\sqrt{2}$, so the new correct $\MEB$ has radius $R_{r+1}=1/\sqrt{2}=R_r/\sqrt{2}$. Correct values are shown as black points \correctmarker, while Byzantine values are shown as red squares \byzmarker. Candidate $\MEB$s are shown with dashed lines and the $\MEB$-safe area is highlighted in blue. The points $(0,1)$ and $(0,-1)$ shown as \supportmarker $ $  are not input values; they are illustrated only to make the boundary of the $\MEB$-safe areas visible.}
\label{fig:tight-example-contraction}
\end{figure}

\paragraph{Comparison with Minimum-Diameter Averaging.}
Another well-known approximate agreement algorithm is Minimum-Diameter Averaging (MDA)~\cite{10.5555/3540261.3542179}. MDA is particularly relevant in our setting because, like our algorithms, it provides a dimension-free contraction. However, MDA is analysed in terms of diameter rather than radius contraction and it is known to satisfy strong validity but neither box nor convex validity~\cite{10.1007/978-3-032-11127-2_10}. In this section, we therefore derive the 
\MEB-validity guarantee that follows from MDA's diameter contraction and compare it to our algorithm.

\begin{definition}[MDA]
    Let $P=\{m_1,\dots,m_n\}\subset \mathbb{R}^d$ and let $t$ be the maximal number of Byzantine processes. Define the diameter of a set $M\subseteq P$ as $D_M = \max_{m_i,m_j\in M} \|m_i-m_j\|$.
An MDA output is obtained as follows: choose a subset
$M \in \arg\min_{\substack{M\subseteq P \\ |M|=n-t}} D_M$, and output $ MDA = \frac{1}{|M|} \sum_{m_i\in M} m_i$.
\end{definition}

MDA selects a minimum-diameter subset of received values and outputs their average. 
Cambus and Melnyk~\cite{10.1007/978-3-032-11127-2_10} show that in the synchronous case, MDA tolerates up to $n>4t$ Byzantine processes and has contraction rate $D_{r+1}\le \frac{2}{3}D_r$, where $D_r$ denotes the diameter of correct processes in round $r$. 
In the asynchronous case, authors in \cite{10.5555/3540261.3542179}, show that MDA has $n>7t$ resilience and contraction factor $D_{r+1}\le \frac{4}{5}D_r$. 
There, a process does not apply MDA to all $n$ values. Instead, it applies MDA to its local view $P_i^{(r)}$, where $q=|P_i^{(r)}|$, and selects a minimum-diameter subset of size $q-t=|P_i^{(r)}|-t$.

\begin{lemma}\label{lem:mda-validity}
    In the synchronous setting with $n>4t$, MDA satisfies $7$-\MEB validity. In the asynchronous setting of El-Mhamdi et al.~\cite{10.5555/3540261.3542179}, where process $i$ applies MDA to its local view $P_i^{(r)}$ and selects a subset of size $|P_i^{(r)}|-t$, MDA satisfies $11$-\MEB validity under the contraction bound $D_{r+1}\leq \frac{4}{5} D_r$.
\end{lemma}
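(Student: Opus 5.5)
The plan is a telescoping drift argument. It uses only two facts: the known diameter contraction of MDA, and a one-round locality property of the MDA output. I do not aim for a nested-balls inequality as in Theorem~\ref{thm:synch-no-inflation}. Instead I sum the per-round movements as a geometric series, which is the ``naive'' approach mentioned after Corollary~\ref{cor:synch-alpha-one}. The constants $7$ and $11$ are exactly what this crude argument yields.

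\emph{Step 1: locality of one MDA step.} Let $D_r$ be the diameter of $H^{(r)}$. I claim every new correct value $m_i^{(r+1)}$ lies within distance $D_r$ of some value in $H^{(r)}$.
\begin{itemize}
\item Synchronous case. $H^{(r)}\subseteq P_i^{(r)}$ and $|H^{(r)}|\ge n-t$, so $P_i^{(r)}$ contains a size-$(n-t)$ subset of correct values of diameter at most $D_r$. The minimizer $M$ therefore satisfies $D_M\le D_r$.
\item Asynchronous case. $P_i^{(r)}$ has $q=|P_i^{(r)}|$ entries, at most $t$ of them Byzantine. Hence it contains at least $q-t$ correct values, which gives a correct candidate subset of size $q-t$ and diameter at most $D_r$. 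Again $D_M\le D_r$.
\end{itemize}
In both cases $M$ contains a correct value $y$. In the synchronous case $M$ has at least $n-2t>0$ correct members; in the asynchronous case at least $q-2t\ge n-3t>0$. The average of $M$ lies in $\operatorname{conv}(M)$, and $\operatorname{conv}(M)\subseteq B(y,D_M)$ because that ball is convex and contains $M$. Thus $\|m_i^{(r+1)}-y\|\le D_r$.

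\emph{Step 2: induction on the distance to $C_0$.} Let $\rho_r=\max_{x\in H^{(r)}}\|x-C_0\|$. Step 1 and the triangle inequality give $\rho_{r+1}\le\rho_r+D_r$. Since $\rho_0\le R_0$, this yields
\[
\rho_r\le R_0+\sum_{k<r}D_k .
\]

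\emph{Step 3: summing the series.} The contraction bound gives $D_k\le q^kD_0$, with $q=2/3$ in the synchronous case (from~\cite{10.1007/978-3-032-11127-2_10}, for $n>4t$) and $q=4/5$ in the asynchronous case (from~\cite{10.5555/3540261.3542179}). Also $D_0\le 2R_0$, since $H^{(0)}\subseteq B(C_0,R_0)$. Hence
\[
\rho_r\le R_0+\frac{2R_0}{1-q},
\]
which equals $(1+6)R_0=7R_0$ for $q=2/3$ and $(1+10)R_0=11R_0$ for $q=4/5$. This is $7$-\MEB and $11$-\MEB validity respectively.

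The only step needing care is Step 1 in the asynchronous setting. There one must check that, even though process $i$ sees only part of $H^{(r)}$ and discards exactly $t$ values, the minimizing subset still has diameter at most $D_r$ and still contains a correct point. Both follow from the counting given in Step 1. The rest is routine.
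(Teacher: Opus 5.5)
Your proposal is correct and follows the paper's proof essentially step for step. The minimizing subset has diameter at most $D_r$ because a correct candidate subset of the required size exists. It contains a correct point, so the average lies within $D_r$ of it, giving the drift recursion $L_{r+1}\le L_r+D_r$. Summing this as a geometric series with $D_0\le 2R_0$ yields $1+\frac{2}{1-q}$, which is $7$ or $11$. Your explicit count of correct members of $M$ ($n-2t$, respectively $q-2t\ge n-3t$) is a slightly more careful version of the paper's remark that $|M|>t$.
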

\begin{proof}
    Let $D_r$ denote the diameter of the correct values in round $r$, let $\MEB(H^{(0)})=B(C_0, R_0)$, and let $L_r=\max_{x\in H^{(r)}}\|x-C_0\|$ be the smallest radius with $H^{(r)} \subseteq B(C_0,L_r)$, so that $L_0\leq R_0$. In other words, $L_r$ measures how far the current correct values in round $r$ have drifted from the correct \MEB's center $C_0$.
    Let $q_{\mathrm{MDA}}<1$ denote the contraction factor of MDA, i.e., $D_{r+1}\le q_{\mathrm{MDA}}D_r$.
In the synchronous setting, $q_{\mathrm{MDA}}=\frac{2}{3}$~\cite{10.1007/978-3-032-11127-2_10}, whereas in the asynchronous setting~\cite{10.5555/3540261.3542179}, $q_{\mathrm{MDA}}=\frac{4}{5}$.
    
    Fix a round $r$ and consider the output of a correct process $i$. Let $P_i^{(r)}$ be the set of values used by process $i$ in this round, and let $M$ be the subset selected by MDA. In the synchronous setting, $P_i^{(r)}$ contains all $n$ values and MDA selects a subset of size $n-t$. In the asynchronous setting of El-Mhamdi et al.~\cite{10.5555/3540261.3542179}, process $i$ applies MDA to its local view $P_i^{(r)}$ and selects a subset of size $|P_i^{(r)}| -t$. Let $D_M$ be the diameter of the subset $M$. 
    
    In both settings, obtaining a local subset containing only correct values is feasible.
    In the synchronous model, due to consistent broadcast, it is guaranteed that all correct processes receive all correct values. Hence, a correct process locally computes a subset of size $n-t$ consisting of correct values only. In the asynchronous setting, at most $t$ values in $P_i^{(r)}$ are Byzantine, so $P_i^{(r)}$ contains at least $|P_i^{(r)}|-t$ correct values. So obtaining a subset of correct values of size $|P_i^{(r)}|-t$ is feasible. Since every subset containing only correct values has diameter at most $D_r$ and the set $M$ has minimum diameter, we have $D_M \leq  D_r$.
    Moreover, $M$ contains at least one correct value, since the selected subset has size larger than $t$ and at most $t$ values are Byzantine.

    Consider one such correct value and name it $h$. Since $D_M$ is the diameter of $M$, every value in $M$ lies within distance at most $D_M$ from $h$. So, the subset $M$ lies inside the ball centered at $h$ with radius $D_M$, i.e. $M\subseteq B(h,D_M)$. As $B(h,D_M)$ is convex and the MDA output is the average of the values in $M$, the output lies in $B(h,D_M)$ as well, hence at distance at most $D_r$ from $h$. 
    
    Since $h\in H^{(r)} \subseteq B(C_0,L_r)$, the triangle inequality implies that every correct value in round $r+1$ lies in $B(C_0,L_r+D_r)$. Hence, $L_{r+1}\leq L_r+D_r$. Iterating over rounds $0, \dots,r-1$ and using the contraction $D_k\leq q_{\mathrm{MDA}}^k D_0$ we obtain:
    $$L_r \leq L_0+ \sum_{k=0}^{r-1}D_k \leq R_0+D_0 \sum_{k=0}^{\infty}q_{\mathrm{MDA}}^k = R_0+\frac{D_0}{1-q_{\mathrm{MDA}}}. $$
    Since $D_0\leq 2R_0$, we obtain $L_r\leq \left(1+\frac{2}{1-q_{\mathrm{MDA}}}\right)R_0$. 
    
    For synchronous MDA, $q_{\mathrm{MDA}}=\frac{2}{3}$, so MDA satisfies 7-\MEB validity. For asynchronous MDA, $q_{\mathrm{MDA}}=\frac{4}{5}$, so MDA satisfies 11-\MEB validity.
\end{proof}

The validity guarantee is obtained by summing over drift in each round via geometric series. Our bounds instead control the drift of the center and the contraction of the radius simultaneously, as discussed after Theorem~\ref{thm:synch-no-inflation}, and are correspondingly smaller. For the synchronous setting, the bound for Adaptive \MEB Contraction Algorithm is $\sqrt{6}$-\MEB validity with resilience $n>3t$, whereas MDA has resilience $n>4t$ and satisfies 7-\MEB validity. In the asynchronous setting, we achieve $2\sqrt{10}$-\MEB validity with resilience $n>4t$. In contrast, MDA satisfies 11-\MEB validity and has resilience $n>7t$.

\noindent\textbf{Computational Aspects and Future Work.} 
As is standard in distributed computing and agreement algorithms analysis, our results in Theorems~\ref{thm:synch-no-inflation}, \ref{thm:synch-inflation}, \ref{thm:async-d2t} and~\ref{thm:asynch-inflation} are focused on round complexity rather than on the local computation performed in each round. The number of communication rounds follows directly from the contraction factor and is logarithmic in the ratio between the initial radius and the target distance $\varepsilon$. The local computation in each round can, however, be expensive. 
Every algorithm of this type, such as Mendes--Herlihy and Vaidya--Garg~\cite{mendes2015multidimensional}, pays for computing a safe area and the cost is dominated by the $\binom{n}{t}$ candidate subsets. In our algorithms the safe area region is an intersection of minimum enclosing balls.

Our Adaptive \MEB Contraction algorithm is inspired by the core-set construction of B\u{a}doiu and Clarkson~\cite{badoiu2003smaller,badoiu2008optimal}. In each iteration, the algorithm computes the minimax center of the current selected set $S_i$ and then queries the $\MEB$-safe area for a point farthest from this center. The difference from the standard core-set setting is that this farthest point is not chosen from a finite input set, but from the continuous region $\safeMEB_i^{(r)}$. Thus, an exact implementation requires optimization over the safe area. The parameter $\alpha$ controls how often this query is needed. By Lemma~\ref{lem:midextremes}, for $\alpha=\sqrt3$ the initial diameter pair already satisfies the stopping criterion, so no additional queries for the farthest point is required. Smaller values of $\alpha$ improve the contraction factor, however they may require more iterations. Thus, $\alpha$ gives a trade off between local computation and round complexity.

The inflated variants of the Adaptive \MEB Contraction algorithm have an additional computational overhead. To obtain the containment of the \MEBS needed for the analysis, we intersect over all subsets of size at least $n-t$ in the synchronous case, and at least $|P_i^{(r)}|-t$ in the asynchronous case. This enlarges the family of candidate balls and can make a direct implementation substantially more expensive. 
The question whether the intersection of the minimum enclosing balls over all
subsets of size exactly $n-t$ consisting only of correct values is already contained
in $\MEB(H^{(r)})$ remains open.
Our goal in this work was to achieve better resilience bounds and we have not optimized the resulting computation. Designing efficient implementations and approximation routines for the inflated $\MEB$-safe areas remains an important direction for future work.

\section*{Acknowledgments}
Research supported by the German Research Foundation (DFG), Schwerpunktprogramm SPP 2378: Resilience in Connected Worlds: Mastering Failures, Overload, Attacks, and the Unexpected, ReNO-2 (511099228), 2025-2029.

\section*{AI Disclosure}
We used ChatGPT and Claude (Anthropic) to assist with the written presentation and clarity of the paper. All technical results, definitions, algorithms and proofs originate from the authors, who verified the correctness and originality of all content including references.

\bibliography{literature}
\end{document}